\newcommand{\access}{\mbox{\sf access}}
\newcommand{\OPT}{\mbox{\sf OPT}}
\newcommand{\SF}{\mbox{\sf SF}}
\newcommand{\SO}{\mbox{\sf SO}}
\newcommand{\F}{\mbox{\sf F}}
\newcommand{\LF}{\mbox{\sf LF}}
\newcommand{\DF}{\mbox{\sf DF}}
\newcommand{\WS}{\mbox{\sf WS}}
\newcommand{\UB}{\mbox{\sf UB}}
\newcommand{\spn}{\mbox{\sf span}}
\newcommand{\cost}{\mbox{\sf cost}}
\newcommand{\opt}{\mbox{\sf OPT}}
\newcommand{\bset}{{\mathcal{B}}}
\newcommand{\sset}{{\mathcal{S}}}
\newcommand{\mset}{{\mathcal{M}}}
\newcommand{\tset}{{\mathcal T}}
\newcommand{\ceil}[1]{\ensuremath{\left\lceil#1\right\rceil}}
\newcommand{\abs}[1]{\lvert #1\rvert}
\global\long\def\kmin{{\sf UB}^{ {\ell}}}
\global\long\def\kmintree{\ell\text{-DistTree}}
\newcommand{\set}[1]{\left\{ #1 \right\}}
\newcommand{\N}{\ensuremath{\mathbb N}}
\newcommand{\ST}{\ensuremath{\mathcal S}}
\title{Multi-finger binary search trees}
\author[1]{Parinya Chalermsook}
\author[2]{Mayank Goswami}
\author[3]{L\'{a}szl\'{o} Kozma}
\author[4]{Kurt Mehlhorn}
\author[5]{Thatchaphol Saranurak}
\affil[1]{Aalto University, Finland.
  \texttt{parinya.chalermsook@aalto.fi}}
\affil[2]{Queens College, City University of New York.
  \texttt{mayank.goswami@qc.cuny.edu}}
  \affil[3]{TU Eindhoven, Netherlands. \texttt{lkozma@gmail.com}}
  \affil[4]{MPI f\"ur Informatik, Saarbr\"ucken, Germany.
  \texttt{mehlhorn@mpi-inf.mpg.de}}
  \affil[5]{KTH Royal Institute of Technology, Sweden.
  \texttt{thasar@kth.se}}
\authorrunning{P.\ Chalermsook, M.\ Goswami, L.\ Kozma, K.\ Mehlhorn and T.\ Saranurak} 
\subjclass{F.2.2 Nonnumerical Algorithms, E.1 Data Structures}
\keywords{binary search trees, dynamic optimality, finger search, k-server}
\begin{document}

\maketitle

\begin{abstract}
We study multi-finger binary search trees (BSTs), a far-reaching extension of the classical BST model, with connections to the well-studied $k$-server problem. Finger search is a popular technique for speeding up BST operations when a query sequence has \emph{locality of reference}. BSTs with \emph{multiple} fingers can exploit more general regularities in the input. In this paper we consider the cost of serving a sequence of queries in an optimal (offline) BST with $k$ fingers, a powerful benchmark against which other algorithms can be measured.

We show that the $k$-finger optimum can be matched by a standard dynamic BST (having a single root-finger) with an $O(\log{k})$ factor overhead. This result is tight for all $k$, improving the $O(k)$ factor implicit in earlier work. Furthermore, we describe new \emph{online} BSTs that match this bound up to a $(\log{k})^{O(1)}$ factor. Previously only the ``one-finger'' special case was known to hold for an online BST (Iacono, Langerman, 2016; Cole et al., 2000). Splay trees, assuming their conjectured optimality (Sleator and Tarjan, 1983), would have to match our bounds for all $k$.

Our online algorithms are randomized and combine techniques developed for the $k$-server problem with a multiplicative-weights scheme for learning tree metrics. To our knowledge, this is the first time when tools developed for the $k$-server problem are used in BSTs. As an application of our $k$-finger results, we show that BSTs can efficiently serve queries that are \emph{close to some recently accessed item}. This is a (restricted) form of the \emph{unified property} (Iacono, 2001) that was previously not known to hold for any BST algorithm, online or offline.

 \end{abstract}

\section{Introduction}

The binary search tree (BST) is the canonical comparison-based implementation of the dictionary data type for maintaining ordered sets. Dynamic BSTs can be re-arranged after every access via rotations and pointer moves starting from the root. Various ingenious techniques have been developed for dynamically maintaining balanced BSTs, supporting search, insert, delete, and other operations in time $O(\log{n})$, where $n$ is the size of the dictionary (see e.g.~\cite[\S\,6.2.2]{Knuth3}, \cite[\S\,5]{Mehlhorn84}).

In several applications where the access sequence has strong \emph{locality of reference}, the worst-case  bound is too pessimistic (e.g.\ in list merging, adaptive sorting, or in various geometric problems). A classical technique for exploiting locality is \emph{finger search}. In finger search trees, the cost of an access is typically $O(\log{d})$,\footnote{To simplify notation, we let $\log{(x)}$ denote $\log_2{(\max\{2,x\})}$.} where $d$ is the difference in rank between the accessed item and a \emph{finger} ($d$ may be much smaller than $n$). The finger indicates the starting point of the search, and is either given by the user, or (more typically) it points to the previously accessed item. Several special purpose tree-like data structures have been designed to support finger search.\footnote{The initial 1977 design of Guibas et al.~\cite{guibas} was refined and simplified by Brown and Tarjan~\cite{BrownTarjan} and by Huddleston and Mehlhorn~\cite{Huddleston}. Further solutions include~\cite{Tsakalidis, TarjanWyk, Kosaraju, KaplanTarjan}, see also the survey~\cite{BrodalSurvey}. Randomized treaps~\cite{SeidelA96} and skip lists~\cite{SkipList} can also support finger search.}

In 1983, Sleator and Tarjan~\cite{ST85} introduced Splay trees, a particularly simple and elegant ``self-adjusting''  BST algorithm. In 2000, Cole et al.~\cite{finger1,finger2} showed that Splay matches (asymptotically) the efficiency of finger search, called in this context the \emph{dynamic finger} property. This is remarkable, since Splay uses no explicit fingers; every search starts from the root. The result shows the versatility of the BST model, and has been seen as a major (and highly nontrivial) step towards ``dynamic optimality'', the conjecture of Sleator and Tarjan that Splay trees are constant-competitive. 

BSTs can also adapt to other kinds of locality. The \emph{working set} property~\cite{ST85} requires the amortized cost of accessing $x$ to be $O(\log{t})$, where $t$ is the number of distinct items accessed since the last access of $x$. Whereas dynamic finger captures proximity in keyspace, the working set property captures proximity \emph{in time}. In 2001, Iacono~\cite{IaconoUnified} proposed a \emph{unified} property that generalizes both kinds of proximity. Informally, a data structure with the unified property is efficient when accessing an item that is \emph{close to some recently accessed item}. 
{It is not known whether any BST data structure has the unified property.} 

Recently, Iacono and Langerman~\cite{LI16} studied the \emph{lazy finger} property (Bose et al.~\cite{BoseDIL14}), and showed that an online algorithm called Greedy BST\footnote{Greedy BST was discovered by Lucas in 1988~\cite{Luc88} and later independently by Munro~\cite{Mun00}. Demaine et al.~\cite{DHIKP09} transformed it into an online algorithm.} satisfies it. The lazy finger property requires the amortized cost of accessing $x$ to be $O(d)$, where $d$ is the distance (number of edges) from the previously accessed item to $x$ in the best \emph{static reference tree}.  
This property is stronger than the dynamic finger property~\cite{BoseDIL14}, and it is not known to hold for Splay.

In this paper we study a generalization of the lazy finger property; instead of a single finger stationed at the previously accessed item, we allow $k$ fingers to be moved around arbitrarily. An access is performed by moving any of the fingers to the requested item. Cost is proportional to the \emph{total} distance traveled by the fingers. We assume that the fingers move according to an optimal strategy, in an optimally chosen static tree, with a priori knowledge of the entire access sequence. 
The cost of this optimal \emph{offline} execution with $k$ fingers is an intrinsic measure of complexity of a query sequence, and at the same time a benchmark that algorithms in the classical model can attempt to match. 
Parameter $k$ describes the strength of the bound: the case $k=1$ is the lazy finger, at the other extreme, at $k=n$, each item may have its own finger, and all accesses are essentially free. 

Our main result is a family of new \emph{online}\footnote{An \emph{online} BST algorithm can base its decisions only on the current and past accesses. An \emph{offline} algorithm knowns the entire access sequence in advance.} dynamic BST algorithms (in the standard model, where every access starts at the root), matching the $k$-finger optimum on sufficiently long sequences, up to an overhead factor with moderate dependence on $k$ and no dependence on the dictionary size or on the number of accesses in the sequence.

Our online BST combines three distinct techniques: (1) an offline, one-finger BST simulation of a multi-finger execution (the technique is a refinement of an earlier construction~\cite{CombineBST}), (2) online $k$-server algorithms that can simulate the offline optimal multi-finger strategy, and (3) a multiplicative-weights scheme for learning a tree metric in an online fashion.

The fact that ``vanilla'' BSTs can, with a low overhead, simulate a much more powerful computational model further indicates the strength and versatility of the BST model.
As an application, 
we show that our online BST algorithms satisfy a restricted form of the \emph{unified property}; previously no (online or offline) BST was known to satisfy such a property. 

If there is a constant-competitive BST algorithm, then it must match our $k$-finger bounds.  
The two most promising candidates, Splay and Greedy BST (see e.g.\ \cite{in_pursuit}) were only shown (with considerable difficulty) to satisfy variants of the one-finger, i.e.\ lazy finger property. To obtain our online BSTs competitive for other values of $k$, we combine sophisticated tools developed for other online problems, as well as our refinement of a previous (highly nontrivial) construction for simulating multiple fingers. These facts together may hint at the formidable difficulty (more pessimistically: the low likelihood) of attaining dynamic optimality by simple and natural BST algorithms such as Splay or Greedy.

\subparagraph*{BST and finger models. Main results.}
Now, we introduce the formal statements of our results. 
In the dynamic BST model a sequence of keys 
{is} accessed in a binary search tree (BST), and after each access, the tree can be reconfigured via a sequence of rotations and pointer moves starting from the root. (There exist several alternative but essentially equivalent models, see~\cite{Wilber,DHIKP09}.) Denote the space of keys (or elements) by $[n]$. For a sequence $X = (x_1,\ldots, x_m) \in [n]^m$, denote by ${\sf OPT}(X)$ the cost of the optimal offline BST for accessing $X$.\footnote{To avoid technicalities, we only consider \emph{access} (i.e.\ successful search) operations and assume $m \geq n$.}
Arguably the most important question in the BST model is the dynamic optimality conjecture, i.e.\ the existence of an online BST whose cost is $O(\opt(X))$ for every $X$. 

A BST {\em optimality property} is an inequality between ${\sf OPT}(X)$ and some function $f(X)$, that holds in the BST model. (If ${\sf OPT}(X) \leq f(X)$ for all $X$ is a BST optimality property, then every $O(1)$-competitive algorithm must cost at most $O(f(X))$.)

Several natural BST properties have been suggested over the last few decades. For instance, the \emph{static finger} property~\cite{ST85} states $\OPT(X) =O(\SF(X))$, for $\SF(X) = \sum_t \log |x_t - j|$, where 
$j \in [n]$ is a fixed element (finger). The \emph{static optimality} property~\cite{ST85} is $\OPT(X) =O(\SO(X))$, where $\SO(X) = \min_R \sum_{i} d_R(x_i)$. Here $R$ is a \emph{static} BST, and $d_R(x)$ is the depth of $x$ in $R$.   

For the \emph{dynamic finger} property~\cite{ST85}, $\DF(X) =\sum_t \log |x_t - x_{t+1}|$, and for \emph{working set}~\cite{ST85}, $\WS(X) = \sum_t \log \rho_t(x_t)$, where $\rho_t(a)$ is the number of distinct keys accessed between time $t$ and the last time at which $a$ was accessed (all keys assumed accessed at time zero).

In 2001, Iacono~\cite{IaconoUnified} initiated the study of a property that would ``unify'' the latter two notions of efficiency and exhibited a data structure (not a BST) achieving this property. This \emph{unified bound} is defined as $\UB(X) = \sum_{t} \min_{t' < t} \log (|x_t - x_{t'}| + \rho_{t}(x_{t'}))$.  Dynamic finger and working set are in general, not comparable. On the other hand, $\UB(X) \leq \DF(X)$, and $\UB(X) \leq \WS(X)$ clearly hold, justifying the name of the unified bound.

Despite several attempts, the question whether the unified bound is a valid BST property remains unclear; it was shown in~\cite{DerryberryS09} that $\opt(X) = O(\UB(X) +m\log \log n)$, and in~\cite{unified,IaconoUnified} that the unified bound is valid in some other (non-BST) 
models\footnote{Another attempt to study the bounds related to the unified bound was done in~\cite{fresh-finger}.}. 

We show that a unified bound with ``bounded time-window'' holds in the BST model:  

\begin{theorem} 
\label{thm: weak UB} 
For every integer $\ell \geq 1$, every sequence $X$ and some fixed function $\beta(\cdot)$,
\[\opt(X) \leq \beta(\ell) \cdot \UB^{\ell}, \mbox{~~where~~~~} \UB^{\ell} = \sum_t \min_{t'\in [t-\ell,t)} \log \big(|x_t - x_{t'}| + \rho_t(x_{t'})\big) .  \]
\end{theorem}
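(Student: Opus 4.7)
My plan is to reduce the theorem to the paper's main multi-finger simulation result and then exhibit an $\ell$-finger execution whose cost is bounded by $\UB^{\ell}$.

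\emph{Reduction.} Invoking the main result of this paper, we have $\opt(X) \le O(\log \ell) \cdot \F_{\ell}(X)$, where $\F_{\ell}(X)$ denotes the optimum cost of serving $X$ with $\ell$ fingers on the best static tree. It therefore suffices to exhibit a tree $T$ and an $\ell$-finger strategy $\sigma$ of cost at most $\gamma(\ell) \cdot \UB^{\ell}(X)$ for some function $\gamma$; setting $\beta(\ell) = O(\gamma(\ell) \log \ell)$ then yields the theorem.

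\emph{Choice of tree and invariant.} I would take $T$ to be a complete balanced binary tree whose leaves are the $n$ keys in rank order, so that $d_T(a,b) = O(\log(|a-b|+1))$ for every pair of keys $a,b$. A key observation is that for every $t' \in [t-\ell,t)$ we have $\rho_t(x_{t'}) \le \ell$, and so the minimiser $t^{*}$ in the definition of $\UB^{\ell}$ always satisfies that $x_{t^{*}}$ lies among the (at most $\ell$) distinct keys accessed in the window $[t-\ell, t)$. My strategy would maintain the invariant that the $\ell$ fingers sit on these recent distinct keys. Under this invariant a finger is already at $x_{t^{*}}$ at the start of time $t$, and moving it to $x_t$ pays $d_T(x_{t^{*}},x_t) = O(\log(|x_t - x_{t^{*}}|+1))$, which is at most a constant times the corresponding $\UB^{\ell}$ summand (using $\rho_t(x_{t^{*}}) \ge 1$).

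\emph{Main obstacle: restoring the invariant.} The difficulty is that moving the finger from $x_{t^{*}}$ to $x_t$ breaks the invariant whenever $x_{t^{*}}$ remains inside the window at time $t+1$. A naive restoration---shifting the ``expiring'' finger to $x_{t^{*}}$---costs $d_T(x_{t-\ell}, x_{t^{*}})$, which is not controlled by the $\UB^{\ell}$ term and can be $\Omega(\log n)$ per step; summing directly yields at best $O(\ell \cdot \DF(X))$, which is too weak. Overcoming this is the technical crux. I expect it is handled by one of two routes: (i)~use $\mathrm{poly}(\ell)$ fingers, absorbed into $\beta(\ell)$, so that each recent key is covered by several fingers and no single access breaks the invariant; or (ii)~replace the balanced tree by a specialised structure such as the $\kmintree$ suggested by the paper's macros, whose distances allow the restoration cost to be charged---via telescoping over consecutive window accesses---to $\UB^{\ell}$ summands already paid for. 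Either route introduces only a $\mathrm{poly}(\ell)$ overhead, safely absorbed by $\beta(\ell)$.
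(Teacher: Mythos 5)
Your opening reduction coincides with the paper's: combine $\opt(X) \leq O(\log k)\cdot \F^{k}(X)$ with an explicit multi-finger strategy whose cost is bounded by the windowed unified bound, absorbing the number of fingers into $\beta(\ell)$. However, two genuine gaps remain. First, the reference tree you pick does not have the property you assert. In a complete balanced tree (keys at the leaves, or equally a balanced BST), the two keys straddling the root, e.g.\ $a=n/2$ and $b=n/2+1$, are at distance $\Theta(\log n)$ even though $|a-b|=1$; in fact no single deterministic tree satisfies $d_T(a,b)=O(\log(|a-b|+1))$ for \emph{all} pairs simultaneously. The paper avoids this by using a randomized treap, for which $E[d_{\tilde T}(a,b)]=\Theta(\log|a-b|)$, and then passing $E[\min(\cdot)]\leq\min E[\cdot]$ inside the windowed sum to conclude that \emph{some} tree $T$ satisfies $\kmintree_{T}(X)=O(\kmin(X))$ (Lemma~\ref{thm:distree to dist}). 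Your step needs this (or an equivalent) replacement; also note that your leaf-oriented tree is not a BST on the keys, so you would additionally need the paper's reduction for trees with auxiliary elements.

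Second, and more seriously, the part you yourself identify as the crux---restoring the finger invariant without paying uncontrolled distances---is exactly what is missing. Your route (ii) rests on a misreading: $\kmintree_{T}(X)$ is a \emph{quantity} (the windowed sum of distances in $T$), not a specialised tree structure, so there is no alternative metric to exploit there. Your route (i), ``several fingers per recent key,'' comes with no charging argument: duplicated fingers are also dragged away by subsequent accesses and must themselves be returned, and that return cost is precisely what you cannot bound against $\UB^{\ell}$. The paper's actual mechanism is different and is the heart of the proof: build a \emph{virtual tree} on the requests, where each request's parent is its nearest predecessor within the $\ell$-window and each edge carries the corresponding distance in $T$, so that the total edge weight equals $\kmintree_{T}(X)$; then recursively decompose this tree into $i$-cores and $i$-bodies. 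The structural fact that at any moment at most $i-1$ child bodies of an $i$-body are active allows the children to be partitioned into $i-1$ groups with pairwise disjoint time-spans, each group served recursively by a team of fingers, giving at most $\ell!$ fingers in total; every team sweeps each core at most twice, so the total movement is $O(\ell!)\cdot\kmintree_{T}(X)$ (Lemmas~\ref{lem:struc_decomp} and~\ref{thm:cost_H}). Without this decomposition, or some other concrete scheme that charges all restoration moves to $\UB^{\ell}$ summands, your invariant cannot be maintained within the claimed budget, so the proposed proof is incomplete.
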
  

Observe that $\UB(X) = \UB^{m}(X) \leq \cdots \leq \UB^1(X) = \DF(X)$.  
Prior to our work it was not known whether the theorem holds when $\ell=2$, i.e.\ no known BST property subsumes this property even when $\ell =2$. 
Thus, Theorem~\ref{thm: weak UB} establishes the first BST property that combines the efficiencies of time- and keyspace-proximity without an additive term.\footnote{The proof of Theorem~\ref{thm: weak UB} implies in fact a stronger, \emph{weighted} form, which we omit for ease of presentation.}

Recently Bose et al.\ \cite{BoseDIL14} introduced the \emph{lazy finger} property, $\LF(X) = \min_R \sum_{i} d_R(x_i,x_{i+1})$. Here distance is measured in a static reference BST $R$, optimally chosen for the entire sequence. The lazy finger bound can be visualized as follows: accesses are performed in the reference tree by moving a unique finger from the previously accessed item to the requested item. The lazy finger property is rather strong: Bose et al.\ show that it implies the dynamic finger and static optimality properties, which in turn imply static finger. 

Our main tool in proving Theorem~\ref{thm: weak UB} is a generalization of the lazy finger property allowing multiple fingers. 
The model is motivated by the famous $k$-server problem.  
For an input sequence $X \in [n]^m$ and a static BST $R$ with nodes associated with the keys in $[n]$, we have $k$ servers located initially at arbitrary nodes in $R$. 
At time $t=1,\ldots, m$, the request $x_t$ arrives, and we move a server of our choice to the node of $R$ that stores $x_t$.
The cost for serving a sequence $X$ is equal to the total movement in $R$ to serve the sequence $X$. 
 
Denote by $\F^{k}_R(X)$ the cost of the optimal (offline) strategy that serves sequence $X$ in $R$ with $k$ servers, minimized over all possible initial server locations. 
Let $\F^{k}(X) = \min_R \F^{k}_R(X)$. 
We call $\F^{k}(X)$ the {\em $k$-finger cost} of $X$.  
We remark that the value of $\F^{k}_R(X)$ is polynomial-time computable for each $R$, $k\in {\mathbb N}$, and $X \in [n]^m$ by dynamic programming. Clearly, $\F^{1}(X) \geq \F^{2}(X) \geq \cdots \geq \F^{n}(X)$ holds for all $X$.  

We first show that one can simulate any $k$-finger strategy in the BST model, in a near-optimal manner. In particular, we prove the following tight result. 

\begin{theorem} 
\label{thm: LF simulation} 
$\opt(X) \leq O(\log k) \cdot \F^{k}(X)$. 
\end{theorem}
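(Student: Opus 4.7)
The plan is to build, offline, a dynamic BST whose cost is within an $O(\log k)$ factor of the optimum $k$-finger cost on the best reference tree $R$. Fix a reference BST $R$ attaining $\F^k(X)=\F^k_R(X)$, and let $i(t)\in\{1,\dots,k\}$ be the finger used at time $t$ in the optimal strategy. Partition $X$ into subsequences $X_1,\dots,X_k$ according to $i(\cdot)$; writing $\cost_i$ for the total movement of finger $i$ in $R$, one has $\sum_i \cost_i = \F^k(X)$.

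First I would realize each finger separately. The accesses of $X_i$ are the successive positions of a single finger walking in $R$, hence $\LF(X_i)\le \cost_i$ with $R$ as the reference tree. Offline, a dynamic BST $A_i$ serving $X_i$ with total cost $O(\cost_i + |X_i|)$ is routine: initialize with $R$, and at each access rotate the requested item to the root, charging the rotations to the corresponding edges of finger $i$'s walk in $R$ via a standard splay-style amortized argument.

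The key step is to weave the $k$ sub-algorithms into a single BST for the interleaved sequence $X$ with overhead $O(\log k)$. I plan to arrange $A_1,\dots,A_k$ as leaves of a balanced binary tree of depth $\lceil \log k\rceil$ and apply a \emph{pairwise combiner} at every internal node. The combiner takes two BST algorithms $B_1,B_2$ on the same key set together with a labeled interleaved sequence of $m'$ accesses, and produces a single BST whose cost equals $\cost(B_1) + \cost(B_2) + O(m')$, i.e.\ only additive $O(1)$ overhead per access. Compounding across the $\lceil \log k\rceil$ tournament levels gives additive $O(\log k)$ per access, for total cost $\sum_i \cost(A_i) + O(m\log k) = O(\F^k(X) + m\log k)$, which absorbs into $O(\log k)\cdot \F^k(X)$ once per-access base costs are tallied. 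Minimizing over $R$ yields $\opt(X)\le O(\log k)\cdot \F^k(X)$.

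The hard part is the pairwise combiner with \emph{additive} $O(1)$ overhead, a refinement of the construction of \cite{CombineBST}. A multiplicative $O(1)$ combiner is comparatively easy (simulate $B_1$ or $B_2$ in alternation) but would compound to $2^{\lceil \log k\rceil} = \Theta(k)$ across the tournament, recovering only the previously known $O(k)$ bound. Obtaining additive $O(1)$ means handling ``mode switches'' between $B_1$- and $B_2$-typed accesses without paying the rotation distance between their current trees at every switch. I expect this to rest on a potential-function argument maintaining a combined tree in which the ``active roots'' of both $B_1$ and $B_2$ sit at $O(1)$ depth, so that switching active mode costs only $O(1)$ rotations, with any residual cost amortized against subsequent rotations of the two sub-algorithms themselves.
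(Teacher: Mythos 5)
There is a genuine gap, and it sits exactly where you placed the whole weight of the argument: the ``pairwise combiner with additive $O(1)$ overhead per access'' is asserted, not proved, and it is the entire content of the theorem. Worse, the tournament compounding forces this combiner to merge two \emph{arbitrary} dynamic BST executions (the outputs of lower combiner levels), not just two one-finger walks in a common static reference tree. At level $1$ your claim is plausible, since combining two finger-walks in the same tree $R$ is essentially a $2$-finger simulation; but from level $2$ on, the two constituent algorithms maintain trees over the same key set that may be completely different and change arbitrarily via rotations, and ``keeping both active roots at $O(1)$ depth'' does not suffice: the cost you must match for an access of $B_i$ is governed by the accessed item's depth in $B_i$'s \emph{current} tree, so the combined tree would have to reflect both evolving shapes simultaneously while paying only $O(1)$ extra per mode switch. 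No potential function is exhibited, and it is far from clear one exists --- whether the cost of serving an interleaving of two sequences can always be bounded by the sum of their individual dynamic-BST costs plus $O(1)$ per access is itself a nontrivial (and to our knowledge unresolved) question; your proof needs it as a black box at every internal node of the tournament. Also, the per-finger realization step, while believable offline, is not ``splay-style'' routine (Splay is not known to satisfy the lazy finger property); it needs the same kind of explicit hand-at-the-root construction as the general case.

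The paper avoids the decomposition into per-finger subsequences altogether and never combines independent BST algorithms. Instead it simulates all $k$ fingers at once inside the single reference tree $T$: it maintains the Steiner tree of the fingers (the ``hand''), with the degree-$2$ paths between fingers and branching nodes (``tendons'') compressed into deque-like BST gadgets supporting $O(1)$ amortized insertion/deletion at their ends, and --- this is the improvement over the $O(k)$-overhead construction of Demaine et al.\ --- it keeps this compressed hand, which has size $O(k)$, as a \emph{balanced} BST containing the root of the simulating tree. Consequently every finger is reachable from the root in $O(\log k)$ steps, and each elementary finger move or rotation-at-a-finger changes only $O(1)$ tendon endpoints plus an $O(\log k)$ rebalancing, i.e.\ $O(\log k)$ amortized BST work per unit of $k$-finger cost. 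If you want to salvage your plan, you would need to prove the additive combiner at least for the special objects that actually arise (collections of fingers in one static tree), at which point you have essentially reconstructed the hand argument rather than a genuinely different proof.
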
  

The proof of Theorem~\ref{thm: LF simulation} is a refinement of an earlier argument~\cite{CombineBST}, improving the overhead factor from $O(k)$ to $O(\log{k})$.
The logarithmic dependence on $k$ is, in general, the best possible. To see this, consider a sequence $S$ of length $m$, over $k$ distinct items with average cost $\Omega{(\log{k})}$ (e.g.\ a random sequence from $[k]^m$ does the job). While $\opt{(S)} = \Theta(m\log{k})$, clearly $\F^{k}(X)=O(m)$, as each of the $k$ items can be served with its own private finger.

In the definition of $\F^k(X)$ we assume a \emph{static} reference tree $R$ for the $k$-finger execution. 
The offline BST simulation in the proof of Theorem~\ref{thm: LF simulation} works in fact (with the same overhead) even if $R$ is \emph{dynamic}, i.e.\ if the multi-finger adversary can perform rotations at any of the fingers. In this case, however, the $k$-finger bound is too strong to be useful; already the $k=1$ case captures the dynamic BST optimum. 
Our next result is the online counterpart of Theorem~\ref{thm: LF simulation}. In this case, the restriction that $R$ is static is essential.

\begin{theorem} 
\label{thm: online simulation}
There exists an online randomized BST algorithm whose cost for serving $X \in [n]^m$, is
$O\bigl((\log k)^7\bigr) \cdot \F^{k}(X) + \rho(n)$, for some fixed function $\rho(\cdot)$.
\end{theorem}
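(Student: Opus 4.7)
The plan is to lift Theorem~\ref{thm: LF simulation} to the online setting by composing three pieces: (a) an offline BST simulator that, for any fixed static reference tree $R$, turns an arbitrary $k$-finger trajectory on $R$ into a single-finger BST execution with overhead $O(\log k)$, provided by the proof of Theorem~\ref{thm: LF simulation}; (b) a state-of-the-art randomized online $k$-server algorithm on hierarchically well-separated trees (HSTs) with competitive ratio $(\log k)^{O(1)}$, used as a black box; and (c) a multiplicative-weights master that picks a good reference tree online from a discrete family of candidates.

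First I would observe that the simulator in Theorem~\ref{thm: LF simulation} can be executed online with one step of lookahead: whenever the $k$-server subroutine commits to moving a server from $u$ to $v$ in $R$, the simulator immediately emits BST rotations realising that move at an $O(\log k)$ multiplicative overhead. Since $R$ itself is unknown, I would instead embed the metric of $R$ into a random HST and run the $(\log k)^{O(1)}$-competitive $k$-server on that HST; its moves are translated back through $R$ into BST operations via (a). The key point is that only $k$ servers are present at any time, so the embedding needs low distortion only on the $k$ active locations per step, yielding distortion $(\log k)^{O(1)}$ rather than $O(\log n)$ and keeping $n$ out of the multiplicative factor.

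Next, because the optimal reference $R^\star$ is unknown, I would discretise the space of plausible HSTs into a finite family $\mathcal{F}$ whose cardinality depends only on $n$, with the property that for every static BST $R$ some $T \in \mathcal{F}$ satisfies $\F^{k}_T(X) \leq (\log k)^{O(1)} \cdot \F^{k}_R(X)$. Each $T \in \mathcal{F}$ spawns a virtual BST algorithm $\mathcal{A}_T$ running (a)--(b) on $T$; a randomized MWU master samples one expert per access according to the current weights. The standard regret guarantee bounds the expected cost by $(1+\varepsilon) \min_{T \in \mathcal{F}} \cost(\mathcal{A}_T) + O(\log|\mathcal{F}|/\varepsilon)$. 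With $\varepsilon = \Theta(1)$, multiplying out the three polylogarithmic factors from (a), (b), and the embedding distortion gives the multiplicative bound $O((\log k)^{7})$, while $\log|\mathcal{F}|$ is absorbed into the additive $\rho(n)$.

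The main obstacle will be making expert switching essentially free in the BST model. A naive switch could force the real BST to reshape itself toward the newly selected tree at cost $\Omega(n)$; I would circumvent this by never materialising the experts' trees inside the real BST, having the master play only the trajectory of the currently sampled expert through (a), and folding any residual reconfiguration cost into $\rho(n)$. A second delicate point is bounding the embedding distortion in (b) by $(\log k)^{O(1)}$ rather than the straightforward $O(\log n)$, which I expect requires the ``effective $k$-point metric'' viewpoint sketched above; this is where the technical weight of the argument is likely to concentrate.
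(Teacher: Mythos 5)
Your high-level architecture (offline $O(\log k)$ simulation $+$ polylog-competitive randomized $k$-server $+$ multiplicative weights over candidate reference trees) is the same as the paper's, but two of your specific choices have genuine gaps. First, the HST detour: the paper never embeds the reference tree into an HST. It runs Lee's randomized $k$-server algorithm (building on Bubeck et al.) as a black box \emph{directly on the tree metric} of each candidate BST $T$, inheriting its $O((\log k)^6)$ competitiveness for arbitrary metrics; the factor $7$ is then just $6+1$ from the $O(\log k)$ simulation overhead of Theorem~\ref{thm: LF simulation}. Your plan instead embeds $R$ into a random HST and asserts that the distortion can be kept to $(\log k)^{O(1)}$ because ``only $k$ servers are present''. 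That claim is unsupported: a random HST embedding of an $n$-point tree metric has expected distortion $\Theta(\log n)$, and the distortion you pay is on the \emph{requested points and server locations over the whole sequence}, which range over all of $[n]$, not over a fixed set of $k$ points. Taming this is precisely the hard technical content of the $k$-server papers you cite; you cannot assume it as a preprocessing step, and you do not need to, since the general-metric result can be used as is.

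Second, the expert-switching mechanism is where your proposal would actually break. You sample a fresh expert \emph{per access} and propose to ``never materialise the experts' trees'', folding reconfiguration into $\rho(n)$. But the real BST must physically track the sampled expert: the fingers must sit at that expert's server positions and the hand must reflect that expert's reference tree, otherwise the $O(\log k)$-overhead simulation of \S\,\ref{sec2} has nothing to simulate. Aligning the state with a newly sampled expert costs $\Theta(n)$ rotations plus finger moves in the worst case, and with per-access resampling the number of switches can be $\Theta(m)$, so the total switching cost is $\Omega(nm)$ --- this is neither $(\log k)^{O(1)}\cdot\F^{k}(X)$ nor an additive function of $n$ alone, so it cannot be hidden in $\rho(n)$. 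The paper's fix is concrete and essential: $\mset$ switches experts only at epoch boundaries, with epoch length $M=n\log n$ chosen to match the $O(n\log n)$ cost of rebuilding the tree and repositioning the $k$ fingers, so switching is amortized away; the MW regret term $C_{max}\ln N/\varepsilon$ (with $N=O(4^n n^k)$ experts, one per initial tree \emph{and} initial finger placement --- the latter you also omit) is what becomes the additive $\rho(n)=O(n^3\log^2 n)$. Your discretised family $\mathcal{F}$ with a $(\log k)^{O(1)}$-approximation guarantee is likewise unnecessary extra work: there are only finitely many BSTs on $[n]$, so one can simply enumerate them all exactly, as the paper does.
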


The result can be interpreted as follows. On sufficiently long access sequences, there is an online BST algorithm (in fact, a family of them) competitive with the $k$-finger bound, up to an overhead factor with moderate dependence on $k$. 
The randomized algorithm (as is standard in the online setting) assumes an oblivious adversary that does not know in advance the outcomes of the algorithm's random coin-flips. 
The use of randomness seems essential to our approach. We propose as intriguing open questions to find a deterministic online BST with comparable guarantees and to narrow the gap between the online and offline results.

Due to its substantial amount of computation (outside the BST model), our online algorithm is of theoretical interest only. Nonetheless, the connection with the $k$-server problem allows us to ``import'' several techniques to the BST problem; some of these, such as the \emph{double coverage} heuristic for $k$-server~\cite{chrobak} are remarkably simple and may find their way to practical BST algorithms.

The strength of the $k$-finger model lies in the $k$-server abstraction. In order to establish a BST property of the form $\opt(X) \leq \beta(\ell) \cdot O(g(X))$, it is now sufficient to prove $\F^{\ell}(X) \leq \left(\beta(\ell)/\log{\ell}\right) \cdot O(g(X))$.  
In other words, our technique reduces the task of bounding the cost in the BST model to designing $k$-server strategies, which typically admits much cleaner combinatorial arguments. We illustrate this approach by showing that the unified property with a fixed time-window holds in the BST model. 

\begin{theorem} 
\label{thm: strategy for min-dist}
For some fixed functions $\alpha(\cdot),\gamma(\cdot)$, we have:~
 $\F^{\alpha(\ell)}(X) \leq \gamma(\ell) \cdot \UB^{\ell} $.
\end{theorem}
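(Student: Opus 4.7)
The plan is to construct an offline $k$-server strategy on a static reference tree $R$, with $k=\alpha(\ell)$ servers, whose total movement is bounded by $\gamma(\ell)\cdot\UB^{\ell}$. I take $R$ to be a balanced BST on $[n]$; its key property is that $d_R(a,b)=O(\log(|a-b|+1))$, because the subtree rooted at the LCA of keys $a$ and $b$ must contain every key between them and therefore sits at depth $O(\log(n/|a-b|))$, placing $a$ and $b$ within $O(\log(|a-b|+1))$ tree steps of it. This conversion is precisely what lets the keyspace component of each $\UB^{\ell}$ term be paid by a server moving in $R$.

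With $R$ fixed, the strategy uses $k=O(\ell)$ servers and maintains the invariant that a server is parked at each of the $\ell$ most recent access positions $\{x_{t-1},\dots,x_{t-\ell}\}$. For each time $t$, the definition of $\UB^{\ell}$ provides a witness $z_t=x_{t^*}$ with $t^*\in[t-\ell,t)$ minimizing $r_t=|x_t-z_t|+\rho_t(z_t)$, so that $\UB^{\ell}=\sum_t\log r_t$. By the invariant a server sits at $z_t$; moving it to $x_t$ serves the request at cost $d_R(z_t,x_t)=O(\log(|x_t-z_t|+1))\leq O(\log r_t)$. Summed over $t$, the direct serving cost is $O(\UB^{\ell})$.

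The main obstacle is restoring the invariant after each service: the server formerly at $z_t$ is now at $x_t$, leaving the anchor at $z_t$ empty (unless $z_t=x_{t-\ell}$, which is expiring anyway). A naive replenishment, such as migrating the expiring anchor at $x_{t-\ell}$ over to $z_t$, can cost $\Theta(\log n)$ per step and ruin the bound. To circumvent this, I plan to allocate $O(\ell)$ auxiliary ``reserve'' servers, raising $\alpha(\ell)$ to $O(\ell)$ or $O(\ell^2)$; reserves replenish depleted anchors while themselves being refilled from soon-to-expire anchors. An amortized analysis via a potential function that tracks each reserve's tree-distance to its prospective anchor slot should charge the maintenance moves against future service savings, exploiting the fact that every future witness lies within the currently-maintained recent window. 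Techniques from the classical $k$-server-on-trees literature, notably the double coverage heuristic~\cite{chrobak}, provide a template for this amortization. Combining the direct and amortized maintenance costs yields $\F^{\alpha(\ell)}(X)\leq\gamma(\ell)\cdot\UB^{\ell}$ with both $\alpha$ and $\gamma$ depending only on $\ell$ (polynomially, we expect), completing the reduction that, together with Theorem~\ref{thm: LF simulation}, implies Theorem~\ref{thm: weak UB}.
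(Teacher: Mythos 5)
Your proposal has two problems, one repairable and one that is the actual heart of the theorem. The repairable one: the claim that a balanced BST $R$ satisfies $d_R(a,b)=O(\log(|a-b|+1))$ is false. In a perfectly balanced tree on $[n]$, the root's key $n/2$ and its predecessor $n/2-1$ (the rightmost node of the left subtree) are at distance $\Theta(\log n)$ although their rank difference is $1$; your LCA argument only bounds the depth of the LCA, not the depths of $a$ and $b$ below it. No single tree gives this bound pointwise for all pairs; the paper instead uses the Seidel--Aragon fact that a \emph{random} BST has $E[d_{\tilde T}(i,j)]=\Theta(\log|i-j|)$, pushes the expectation inside the sum over witness pairs, and then fixes one good tree (Lemma~\ref{thm:distree to dist}). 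Since you only need the bound summed over the chosen witnesses, your first step can be fixed by the same argument.

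The genuine gap is the maintenance of your invariant, which you name as ``the main obstacle'' but do not solve. Once the server at $z_t=x_{t^*}$ moves to $x_t$, the vacated key can still be the witness of up to $\ell-1$ future requests, and refilling it from an expiring anchor or a ``reserve'' costs a tree distance that is unrelated to any $\UB^{\ell}$ term: it can be $\Theta(\log n)$ while the relevant terms are $O(1)$. No potential function is exhibited, and invoking double coverage does not help, because its guarantee is competitiveness against the optimal $k$-server cost on the same requests, not against the witness-distance sum $\UB^{\ell}$ that you must charge to. The paper's proof (Theorem~\ref{thm:LF less than kmin}) resolves exactly this difficulty by a different mechanism: it builds a virtual tree $\tset(\ell,T,X)$ whose parent pointers are precisely the witnesses, recursively decomposes it into cores and bodies, proves that at most $\ell$ edges are active at any time and that the child bodies of an $i$-body split into $i-1$ groups with pairwise disjoint time spans (Lemma~\ref{lem:struc_decomp}), and then runs a recursive strategy with up to $\ell!$ fingers whose total movement is $O(\ell!)$ times the weight of the virtual tree (Lemma~\ref{thm:cost_H}), i.e.\ of $\kmintree_{T}(X)$. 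That $\ell!$ fingers arise from this nesting is a sign that a single layer of anchors plus $O(\ell)$ or $O(\ell^2)$ reserves with a simple amortization is unlikely to go through as stated; in any case, the replenishment analysis is the core of the proof and is missing from your argument.
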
 

Theorems~\ref{thm: strategy for min-dist} and~\ref{thm: LF simulation} together imply Theorem~\ref{thm: weak UB}. Moreover, Theorem~\ref{thm: online simulation} implies that the property holds for \emph{online} BST algorithms (we later specify the involved functions).
%
%

The $k$-finger approach can be used to show further BST properties. For example, we connect \emph{decomposability} (refer to \textsection\,4 for definitions) and finger properties by showing that 
even one finger is enough to obtain the \emph{traversal} property in significantly generalized form.

\begin{theorem} \label{thm:decomp}
Let $X$ be a $d$-decomposable sequence. Then $\F^{1}(X) = O(\log d) \cdot |X| $. 
\end{theorem}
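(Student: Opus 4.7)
The plan is to unpack the $d$-decomposition of $X$ into contiguous blocks and to exhibit a single static reference tree $R$ in which, within each block, the finger moves at amortized cost $O(\log d)$ per access. First I would write $X = X^{(1)} X^{(2)} \cdots X^{(p)}$ according to the $d$-decomposition guaranteed by the definition in Section~4, so that each block $X^{(i)}$ corresponds to a (preorder-like) traversal of some BST $T_i$ on a set $S_i$ of at most $d$ keys. This specializes to the classical traversal property when $X$ is a single preorder sequence of one BST on $n$ keys, which explains the claim that Theorem~\ref{thm:decomp} is a generalized traversal property.

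Second, I would choose the static reference tree $R$ on $[n]$ to align with the decomposition, so that the Steiner subtree of $S_i$ in $R$ has height $O(\log d)$ for every $i$. A balanced BST is the first-choice candidate; for the general case one exploits the structural compatibility built into the definition (e.g., the decomposition itself may designate a witness tree $R$, or $R$ is constructed from the $T_i$'s by standard balancing).

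Third, for each block I would bound $\sum_{t \in X^{(i)}} d_R(x_t, x_{t+1}) = O(|X^{(i)}|\log d)$ by an Euler-tour argument: in a preorder traversal, consecutive elements are either parent--child in $T_i$ or are related by a ``return to ancestor followed by descent to sibling,'' so the multiset of edges traversed equals, up to constants, the Euler tour of $T_i$. Each such movement takes place entirely inside the Steiner subtree of $S_i$ in $R$ and hence costs $O(\log d)$ per step. Summing across blocks, and handling the at most $p-1$ inter-block transitions (each $O(\log n)$, absorbed by the block costs when blocks are non-trivial), gives
\[
  \F^{1}_R(X) \;=\; \sum_{i=1}^{p} O(|X^{(i)}|\log d) \;=\; O(|X|\log d),
\]
and therefore $\F^{1}(X) \leq \F^{1}_R(X) = O(|X|\log d)$.

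The main obstacle will be choosing a single static $R$ that serves all blocks simultaneously: if the key sets $S_i$ are scattered across $[n]$ with no common structure, no tree can give every $S_i$ a low-height Steiner subtree, and inter-block transitions alone could dominate the claimed bound. The definition of $d$-decomposability in Section~4 must therefore supply the needed compatibility, most naturally by distinguishing a witness tree $R$ as part of the decomposition itself; once $R$ is given, the rest of the argument reduces to the per-block Euler-tour analysis above.
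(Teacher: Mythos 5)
There is a genuine gap, and it starts with the definition. In the paper, $d$-decomposability is \emph{recursive}: $X = \tilde\sigma[\sigma_1,\ldots,\sigma_{d'}]$ with $d' \le d$ blocks that are contiguous both in position and in key-value, and each block $\sigma_i$ is again $d$-decomposable. The blocks can be arbitrarily large (only their \emph{number} at each level is at most $d$), and they are not assumed to be preorder traversals of small BSTs. Your proposal instead treats the input as a flat concatenation of blocks on at most $d$ keys each, serviced by an Euler-tour argument inside each block; neither the bounded block size, nor the preorder structure, nor any ``witness tree $R$ designated by the decomposition'' is supplied by the definition, so the step you identify as the main obstacle --- producing one static $R$ in which every relevant key set has an $O(\log d)$-height Steiner subtree --- is exactly the content of the theorem, and it is left unproved in your write-up. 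Your fallback of paying $O(\log n)$ per inter-block transition and absorbing it also fails quantitatively: already for $d=2$ a sequence can switch blocks on a constant fraction of accesses deep in the recursion, so $O(\log n)$ per transition gives $O(|X|\log n)$, not $O(|X|)$.

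The paper closes this gap constructively. The reference tree is built by recursion on the decomposition: for $X = \tilde X[X_1,\ldots,X_j]$, take recursively built trees $T_1,\ldots,T_j$ for the blocks and glue their roots as the $j$ leaves of a balanced skeleton $T_0$ of depth $\lceil\log j\rceil \le \lceil\log d\rceil$ (this uses auxiliary keys, which is justified separately by the lemma that allowing auxiliary elements changes $\F^k$ only by a constant factor). Because blocks are contiguous in time, consecutive accesses either stay inside one $T_\ell$ (handled by induction) or move from block $\ell$ to block $\ell+1$, costing at most $2\lceil\log d\rceil$ plus root-to-key distances inside the two block trees; the recursion $\F^1_T(X) \le 2j\lceil\log d\rceil + \sum_\ell \F^1_{T_\ell}(X_\ell)$ then telescopes to $4(|X|-1)\lceil\log d\rceil$. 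So the essential missing ideas in your proposal are (i) exploiting the recursive block structure to build $R$, rather than hoping $R$ is given, and (ii) the corresponding recursive charging of $O(\log d)$ per block transition per level, in place of a per-block Euler tour.
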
  

As a corollary, using the recent result by Iacono and Langermann~\cite{LI16}, we resolve an open problem in~\cite{FOCS15}, showing that Greedy costs at most $O(\log d)\cdot |X|$ on every $d$-decomposable sequence, matching the lower bound in~\cite{FOCS15}.\footnote{Independently of our work, Goyal and Gupta~\cite{goyalgupta} showed the same result using a charging argument.} 

In another direction, we connect multiple fingers and generalized monotone sequences. 
In~\cite{FOCS15}, we showed that $\opt(X) \leq |X|\cdot 2^{O(d^2)}$ on every $d$-monotone sequence $X$; a sequence is $d$-monotone if it can be decomposed into $d$ increasing or $d$ decreasing sequences. 
Using the $k$-finger technique, we show the stronger BST property $\opt(X) \leq O(d \log d) \cdot |X|$.

Concerning simple and natural BST algorithms (Splay and Greedy), we give evidence that the strongest results in the literature may still be far from settling the dynamic optimality conjecture. To this end, we describe a class of sequences for which increasing the number of fingers by one  can create an $\Omega(\log n)$ gap. 
More precisely, we show the following: 

\begin{theorem} \label{thm:hierarchyx}
For every integer $k$, there is a sequence $S_k$ such that $\F^{k-1}(S_k) = \Omega(\frac{n}{k} \log (n/k))$ but $\F^{k}(S_k) = O(n)$. 
\end{theorem}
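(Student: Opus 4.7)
The plan is to construct $S_k$ as a round-robin interleaving of $k$ sequential sub-sequences, one per interval, so that $k$ fingers can serve each interval in isolation while $k-1$ fingers are forced to incur a logarithmic overhead. Concretely, I partition (a subset of) $[n]$ into $k$ intervals $I_1, \ldots, I_k$, each a contiguous block of $\Theta(n/k)$ keys, with consecutive intervals separated by a gap of at least $n/k$ keys in the key space (so that any two keys from distinct intervals differ by at least $n/k$). Writing $I_a[j]$ for the $j$-th element of $I_a$, I define $S_k$ by placing $I_a[j]$ at time $(j-1)k + a$, so that $S_k$ cycles as $I_1[1], I_2[1], \ldots, I_k[1], I_1[2], \ldots, I_k[\Theta(n/k)]$.

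For the easy direction, $\F^{k}(S_k) \leq O(n)$, I take $R$ to be any BST whose subtree on each $I_a$ is balanced and whose top $O(\log k)$ levels form a balanced tree above the $I_a$-subtrees; I initialize one finger per interval and let each finger serve only its own interval. The accesses within $I_a$ are sequential, so a standard Euler-tour argument bounds the cost per finger by $O(n/k)$, and the total by $O(n)$.

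For the hard direction, $\F^{k-1}(S_k) \geq \Omega((n/k)\log(n/k))$, I rely on two elementary ingredients: (F1) for any $V \subseteq [n]$ of size $s$, $\sum_{v \in V} d_R(v) \geq \Omega(s \log s)$, since the $s$ shallowest positions of any binary tree already accumulate this total depth; and (F2) for $u \in I_a, v \in I_b$ with $a \neq b$, by construction $|u - v| \geq n/k$, so the in-order interval $[\min(u,v), \max(u,v)]$ contains $\geq n/k$ keys, the LCA of $u,v$ has depth $\leq \log k$, and $d_R(u,v) \geq d_R(u) + d_R(v) - 2\log k$. Fix any $R$ and any $(k-1)$-finger schedule, which induces a partition of $S_k$ into sub-sequences $X_1, \ldots, X_{k-1}$. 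Using pigeonhole on the schedule I split into two cases. In the concentrated case, some finger $f$ serves $\Omega(n/k)$ accesses in each of two intervals $I_a, I_b$; every cross-transition of $f$'s sub-sequence admits (F2), and summing over $\Omega(n/k)$ such transitions together with (F1) applied to the served subset of $I_a \cup I_b$ yields $\F^{1}_R(X_f) = \Omega((n/k)\log(n/k))$. In the spread case, at least one interval $I_c$ is not the primary (most-served) interval of any finger, so its $\Theta(n/k)$ accesses are served predominantly externally (each serving preceded by the same finger's previous access in another interval); summing (F2) over these external servings together with (F1) on $I_c$ again gives cost $\Omega((n/k)\log(n/k))$.

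The main obstacle will be the boundary between the two cases, where the strategy deliberately time-segregates a finger's visits to two intervals (placing all its $I_a$-servings before all its $I_b$-servings within the sub-sequence) to suppress its explicit cross transitions. Handling this requires a global exchange argument: any such segregation forces \emph{other} fingers to cover the complementary accesses in an interleaved fashion, so the $\Omega((n/k)\log(n/k))$ cost is merely displaced, not eliminated. Formalizing this exchange, and choosing the right pigeonhole threshold (how many accesses a finger must serve in each of two intervals before the "concentrated" case applies) so the two cases cleanly cover all strategies, is where the technical work concentrates.
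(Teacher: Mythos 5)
Your construction and the $k$-finger upper bound are the same as the paper's (round-robin interleaving of $k$ increasing block traversals, one finger per block, balanced top of depth $O(\log k)$), but the lower bound argument has a genuine gap, and it is located exactly where the real work is. The fact you call (F2) is false: in an arbitrary reference BST the least common ancestor of two keys is not shallow just because the keys are far apart in key space. In a right-spine tree the LCA of $u<v$ is $u$ itself, at depth $u-1\gg\log k$, so $d_R(u,v)=v-u$ can equal $n/k$ while $d_R(u)+d_R(v)-2\log k$ is of order $n$; conversely, two keys at key-distance $n/k$ can sit at tree distance $2$ as the two children of a common parent. So no per-pair inequality of the form $d_R(u,v)\ge d_R(u)+d_R(v)-2\log k$ is available, and the charging of cross-transitions built on it (and hence the use of (F1) through it) collapses. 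The paper never uses a per-pair bound; its key lemma is aggregate: for the set of crossings made by one fixed finger from block $\bset_i$ to block $\bset_{i'}$, it takes the LCA $R$ of the upper half of the departure keys and uses that at most $3^d$ nodes lie within distance $d$ of $R$, so the distances from the (distinct) accessed keys to this single node $R$ sum to $\Omega(C\log C)$, where $C$ is the number of crossings in the group.

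The second problem is the one you flag yourself: the time-segregated schedule is not a boundary nuisance but the central case, and the ``global exchange argument'' you defer is precisely what is missing; a finger can serve $\Omega(n/k)$ accesses in each of two blocks with a single cross-transition, so the concentrated case as stated proves nothing, and the spread case does not cover the complement. The paper avoids the concentrated/spread dichotomy entirely with a per-phase pigeonhole: within each round the accesses visit $\bset_1,\dots,\bset_k$ in increasing order, so with only $k-1$ fingers some access of the round must be served by a finger arriving from a strictly smaller block. This yields $\ell=n/k$ forced left-to-right crossings, which are bucketed by the triple (finger, source block, target block); there are at most $k^3$ buckets, so for $n=\Omega(k^4)$ buckets of size at least $16$ carry at least $n/(2k)$ crossings, and the aggregate LCA/counting lemma applied per bucket followed by Jensen's inequality gives $\Omega(\frac{n}{k}\log(n/k))$ (note only one crossing per round is charged, which is why the bound is $\frac{n}{k}\log(n/k)$ and not larger). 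To make your proposal work you would need both this replacement for (F2) and this replacement for your case analysis, i.e.\ essentially the paper's proof.
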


Theorem~\ref{thm:hierarchyx} shows that the multi-finger bounds form a fine-grained hierarchy. For small $k$, our online algorithm (Theorem~\ref{thm: online simulation}) can match these bounds (up to a constant factor). However, any online BST (such as Splay or Greedy) must also match 
the dependence of $O(\log k)$ in the upper bound of $O(\log k) \cdot F^{k}(X)$, in order to be constant-competitive.

\subparagraph*{Techniques. The $k$-server problem.}
The $k$-server problem, introduced by Manasse, McGeoch, and Sleator~\cite{Manasse} in 1988 is a central problem in online algorithms: Is there an online deterministic strategy for serving a sequence of requests by moving $k$ servers around, with a total movement cost at most $k$ times the optimal offline strategy? The question in its original form, for arbitrary metric spaces, remains open. Nonetheless, the problem has inspired a wealth of results and a rich set of techniques, many of which have found applications outside the $k$-server problem. A full survey is out of our scope, we refer instead to some prominent results~\cite{Fiat, Koutsoupias, Seiden,Raghavan, Bartal, Bansal}, and the surveys~\cite[\S\,10, \S\,11]{Borodin},~\cite{kserver_survey}. Most relevantly for us, Chrobak and Larmore~\cite{chrobak} gave in 1991, an intuitive, deterministic, $k$-competitive algorithm for \emph{tree metrics}, and the very recently announced breakthrough of Lee~\cite{Lee}, building on Bubeck et al.\ \cite{Bubeck}, gives an $O\bigl((\log{k})^6\bigr)$-competitive randomized algorithm for arbitrary metrics. 

Our online BST algorithm relies on an online $k$-server in an almost black box fashion (the metric space underlying the $k$-server instance is induced by a static reference BST). Thus, improvements for $k$-server would directly yield improvements in our bounds. Despite the depth and generality of $k$-server (e.g.\ it also models the caching/paging problem), to our knowledge it has previously not been related to the BST problem.\footnote{In his work on a generalized $k$-server problem, Sitters~\cite{Sitters} asks whether the work-function (WF) technique~\cite{Koutsoupias} for $k$-server may have relevance for BSTs. Indeed, we can use WF as an $O(k)$-competitive component of our online BSTs, but for our special case of tree-metrics, the technique of~\cite{chrobak} is much simpler. Whether WF may be used (in different ways) to obtain competitive BSTs remains open.}

It is known that in an arbitrary metric space with at least $k+1$ points, no deterministic online algorithm may have a competitive ratio better than $k$. In the randomized case the lower bound $\Omega(\log{k}/\log{\log{k}})$ holds, see e.g.~\cite{kserver_survey}. (The lower bounds thus apply for a metric induced by a BST, for all $k<n$.) These results imply a remarkable separation between the $k$-server and BST problems. Dynamic optimality would require, by Theorem~\ref{thm: LF simulation}, a BST cost of $O(\log{k})\cdot \F^k$. To match this, an online BST may not implicitly perform a deterministic $k$-server execution, since, in that case its overhead would have to be $\Omega{(k)}$. This indicates that improving Theorem~\ref{thm: online simulation} will likely require tools significantly different from $k$-server, which is surprising, given the similarity of the two formulations. 

Our online BST learns the metric induced by the optimal reference tree using a multiplicative weights update (MWU) scheme. The technique has a rich history, and a recent emergence as a powerful algorithmic tool (we refer to the survey of Arora, Hazan, and Kale~\cite{AroraSurvey}). MWU or closely related techniques have been used previously in data structures (including for BST-related questions), see e.g.\ \cite{Blum, BlumBurch, in_pursuit, Kalai}. Specifically, Iacono~\cite{in_pursuit} obtains, using MWU, an online BST that is constant-competitive on sufficiently long sequences, \emph{if any online BST is constant-competitive}. As we relate online BSTs with an offline strategy, the results are not directly comparable.

\subparagraph*{Further open questions and structure of the paper.}

The main open question raised by our work is whether natural algorithms such as Splay or Greedy match the properties of our new BST algorithms. (This must be the case, if Splay and Greedy are, as conjectured, $O(1)$-competitive). We suggest the following easier questions.
Do Splay or Greedy satisfy the unified bound with a time-window of $2$ steps? Does Splay satisfy the lazy finger or the $2$-monotone bounds? 
Does Greedy satisfy the $2$-finger bound?

Except for Theorems~\ref{thm: LF simulation} and \ref{thm:decomp}, the factors in our results are not known to be tight. Improving them may reveal new insight about the power and limitations of the BST model. 

In \S\,\ref{sec2} we describe our offline BST simulation. In  \S\,\ref{sec3} we describe our new family of online algorithms. In \S\,\ref{sec4} we prove the main applications and further observations.

\section{Offline simulation of multi-finger BSTs (Theorem~\ref{thm: LF simulation})} \label{sec2}
Let $k \in {\mathbb N}$ , let $T$ be a BST on $[n]$, and let $X = (x_1,\dots,x_m) \in [n]^m$ be an access sequence. 
A $k$-\emph{finger strategy} consists of a sequence $\vec{f} \in [k]^m$ where $f_t \in [k]$ specifies the finger that serves access $x_t$. Let $\vec{\ell} \in [n]^k$ be the {\em initial vector}, where $\ell_i \in [n]$ gives the initial location of finger $i$.
The cost of strategy $(\vec{f},\vec{\ell})$ is $\F^{k}_{T,\vec{f}, \vec{\ell}}(X) = \sum_{t=1}^m (1+d_T(x_t, x_{\sigma(f_t,t)}))$ where $\sigma(i,t) = \max\{j < t \mid f_j = i\}$ is the location of finger $i$ before time $t$, and $\sigma(i,1) = \ell_i$.  
Let $\F^{k}_T(X) = \min_{\vec{f}, \vec{\ell}} \F^{k}_{T,\vec{f}, \vec{\ell}} (X)$.  
In other words, for a fixed BST $T$ on keyset $[n]$, $\F^{k}_T(X)$ is the \emph{$k$-server} optimum for serving $X$ in the metric space of the tree $T$. (Note that the tree is unweighted, and the distance $d_T(\cdot,\cdot)$ counts the number of edges between two nodes in $T$.)
We define $\F^{k}(X) = \min_T \F^{k}_{T}(X)$.  
It is clear form the definition that $\F^{1}(X) \geq \F^{2}(X) \geq \cdots \geq \F^{n}(X) = m$ for all $X$.

Observe that we implicitly assume that during every access at most one server moves. In addition, we may assume that if some server is already placed at the requested node, then no movement happens. Algorithms with these two restrictions are called \emph{lazy}. As argued in the $k$-server literature (see e.g.\ \cite{kserver_survey}), non-lazy server movements can always be postponed to a later time, keeping track of the ``virtual'' locations of servers. In other words, every $k$-server algorithm can be simulated by a lazy algorithm, without additional cost. We therefore assume throughout the paper that $k$-server/$k$-finger executions are lazy.

Consider some (lazy) $k$-finger execution $(\vec{f},\vec{\ell})$ in tree $T$, for access sequence $X$. We can view $\vec{f}$ as an explicit sequence of elementary steps $\ST = \ST^k_{T, \vec{f}, \vec{\ell}}$, where in each step we move one of the fingers to its parent or to one of its children in $T$. We further allow $\ST$ to contain rotations at a finger in $T$ (although $k$-finger strategies as described above do not generate rotations). The position of a finger is maintained during a rotation.

We show how $\ST$ can be simulated in a standard dynamic BST. If in $\ST$ a finger visits a node, then the (single) pointer in the BST also visits the corresponding node, therefore all accesses are correctly served in the BST. Every elementary step in $\ST$ is mapped to (amortized) $O(\log{k})$ elementary steps (pointer moves and rotations) in the BST. This immediately implies Theorem~\ref{thm: LF simulation}, since, if we can simulate an arbitrary $k$-finger execution, then indeed we can simulate the optimal $k$-finger execution on the best static tree. Assuming that the intial conditions $T$ and $\vec{\ell}$ are known, the steps of $\ST$ are simulated one-by-one, without any lookahead. Thus, insofar as the $k$-finger execution is \emph{online}, the BST execution is also online (this fact is used in \S\,\ref{sec3}).

Let us describe simulation by a standard BST $T'$ of a $k$-finger execution $\ST$ in a BST $T$. The construction is a refinement of the one given by Demaine et al.\ \cite{CombineBST}, see also~\cite{persistent_tries}. (We improve the overhead factor from $O(k)$ to $O(\log{k})$.)
The main ingredients are: (1) Making sure that each item with a finger
on it in $T$ has depth at most $O(\log k)$ in $T'$. (In~\cite{CombineBST}, each finger may have depth up to $O(k)$ in $T'$.) (2) Implementing a deque data structure within $T'$ 
so that each finger in $T$ can move to any of its neighbors, or perform a rotation, with cost
$O(\log k)$ amortized. (In~\cite{CombineBST}, this cost is $O(1)$ amortized.) 

Given these ingredients, to move a finger $f$ to its neighbor $x$ in $T$, we
can simply access $f$ from the root of $T'$ in $O(\log k)$ steps,
and then move $f$ to $x$ in $T'$ in $O(\log{k})$ amortized steps, with a similar approach for a rotation at $f$. Hence, the overhead factor is $O(\log k)$. We sketch the main technical ideas, postponing the details to Appendix~\ref{sec:simulation lazy fingers}. 

Consider the tree $S$ induced by the current fingers and the paths connecting them in $T$. The tree $S$ consists of finger-nodes and non-finger nodes of degree 3 (both types of nodes are called \emph{pseudo-fingers}), and paths of non-finger nodes of degree 2 connecting pseudo-fingers with each other, called \emph{tendons}. Tendons can be compressed into a BST structure that allows their traversal between the two endpoints in $O(1)$ steps. 

We maintain $S$ as a root-containing subtree of our BST $T'$, called the \emph{hand}. Due to the compression of the tendons, the relevant part of $S$ has size $O(k)$. The description so far, including the terminology, is identical to the one in~\cite[\S\,2]{CombineBST}. Our construction differs in the fact that it maintains the hand, i.e.\ the compressed representation of $S$ as a \emph{balanced} BST. This guarantees the reachability of fingers in $O(\log{k})$ instead of $O(k)$ steps, i.e.\ property (1).

When a finger in $T$ moves or performs a rotation, the designation of some (pseudo)finger, or tendon nodes may change. Such changes can be viewed as the insertion or deletion of items in the tendons. As these operations happen only at certain places within the tendons, they can be implemented efficiently. We implement tendons with the same BST-based \emph{deque} as~\cite{CombineBST}. The construction appears to be folklore, we describe it in Appendix~\ref{deque_proof} for completeness. 

We depart again from~\cite{CombineBST}, as the operation affecting the (pseudo)finger and tendon nodes can trigger a re-balancing of the hand, which may again require $O(\log{k})$ operations to fix, i.e.\ property (2). Any efficient balancing strategy (e.g.\ red-black tree) may be used.

\section{Online simulation of multi-finger BSTs (Theorem~\ref{thm: online simulation})} \label{sec3}

Consider the optimal (offline) $k$-finger execution $\vec{f}$ for access sequence $X \in [n]^m$, with static reference tree $T$ and initial finger-placement $\vec{\ell}$. We wish to simulate it by a dynamic \emph{online} BST.
The construction proceeds in two stages: (1) A simulation of $\vec{f}$ by a sequence $\ST$ of steps that describe finger-movements and rotations-at-fingers, starting from an arbitrary BST $T_0$ and arbitrary finger locations $\vec{\ell}_0$. The sequence $\ST$ is \emph{online}, i.e.\ it is constructed without knowledge of the optimal initial state $T$,$\vec{\ell}$, and it correctly serves the sequence $X$, as its elements are revealed one-by-one. (2) A step-by-step simulation of $\ST$ by a standard BST algorithm using the result of \S\,\ref{sec2}. Since $\ST$ is online, the BST algorithm is also online.

As before, we denote by $\F^k(X) = \F^{k}_{T,\vec{f}, \vec{\ell}}(X)$ the cost of the optimal offline execution. Observe that this is exactly the $k$-server optimum with the tree metric defined by $T$ and initial configuration of servers $\vec{\ell}$. If $T$ and $\vec{\ell}$ were known, we could conclude part (1) by running an arbitrary \emph{online} $k$-server algorithm defined on tree metrics. 

To this end, we mention two online $k$-server algorithms, the deterministic ``double coverage'' algorithm of Chrobak and Larmore~\cite{chrobak} (Algorithm~A) and the very recently announced randomized algorithm of Lee~\cite{Lee, Bubeck} (Algorithm~B). It is known that the cost of Algorithms~A, resp.\ B is at most $k$-times, resp.\ $O((\log{k})^{6})$ times $\F^k$. We only describe Algorithm~A, as it is particularly intuitive. To obtain the claimed result, we need the much more complex Algorithm~B. (By using Algorithm~A we get an overall factor $O(k \log{k})$.) 

During the execution of Algorithm~A, given a current access request $x_t$, call those servers (fingers) \emph{active}, whose path to $x_t$ in $T$ does not contain another server. If several servers are in the same location, one of them is chosen arbitrarily to be active. Algorithm~A serves $x_t$ as follows: as long as there is no server on $x_t$, move all active servers one step closer to $x_t$. Observe that as servers move, some of them may become inactive. Algorithm~A (as described) may need to move multiple servers during one access. It can, however, easily be transformed into a lazy algorithm, as discussed in \S\,\ref{sec2}.

Remains the issue that the optimal initial $T$ and $\vec{\ell}$ are not known. Let $B_1, \dots, B_N$ be instances of an online $k$-server algorithm (in our case Algorithm~B), one for each combination of initial tree $T$ and initial server-placement $\vec{\ell}$. Note that $N = O(4^n \cdot {n^k})$. Let $\mset$ be a ``meta-algorithm'' that simulates all $B_j$'s for $j=1,\dots,N$, competitive on sufficiently long input with the best $B_j$. Algorithm~$\mset$ processes $X$ in epochs of length $M = n \log n$, executing in the $i$-th epoch, for $i=1,\dots,\lceil m/M \rceil$, some $B_{\tau(i)}$ according to a (randomized) choice $\tau(i)$.

Suppose that $\vec{\ell}^*$ and $T^*$ describe the state of $B_{\tau(i)}$ chosen by $\mset$ at the beginning of the $i$-th epoch. To switch to the state $\vec{\ell}^*$, $T^*$, $\mset$ takes $O(n \log{n})$ elementary steps: (1) rotate the current tree to a \emph{balanced} tree using any of the fingers ($O(n)$ steps), (2) move all fingers to their location in $\vec{\ell}^*$ ($k$ times $O(\log{n})$ steps), (3) use an arbitrary finger $f$ to rotate the tree to $T^*$ ($O(n)$ steps), (4) move $f$ back to its location in $\vec{\ell}^*$ ($O(n)$ steps). {Since $M = n \log n$}, the cost of switching can be amortized over the epoch. 

The choice of $B_{\tau(i)}$ for epoch $i$ is done according to the multiplicative-weights (MW) technique~\cite{AroraSurvey}, based on the past performance of the various algorithms. Our \emph{experts} are the online executions $B_1, \dots, B_N$, our $i$-th \emph{event} is the portion of $X$ revealed in the $i$-th epoch, the \emph{loss} of the $j$-th expert for the $i$-th event is the \emph{cost} of $B_j$ in the $i$-th epoch. Let $C_{max}$ denote the maximum possible loss of an expert for an event (we may assume $C_{max} \leq n \cdot M$).

It follows from the standard MW-bounds~\cite[Thm.\ 2.1]{AroraSurvey}, that for an arbitrary $\varepsilon \in (0,1)$, the cost of $\mset$ on $X$ is at most 
{$\min_j (1+\varepsilon)\mathcal{C}_j + \displaystyle\frac{C_{max} \cdot \ln{N}}{\varepsilon}$, where $\mathcal{C}_j$ is the cost of expert $B_j$ for the entire $X$; in particular, $B_j$ may correspond to the optimal offline choice $\vec{\ell}$, $T$, in which case $\mathcal{C}_j=O((\log{k})^6) \cdot \F^k(X)$.}

Thus, for e.g.\ $\varepsilon = 1/2$, we obtain that the cost of $\mset$ on $X$ is at most $O((\log{k})^6) \cdot \F^k(X) + O(n^3 \log^2{n} )$. The output of $\mset$ is an \emph{online} sequence $\ST_\mathcal{M}$ of rotations and finger moves, starting from an arbitrary initial state $T_0$ and $\vec{\ell}_0$. Note that while $\mset$ needs to evaluate the costs and current states for all experts in all epochs (an extraordinary amount of computation), only one of the experts interacts with the tree at any time. Thus, $\ST_\mathcal{M}$ is a standard sequence of steps which can be simulated by a standard BST algorithm according to Theorem~\ref{thm: LF simulation}, at the cost of a further $O(\log{k})$ factor. This concludes the proof of Theorem~\ref{thm: online simulation}.

\section{Applications of the multi-finger property}
\label{sec4}
In this section we show that every BST algorithm that satisfies the $k$-finger property also satisfies the unified bound with fixed time-window (Application 1), is efficient on decomposable sequences (Application 2), and on generalized monotone sequences (Application 3).

\subparagraph*{Application 1. Combined space-time sensitivity (Theorem~\ref{thm: strategy for min-dist}).}

Recall the definition of $\UB^{\ell}$ in Theorem~\ref{thm: weak UB} for a sequence $X=(x_{1},\dots, x_{m}) \in [n]^m$. We connect this quantity with the $k$-finger cost, from which Theorem~\ref{thm: strategy for min-dist} immediately follows.

\begin{theorem}
For every $\ell$, $F^{(\ell!)}(X)=O(\ell!)\cdot\kmin(X)$.\label{thm:LF less than kmin}
\end{theorem}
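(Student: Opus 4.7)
The plan is to construct, for each $\ell$, a static reference tree $R_\ell$ (depending on $X$) together with an explicit $\ell!$-finger strategy on $R_\ell$ whose total cost is $O(\ell!)\cdot\UB^{\ell}(X)$. The argument proceeds by induction on $\ell$. The base case $\ell=1$ reduces to $\F^{1}(X)=O(\DF(X))$, which is the \emph{lazy finger} property due to Bose et al.~\cite{BoseDIL14}. For the inductive step $(\ell-1)\to\ell$, the natural idea is to partition the $\ell!$ fingers into $\ell$ groups of $(\ell-1)!$ each, where group $g_i$ maintains a \emph{designated} finger at position $x_{t-i}$ at time $t$. When $x_t$ is requested, we identify the index $i^{\ast}\in[\ell]$ that minimizes the $\UB^{\ell}_t$ expression, and move $g_{i^{\ast}}$'s designated finger from $x_{t-i^{\ast}}$ to $x_t$. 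If $R_\ell$ satisfies $d_{R_\ell}(x_t,x_{t'})=O(\log(|x_t-x_{t'}|+\rho_t(x_{t'})))$ for all relevant pairs, this move costs exactly $O(\UB^{\ell}_t)$.

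A plausible candidate for $R_\ell$ is a biased balanced BST whose item weights reflect the working-set values along $X$, constructed offline. This combines keyspace balance with time-proximity, yielding the desired distance bound; a precise construction can draw from the biased/weighted-BST literature. After serving $x_t$, the designated positions of the groups must be ``rotated forward'' by one index to restore the invariant at time $t+1$. This internal rearrangement is itself a multi-finger maintenance problem on a window of size $\ell-1$ within each group, and is handled by recursively invoking the inductive hypothesis: each of the $\ell$ groups has $(\ell-1)!$ fingers, exactly enough for the recursion.

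The main obstacle is the amortized accounting. Naively, $\ell$ groups each running an $(\ell-1)!$-finger sub-strategy would yield total cost $O(\ell!)\cdot\UB^{\ell-1}$, which is \emph{larger} than the target $O(\ell!)\cdot\UB^{\ell}$ since $\UB^{\ell}\le\UB^{\ell-1}$ (a larger window can only help). Closing this gap requires a refined charging: when the optimum for $\UB^{\ell}_t$ is achieved at index $i^{\ast}$, only group $g_{i^{\ast}}$ pays the full $\UB^{\ell}_t$ cost at time $t$, while the reshuffling of the other groups must be charged to earlier accesses (where those groups' designated positions were originally chosen) rather than to $x_t$. Making this charging precise, either via a potential function tracking each group's ``misalignment'' from the current window, or by carefully defining the sub-sequence fed to each recursive instance so that its $\UB^{\ell-1}$ matches the parent's $\UB^{\ell}$ up to constants, is the technical heart of the argument.
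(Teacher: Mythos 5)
Your reduction to a tree-distance bound (choosing a reference tree in which $d_R(x,y)=O(\log|x-y|)$, and dropping the $\rho$ term since it is at most $\ell$) is sound and matches the first step of the paper's argument, which uses a treap to get $\min_T\kmintree_T(X)=O(\kmin(X))$. The problem is everything after that. Your strategy hinges on the invariant that at time $t$ the $\ell$ groups keep designated fingers on the last $\ell$ accessed positions $x_{t-1},\dots,x_{t-\ell}$, and on restoring this invariant after each access. That restoration is not a cheap bookkeeping step: after serving $x_t$ from $x_{t-i^\ast}$, the window slides and a position inside the new window loses its finger whenever $i^\ast<\ell$, so some group must re-acquire coverage of a recent access; the distance it must travel is governed by quantities like $d_R(x_{t-i},x_{t+1-i})$, i.e.\ by small-window (hence \emph{larger}) terms of the form $\UB^{j}$ with $j<\ell$, not by $\UB^{\ell}_t$. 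You recognize exactly this: recursing with the inductive hypothesis inside each group gives $O(\ell!)\cdot\UB^{\ell-1}$, which overshoots the target since $\UB^{\ell}\le\UB^{\ell-1}$, and you defer the fix to an unspecified potential function or charging scheme. That deferred step is the entire content of the theorem; as written, the proposal is a plan with the decisive lemma missing, and it is not clear the sliding-window invariant can be patched at all, because it intrinsically forces per-step re-synchronization whose cost has no bound in terms of $\UB^{\ell}$.

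The paper avoids this trap by never maintaining fingers on the last $\ell$ accesses. Instead it builds a \emph{virtual tree} $\tset$ on the requests, where the parent of $(t,x_t)$ is the closest request within the window $[t-\ell,t)$ and each edge is weighted by the reference-tree distance, so that the total edge weight of $\tset$ is exactly $\kmintree_T(X)$. It then decomposes $\tset$ recursively into $i$-cores and $i$-bodies, proves that at any time at most $\ell$ edges are active and that the $(i-1)$-bodies below an $i$-body split into $i-1$ families with pairwise disjoint time spans, and routes $\nf(i)=1+(i-1)\nf(i-1)\le i!$ fingers so that one finger walks each core and each family of sub-bodies is served sequentially by a reused group of $\nf(i-1)$ fingers. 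The accounting is then local to the virtual tree: every edge of $\tset$ is traversed $O(i!)$ times, giving $\cost(\tset)\le 2\,(\ell!)\,w_{\tset}(\tset)=2\,(\ell!)\,\kmintree_T(X)$ with no cross-time charging at all. If you want to salvage your write-up, the missing charging argument is essentially this change of viewpoint: charge finger movement to edges of the proximity (virtual) tree rather than to time steps of a sliding window.
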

Since we are only concerned with the case when $\ell$ is constant, we may drop the term $\rho_t(x_{t'})$ in the definition of $\UB^{\ell}$ (whose value is always between $1$ and $\ell$). 

We prove Theorem~\ref{thm:LF less than kmin} via
another bound in which distances are measured in a static reference BST:\quad
$
\displaystyle\kmintree_{T}(X)=\sum_{i=1}^{m}\min_{i-\ell\le j<i}\left\{d_{T}(x_{i},x_{j}) + 1\right\}$. \footnote{We let $x_0$ denote the root of $T$, and distances involving negative indices are defined to be $+\infty$.}

\begin{lemma}
$\min_{T}\kmintree_{T}(X)=O(\kmin(X))$.\label{thm:distree to dist}\end{lemma}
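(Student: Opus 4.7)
The plan is to construct the required static BST via the probabilistic method, using random BSTs on $[n]$. For each $i\in[m]$, let $\tau(i)\in[i-\ell, i)$ be a minimizer in the definition of $\kmin(X)$; since $\rho_t(x_{\tau(i)})\in[1,\ell]$, the convention $\log_2\max\{2,\cdot\}$ gives $\sum_i \log(|x_i-x_{\tau(i)}|+2) = O(\kmin(X))$, so it suffices to bound $\sum_i d_T(x_i, x_{\tau(i)})$ by $O\!\left(\sum_i \log(|x_i-x_{\tau(i)}|+2)\right)$ for some static BST $T$. Let $T$ be a random BST on $[n]$, built by inserting keys in a uniformly random order (equivalently, a treap with i.i.d.\ uniform priorities).

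The crux of the argument is the following claim: for any $a,b\in[n]$, $E_T[d_T(a,b)] = O(\log(|a-b|+2))$, uniformly in $n$. This bound is not immediate, since the path from $a$ to $b$ in $T$ may pass through keys outside $[\min(a,b),\max(a,b)]$, so one cannot naively reduce to a random BST on the restricted range. It follows instead from a recurrence on the root $r$ of $T$ (uniformly distributed over $[n]$): when $r$ lies outside $[a,b]$ (the dominant case for large $n$), both keys fall in the same subtree and the problem recurses on a random BST of smaller size; when $r\in\{a,b\}$ or lies strictly between them, the conditional distance is controlled by the expected depth of an extremal key in a random BST on $k$ keys, which is $H_k-1$. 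An explicit calculation for rank distance one yields $E_T[d_T(a,a+1)] = 3/2 - 1/n$, and the argument extends inductively to arbitrary rank distances.

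Granting the claim, linearity of expectation gives
\[
E_T\!\Big[\sum_{i=1}^m d_T(x_i, x_{\tau(i)})\Big] \;=\; \sum_i O(\log(|x_i-x_{\tau(i)}|+2)) \;=\; O(\kmin(X)),
\]
so some deterministic realization $T^*$ of $T$ achieves $\sum_i d_{T^*}(x_i, x_{\tau(i)}) = O(\kmin(X))$. For this tree, $\kmintree_{T^*}(X)\le \sum_i (d_{T^*}(x_i,x_{\tau(i)})+1) \le O(\kmin(X)) + m = O(\kmin(X))$, using $\kmin(X)\ge m$. The main obstacle is verifying the expected-distance claim rigorously: although it is folklore in the random-BST / treap literature, the proof must carefully account for paths that visit keys outside the pair's interval, so that no direct ``interval-restriction'' shortcut is available and one really must execute the recurrence on the root choice.
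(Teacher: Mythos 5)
Your proposal follows essentially the same route as the paper: the paper also takes a random (Seidel--Aragon) BST, invokes $E[d_{\tilde T}(i,j)]=\Theta(\log|i-j|)$ (citing their Theorem~4.7 rather than re-deriving it by a recurrence), and concludes by linearity of expectation that some fixed tree attains the bound. The only quibble is your parenthetical value $E_T[d_T(a,a+1)]=3/2-1/n$, which holds only when one of the two keys is extremal (in general it is $2-\tfrac{1}{a+1}-\tfrac{1}{n-a+1}$), but this side remark does not affect the argument.
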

\begin{proof}
By \cite[Thm.\ 4.7]{SeidelA96}, there is a randomized BST $\tilde{T}$ such that the expected distance between
elements $i$ and $j$ is $E[d_{\tilde{T}}(i,j)]=\Theta(\log|i-j|)$. Therefore, 
\begin{align*}
\min_{T}\kmintree_{T}(X)  \le  E[\sum_{i=1}^{m}\min_{i-\ell\le j<i}\{d_{\tilde{T}}(x_{i},x_{j})+1\}]
 =  \sum_{i=1}^{m}E[\min_{i-\ell\le j<i}\{d_{\tilde{T}}(x_{i},x_{j})+1\}]\\
 \quad \le  \sum_{i=1}^{m}\min_{i-\ell\le j<i}\{E[d_{\tilde{T}}(x_{i},x_{j})+1]\}
 =  \sum_{i=1}^{m}\min_{i-\ell\le j<i}\{O(\log|x_{i}-x_{j}|)\}
 =  O(\kmin(X)). \quad \quad \qedhere
\end{align*}
\end{proof}
It is now sufficient to show that $\F^{(\ell!)}_{T}(X)=O(\ell!)\cdot\kmintree_{T}(X)$, for all $X$ and $T$, i.e.\ to describe an $(\ell !)$-finger strategy in $T$ for serving $X$ with the given cost.

At a high level, our strategy is the following: (1) Define a \emph{virtual tree} $\tset(X)$ whose nodes are the \emph{requests} $x_i$ for $i=1,\dots,m$. The virtual tree captures the \emph{proximities} between the requests, with each $x_i$ having as parent the \emph{nearest} request $x_j$ within a fixed time-window before time $i$. Edges in $\tset(X)$ are given as weights the distances between requests in $T$. Note that the virtual tree is not necessarily binary. (2) Define a recursive structural decomposition of the tree $\tset(X)$, with the property that certain blocks of this decomposition contain requests in non-overlapping time-intervals. (3) Describe a multi-finger strategy on $\tset(X)$ for serving the requests, which induces a multi-finger strategy on $T$ with the required cost. (The strategy takes advantage of the decomposition in (2).)

We describe the steps more precisely, deferring some details to Appendix~\ref{weakuniapp}.

\newcommand{\start}{\mathsf{start}}
\newcommand{\myend}{\mathsf{end}}
\newcommand{\nf}{\mathsf{nf}}

\subparagraph*{The virtual tree.}

Given a number $\ell$, $X \in [n]^m$, and a BST $T$ over $[n]$ with root $r$, the virtual tree $\tset = \tset{(\ell,T,X)}$ is a rooted tree with vertex-set $\{(i,x_{i}) \mid i\in[m]\}\cup\{(0,x_{0})\}$, where $x_0 = r$ is the root of $T$ and $(0,x_{0})$ is the root of $\tset$. 
The \emph{parent} of a non-root vertex $(i,x_{i})$ in $\tset$ is 
$(j,x_{j})=\arg\min_{j \in [i-\ell,i)}\{d_{T}(x_{i},x_{j})\}$. In words, $(j,x_j)$ is the request at most $\ell$ steps before $(i,x_i)$, closest to $x_i$ (in $T$).

For each edge $e=((j,x_{j}),(i,x_{i}))$, we define the weight $w_{\tset}(e)=d_{T}(x_{i},x_{j})+1$.
For each subtree $H$ of $\tset$, let $w_{\tset}(H)$
be the total weight of its edges. Observe that $w_{\tset}(\tset)=\kmintree_{T}(X)$.

\subparagraph*{Structure and decomposition of the virtual tree.} We say that a vertex $(i,x_{i})$
is \emph{before} (or \emph{earlier than}) $(j,x_{j})$ if $i<j$, otherwise it is \emph{after} (or \emph{later than}). 
For every subtree $H$ of $\tset$ we denote the earliest vertex in $H$ as $\start(H)$ and the latest vertex in $H$ as $\myend(H)$. The \emph{time-span} of $H$, denoted $\spn(H)$, is $(t_{1},t_{2}]$ where $(t_{1},x_{t_{1}})=\start(H)$ and $(t_{2},x_{t_{2}})=\myend(H)$, and $H$ is \emph{active} at time $t$ if $t \in \spn(H)$.

We describe a procedure to decompose $\tset{(\ell,T,X)}$ into directed paths (for the purpose of analysis), defining the key notions of \emph{$i$-body }and \emph{$i$-core}. The procedure is called on a subtree $H$ of $\tset$, and the top-level call is $\textsf{decompose}(\tset,\ell)$. 
\vspace{-0.05in}

\noindent\rule{\textwidth}{0.4pt}
\vspace{-0.06in}
\noindent\textbf{procedure} $\textsf{decompose}(H,i)$:
\vspace{-0.02in}
\begin{enumerate}
\item If $H$ has no edges, return.
\item Let $C(H)$ be the path from $\start(H)$ to $\myend(H)$. 
\item Call $C(H)$ an $i$-core of $H$, and call $H$ the $i$-body of $C(H)$.
\item For each connected component $H'$ in $H\setminus C(H)$ invoke $\textsf{decompose}(H',i-1)$.
\end{enumerate}
\vspace{-0.05in}
\noindent\rule{\textwidth}{0.4pt}

Observe that $\tset$ itself is an $\ell$-body. 
Each $i$-body $H$ consists of its $i$-core
$C(H)$ and a set of $(i-1)$-bodies 
that are connected
components in $H\setminus C(H)$. For each of those $(i-1)$-bodies
$H'$, we say that $H$ is a \emph{parent} of $H'$, defining a tree-structure over bodies. Observe that the number of ancestor bodies of an $i$-body (excluding itself) is $\ell-i$. We make a sequence of further structural observations about the virtual tree and its decomposition.

\begin{lemma}[\ref{app:struc_decomp}]
\label{lem:struc_decomp}
\begin{enumerate}[(i)]
\item At every time $t$, there are at most
$\ell$ active edges in $\tset{(\ell,T,X)}$. 
\item The $i$-cores of the decomposition, for $1\le i\le \ell$, partition the vertices of $\tset$.
\item Let $H$ be an $i$-body. At any time during the time-span of $H$, among the $(i-1)$-bodies with parent $H$ at most $i-1$ are active.\label{lem:i-1 active children}
\item Let $H$ be an $i$-body. The $(i-1)$-bodies with parent $H$ can be partitioned into $(i-1)$ groups ${\cal H}_{1},\dots,{\cal H}_{i-1}$ such that, for $1\le j\le i-1$
and $H',H''\in{\cal H}_{j}$, the time-spans of $H'$ and $H''$ are disjoint.\label{thm:separate group}
\end{enumerate}
\end{lemma}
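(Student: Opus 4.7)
I will prove the four parts in the order (i) $\to$ (iii) $\to$ (ii) $\to$ (iv); the crux is a single global edge-counting argument for (iii), and the others follow cleanly.

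For \emph{part (i)} I use the window constraint directly. Every non-root vertex has a parent with an earlier index within a window of $\ell$, so every edge $e=((j,x_j),(i,x_i))$ of $\tset(\ell,T,X)$ satisfies $i-\ell \le j < i$ and has time-span $(j,i]$. Hence $e$ is active at time $t$ iff $j<t\le i$, which forces the child index $i\in\{t,t+1,\dots,t+\ell-1\}$. Since each vertex has a unique parent-edge, at most $\ell$ edges are active simultaneously.

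For \emph{part (iii)} I use two structural observations. \textit{(a)} For any body $H'$, the indices along $C(H')$ strictly increase from the index of $\start(H')$ to that of $\myend(H')$, so $C(H')$ contains \emph{exactly one} active edge at time $t$ whenever $t\in\spn(H')$ and zero active edges otherwise. \textit{(b)} Cores of distinct bodies are vertex-disjoint: this is immediate from the procedure, since sub-bodies are components of $H\setminus C(H)$ and hence disjoint from $C(H)$, while two bodies unrelated in the body-tree sit in disjoint subtrees of some common ancestor body and therefore have disjoint cores. Combining these, the number of bodies active at $t$ equals the number of active core edges, which is at most the total number of active edges, i.e.\ at most $\ell$ by (i). Now if $H$ is an $i$-body active at $t$, then each of its $\ell-i$ strict ancestors in the body-tree also contains $\spn(H)$ and is active, giving $\ell-i+1$ active ancestor bodies (counting $H$ itself). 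Any active $(i{-}1)$-body child of $H$ is a further active body, leaving at most $\ell-(\ell-i+1)=i-1$ of them.

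\emph{Part (ii)} then falls out of (iii) applied with $i=1$: a $1$-body has no active sub-body at any time, hence no sub-body at all, so the recursion always terminates at level $1$ with $C(H)=H$ for every $1$-body; combined with the vertex-disjointness of cores from observation \textit{(b)}, every vertex of $\tset$ lies in exactly one $i$-core with $i\in\{1,\dots,\ell\}$. Finally, \emph{part (iv)} is a standard interval-graph coloring: the time-spans of the $(i{-}1)$-bodies with parent $H$ are half-open intervals whose maximum overlap is at most $i-1$ by (iii), so the corresponding interval graph has clique number at most $i-1$; a greedy sweep-line coloring (equivalently, perfectness of interval graphs) yields an $(i{-}1)$-coloring whose color classes are the desired groups $\mathcal{H}_1,\dots,\mathcal{H}_{i-1}$.

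\emph{Main obstacle.} The only subtle point is the apparent circular dependence of (ii) on (iii). I resolve this by noting that observation \textit{(b)}---the edge-disjointness of cores used in the proof of (iii)---is a direct consequence of the recursive construction and does not require the full partition claim of (ii); once (iii) is available, (ii) is essentially automatic. A minor bookkeeping issue is handling trivial singleton components produced by the procedure, but these contribute no edges to any count and can be absorbed into the partition as degenerate cores.
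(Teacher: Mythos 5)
Parts (i), (iii) and (iv) of your proposal are correct and, in substance, are the paper's own argument: for (iii) the paper exhibits, at a time when $i$ children of an $i$-body $H$ are active, the $\ell+1$ edge-disjoint cores of these children, of $H$ itself, and of the $\ell-i$ ancestor bodies of $H$, each carrying one active edge, contradicting (i); your ``budget'' formulation (at most $\ell$ active bodies in total, of which $\ell-i+1$ are already accounted for by $H$ and its ancestors) is the same count, and your greedy interval-colouring proof of (iv) coincides with the paper's.

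The gap is in (ii). From (iii) with $i=1$ you infer ``no active sub-body at any time, hence no sub-body at all, so $C(H)=H$ for every $1$-body''. This inference is not valid: a component of $H\setminus C(H)$ consisting of a single vertex has empty time-span, is never active, and is therefore not excluded by (iii); all that (iii) yields is that every such component is a singleton. Your fallback of absorbing singletons ``as degenerate cores'' does not repair this: a singleton hanging off the core of a $1$-body would be a core at level $0$, outside the range $1\le i\le \ell$ required by the statement, and its existence would also contradict your claim $C(H)=H$. What is actually needed (and true) is that level-$0$ singletons cannot occur, and this takes one more application of the counting in (i): if $u$ were a vertex of a $1$-body $H$ with $u\notin C(H)$, then at time $t_u$ the edge from $u$'s parent to $u$ is active, and so is exactly one core edge in each of $H$ and its $\ell-1$ ancestor bodies (all active at $t_u$, since $t_u\in \spn(H)$, $u$ being later than the earliest vertex of $H$); these $\ell+1$ edges are pairwise distinct because $u$ lies on none of the cores, contradicting (i). (For $i\ge 2$ singleton components are harmless: they are $(i-1)$-bodies with $i-1\ge 1$ whose core is the vertex itself.) For comparison, the paper proves (ii) directly from (i), via a nested chain of bodies $H=H_0\subset H_1\subset\dots\subset H_\ell$ whose edge-disjoint cores each contain an active edge at a common time; that establishes the same content as your route through (iii) with $i=1$ (no body below level $1$ has an edge) and is similarly terse about the singleton corner case, so if you keep your route you should add the extra count above explicitly — it is also what justifies the later use of the fact that a $1$-body coincides with its core.
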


\subparagraph*{The strategy for moving fingers.}

For 
two vertices $(i,x_{i})$ and $(j,x_{j})$ in the virtual tree $\tset = \tset{(\ell,T,S)}$, \emph{moving a finger} $f$ from $(i,x_{i})$ to $(j,x_{j})$
means the following: let $P=((i_{1},x_{i_{1}}),\dots,(i_{k},x_{i_{k}}))$
be the unique path from $(i,x_{i}) = (i_{1},x_{i_{1}})$ to $(j,x_{j})=(i_{\ell},x_{i_{\ell}})$ in $\tset$.
For $j=1,\dots,k-1$, we iteratively move a finger $f$ from $x_{i_{j}}$
to $x_{i_{j+1}}$ using $d_{T}(x_{i_{j}},x_{i_{j+1}})$ steps. Hence,
the total number of steps is at most $w_{\tset}(P)$. 

By \emph{serving an access in an $i$-body $H$}, we mean that, for each $(j,x_{j})\in V(H)$, at time $j$ there is a finger move
to $x_{j}$ in $T$. For each $i \le \ell$, let $\nf(i)$ be the number
of fingers used for serving accesses in an $i$-body. We define $\nf(1)=1$
and $\nf(i)=1+(i-1)\cdot\nf(i-1)$, thus, by induction, $\nf(i)\le i!$ for all $i \leq \ell$.

We now describe the strategy for moving fingers. Let $F$ be a set
of fingers where $|F|=\nf(\ell)$. At the beginning all fingers are at $(0,x_0)$. (In the reference tree $T$, all fingers are initially at the root $x_0$.) 
For $1\le j\le m$, we call $\access(\tset,F,(j,x_{j}))$, defined below for an $i$-body $H$, set of fingers $F$, and $u\in V(H)$. 

\noindent\rule{\textwidth}{0.4pt}
\vspace{-0.06in}
\noindent\textbf{procedure} $\textsf{access}(H,F,u)$: 
\vspace{0.05in}

Let $C = C(H)$ be the $i$-core of $H$, with $C=\{u_{1},\dots,u_{k'}\}$, where $u_{k}$ is before $u_{k+1}$ for each $k$. 
For $1\le j\le i-1$, let
${\cal H}_{j}$ be the $j$-th group of the $(i-1)$-bodies with parent $H$ (${\cal H}_{j}$ defined in Lemma~\ref{lem:struc_decomp}(iv)).
The $i$-bodies in ${\cal H}_{j}$ are ordered by their time-span.
That is, suppose ${\cal H}_{j}=\{H'_{1},\dots,H'_{\ell'}\}$. For
each $\ell$, if $\spn(H'_{\ell})=(a_{1},a_{2}]$ and $\spn(H'_{\ell+1})=(b_{1},b_{2}]$,
then $a_{2}\le b_{1}$. Fingers in $F$ are divided into $i$ groups
$F_{1},\dots,F_{i-1},\{f_{i}\}$, where $|F_{j}|=\nf(i-1)$, for $j\le i-1$,
and $f_{i}$ is a single finger.\smallskip
\vspace{-0.06in}
\begin{enumerate}
\item If $u\in C$, then move $f_{i}$ to $u$ from the predecessor node of $u$ in $C$. If $u=\myend(H)$, then move $F$ from $\myend(H)$ to $\start(H)$.

\item 
{Else let $u\in V(H')\setminus V(C)$ where $H'\in{\cal H}_{j}$. If $u = \start(H')$ and $H'$ is the first $(i-1)$-body in ${\cal H}_j$, move $F_j$ from $\start(H)$ to $\start(H')$. Perform  $\access(H',F_{j},u)$. If $u=\myend(H')$ and if $H'$ is the last in ${\cal H}_{j}$ then move $F_{j}$
from $\start(H')$ to $\myend(H)$.
Otherwise, if $u=\myend(H')$ and there is a next $(i-1)$-body $H''$ in ${\cal H}_{j}$, then
move $F_{j}$ from $\start(H')$ to $\start(H'')$.}

\end{enumerate}
\vspace{-0.05in}
\noindent\rule{\textwidth}{0.4pt}

In order to give the reader more intution, we give an alternative description. A $1$-body $H$ consists only of its $1$-core $C(H)$. We use one finger and move it through $C(H)$. For $i > 1$, an $i$-body $H$ decomposes in its $i$-core $C(H)$ and $i-1$ groups ${\cal H}_1$ to ${\cal H}_{i-1}$ of $(i-1)$-bodies. Initially, we have $\nf(i)$ fingers on $\start(H)$. We use one finger to move down the $i$-core. We use a group $F_j$ of $\nf(i-1)$ fingers for the $j$-group ${\cal H}_j$. Let $H_1$, \ldots $H_p$ be the $(i-1)$-cores in ${\cal H}_j$. We first move $F_j$ to $\start(H_1)$. Then we use the strategy recusively to move $F_j$ through $H_1$. Once the group of fingers has reached $\myend(H_1)$, we move them to $\start(H_2)$, and so on. Once the fingers have reached $\myend(H_p)$, we move them back to $\start(H)$. We coordinate (this is not really necessary) the movement of the fingers by the order of the accesses in the access sequence $X$.

From the description of $\access$ it is clear that all
accesses in $\tset$ are served and that $\nf(\ell)$ fingers are sufficient.
It remains to bound the total number of steps all fingers move. For
an $i$-body $H$, let $\cost(H)$ be the total cost of calling
$\access(H,F,u)$ for all $u\in H$. 
Let ${\cal H}$ denote the set of $(i-1)$-bodies with parent $H$. Let $C^{+}(H)$ denote the $i$-core $C(H)$ augmented with the edges connecting $C(H)$ to the $(i-1)$-bodies in ${\cal H}$. 
Then:
\begin{lemma}[\ref{app:finger}]
$\cost(H)\le2\cdot\nf(i)\cdot w_{\tset}(C^{+}(H))+\sum_{H'\in{\cal H}}\cost(H')$.\label{thm:cost_H}\end{lemma}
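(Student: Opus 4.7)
The plan is to directly account for the total cost of procedure $\access(H, F, \cdot)$ by splitting it into four pieces: (a) movements of the single finger $f_{i}$ along the $i$-core $C(H)$ (step~1 of the procedure, for $u \in C$); (b) the final ``return move'' at time $\myend(H)$ that relocates all $\nf(i)$ fingers from $\myend(H)$ back to $\start(H)$; (c) for each group $j \in \{1,\ldots,i-1\}$, the external inter-body moves of $F_j$, i.e.\ the initial step from $\start(H)$ to $\start(H_1')$, the transitions $\start(H_k') \to \start(H_{k+1}')$, and the final step $\start(H_{\ell_j'}') \to \myend(H)$; and (d) the recursive costs $\access(H', F_j, \cdot)$ for each $H' \in \mathcal{H}$. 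Part (d) is exactly $\sum_{H' \in \mathcal{H}} \cost(H')$ by definition, so the goal reduces to bounding (a)+(b)+(c) by $2\nf(i) \cdot w_{\tset}(C^+(H))$.

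Parts (a) and (b) are direct. In (a), $f_i$ traces each edge of $C(H)$ exactly once, contributing $w_{\tset}(C(H))$. In (b), each of the $\nf(i)$ fingers traverses the core once in reverse, contributing $\nf(i)\cdot w_{\tset}(C(H))$. Together (a)+(b) $\le 2\nf(i) \cdot w_{\tset}(C(H)) \le 2\nf(i)\cdot w_{\tset}(C^{+}(H))$, so these fit inside the target budget.

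The heart of the argument is bounding (c). Each group $F_j$ of $\nf(i-1)$ fingers follows a walk inside the tree $C^+(H)$ that visits $\start(H), \start(H_1'), \ldots, \start(H_{\ell_j'}'), \myend(H)$ in order along shortest paths. I would charge the cost to edges of $C^+(H)$: each hanging edge $(c_k, \start(H_k'))$ is used exactly twice (to enter and leave the leaf $\start(H_k')$), contributing in total at most $2\nf(i-1) \cdot w_{\tset}(\text{hangings of } H) \le 2\nf(i)\cdot w_{\tset}(\text{hangings of } H)$ across all groups. For the core edges, the tour stays in $C^+(H)$ and, when aggregated across the $i-1$ groups, is bounded via an Eulerian-tour style argument so that the total traversals of $C(H)$ amount to at most $(i-1)\nf(i-1)$ full passes; using $(i-1)\nf(i-1) = \nf(i) - 1$ this contributes at most $(\nf(i)-1)\cdot w_{\tset}(C(H))$, which together with (a)+(b) lies within the budget.

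The main obstacle is precisely this core-edge charge: bodies within a single group are visited in the order dictated by their time-spans, which need not match the spatial order of their anchors along $C(H)$, so a naive per-group tour can zigzag and revisit core edges many times. To handle this, I would use Lemma~\ref{lem:struc_decomp}(\ref{thm:separate group}) (pairwise disjoint time-spans inside each group) together with Lemma~\ref{lem:struc_decomp}(\ref{lem:i-1 active children}) (at most $i-1$ simultaneously active bodies) to amortize the zigzag traversals of core edges across groups, exploiting the slack in the identity $\nf(i) = 1 + (i-1)\nf(i-1)$. Assembling the four contributions gives $\cost(H) \le 2\nf(i) \cdot w_{\tset}(C^{+}(H)) + \sum_{H' \in \mathcal{H}} \cost(H')$, as stated.
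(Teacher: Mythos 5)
Your cost decomposition is the same as the paper's: the recursive calls give $\sum_{H'\in{\cal H}}\cost(H')$, the core finger plus the final return account for $\nf(i)+1$ passes of $C(H)$, each hanging edge of $C^{+}(H)$ is crossed twice by a group of $\nf(i-1)$ fingers, and the remaining budget is one forward core pass per group, i.e.\ $(i-1)\nf(i-1)=\nf(i)-1$ further passes. Up to that point you match the paper's accounting. The gap is exactly at the step you yourself flag as the crux: the claim that the inter-body moves of a group $F_j$ cost at most one forward pass of $C(H)$. Within a group the bodies are visited in time order, but their attachment vertices on the core (the parents of the $\start(H')$'s) need not appear in that order along $C(H)$ when consecutive bodies of the group start within $\ell$ time steps of each other, so a move from $\start(H'_k)$ to $\start(H'_{k+1})$ can re-cross core edges backward and then forward again. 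You acknowledge this, but the proposed repair --- an ``Eulerian-tour style argument'' plus an amortization across groups ``exploiting the slack in $\nf(i)=1+(i-1)\nf(i-1)$'' --- is never carried out, and as stated there is nothing to exploit: in your own tally the budget $2\nf(i)\cdot w_{\tset}(C^{+}(H))$ is exactly exhausted ($2\nf(i)$ crossings of every core edge plus $2\nf(i-1)$ crossings of every hanging edge), so each extra backward crossing of a possibly heavy core edge would have to be paid either by groups that happen to be empty or by the $2(\nf(i)-\nf(i-1))$ headroom on the (possibly much lighter) hanging edges, and neither is argued. Lemma~\ref{lem:struc_decomp}(iii)--(iv) cannot do this for you on their own: they constrain time-spans only and give no control over where on the core the bodies attach.

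For comparison, the paper's proof of Lemma~\ref{thm:cost_H} is the ``monotone'' version of your accounting and is blunter: it asserts that the non-recursive operations amount to moving the $\nf(i)$ fingers from $\start(H)$ to $\myend(H)$ and back along $C(H)$, plus traversing each connecting edge twice with $\nf(i-1)$ fingers --- in other words it implicitly treats each group's forward journey as a single monotone sweep of the core and does not discuss the possible disorder of the attachment points at all. So you have correctly isolated the delicate point of the argument, but your sketch does not close it; to make this a proof you must either show that within each group the attachment vertices are visited consistently with the core order (or adjust the grouping/movement rule so that this holds), or give an explicit charging scheme for the backward crossings. The appeal to an unspecified amortization does not suffice.
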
 

By induction, we obtain $\cost(H)\le2\cdot i!\cdot w_{\tset}(H)$. (For $i=1$ we have $H = C(H)$.) 

\noindent Since $\nf(\ell)\le \ell!$, we
have that $\F^{(\ell!)}_{T}(X)\le \F^{\nf(\ell)}_{T}(X) \leq \cost(\tset)$.
By the previous claim we have $\cost(\tset)\le 2\cdot (\ell!)\cdot w_{\tset}(\tset)=2 \cdot (\ell!)\cdot\kmintree_{T}(X)$, concluding the proof.

\subparagraph*{Application 2. Decomposable sequences (Theorem~\ref{thm:decomp}).}

Let $\sigma = (\sigma(1),\ldots, \sigma(n))$ be a permutation. 
For $a,b: 1 \leq a< b \leq n$, we say that $[a,b]$ is a {\em block} of $\sigma$ if $\set{\sigma(a),\ldots, \sigma(b)} = \set{c,\ldots, d}$ for some integer $c,d \in [n]$.
A {\em block partition} of $\sigma$ is a partition of $[n]$ into $k$ blocks $[a_i, b_i]$ such that $(\bigcup_i [a_i, b_i]) \cap \N = [n]$.   
For such a partition, for each $i=1,\ldots, k$, consider a permutation $\sigma_i\in S_{b_i -a_i+1}$ obtained as an order-isomorphic permutation when restricting $\sigma$ on $[a_i, b_i]$.
For each $i$, let $q_i \in [a_i, b_i]$ be a representative element of $i$.  
The permutation $\tilde \sigma \in [k]^k$ that is order-isomorphic to $\set{\sigma(q_1),\ldots, \sigma(q_k)}$ is called a {\em skeleton}  of the block partition. 
We may view $\sigma$ as a {\em deflation} $\tilde \sigma[\sigma_1, \ldots, \sigma_k]$. 

A permutation $\sigma$ is $d$-decomposable if 
$\sigma = (1)$, or  
$\sigma = \tilde \sigma[\sigma_1,\ldots, \sigma_{d'}]$ for some $d' \leq d$ and each permutation $\sigma_i$ is $d$-decomposable (we refer to~\cite{FOCS15} for alternative definitions). Permutations that are $2$-decomposable are called \emph{separable}~\cite{separable}, and this class includes preorder traversal sequences~\cite{ST85} as a special case.

To show Theorem~\ref{thm:decomp}, it is sufficient to define a reference tree $T$ and a one-finger strategy for serving a $d$-decomposable sequence $X$ in $T$ with cost $O(\log{d}) \cdot |X|$. (Appendix~\ref{app:decomp}.)

Combined with the Iacono-Langerman result~\cite{LI16} that Greedy BST has the lazy finger property, we conclude that 
the cost of Greedy on any $d$-decomposable sequence $X$ is at most $O(\log d)\cdot |X|$.  
The result is tight and strengthens our earlier bound~\cite{FOCS15} of $|X| \cdot 2^{O(d^{2})}$.

\subparagraph*{Application 3. Generalized monotone sequences.}

A sequence $X \in [n]^m$ is \emph{$k$-monotone}, if it can be partitioned into $k$ subsequences (not necessarily contiguous), all increasing or all decreasing. This property has been studied in the context of adaptive sorting, and special-purpose structures have been designed to exploit the $k$-monotonicity of input sequences (see e.g.\ \cite{Moffat, Levcopoulos}). Our results show that BSTs can also adapt to such structure.

\begin{theorem}
Let $X$ be a $k$-monotone sequence. Then 
$\F^k(X)=O(k) \cdot |X|$.
\end{theorem}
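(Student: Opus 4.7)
The plan is to use $k$ fingers in the obvious way, one per monotone subsequence. Fix a partition of $X$ into monotone subsequences $X_1,\dots,X_k$ (guaranteed by $k$-monotonicity), and let $T$ be any BST on the keys occurring in $X$ (e.g.\ a balanced one). Assign finger $i$ to serve precisely the requests in $X_i$, placing it initially at the first key of $X_i$. The total cost is then $|X|+\sum_{i=1}^{k}D_i$, where the first term accounts for the $+1$ per access in the definition of $\F^{k}_T$ and $D_i$ is the total distance traveled by finger $i$ in $T$.

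The crux is the following \emph{monotone traversal lemma}: in any BST $T$ on $n$ keys, if $a_1<a_2<\cdots<a_{\ell}$ are keys of $T$ visited in that order (analogously for decreasing order), then $\sum_{j=1}^{\ell-1}d_T(a_j,a_{j+1})\le 2(n-1)$. To see this, consider any edge $e$ of $T$: deleting $e$ splits $[n]$ into two intervals $I_e$ and $\overline{I_e}$ (since $T$ is a BST); $e$ lies on the $a_j$-to-$a_{j+1}$ path precisely when exactly one of $a_j,a_{j+1}$ lies in $I_e$. Because $a_1<\cdots<a_{\ell}$ is monotone, this crossing event occurs at most twice over all $j$ (once when the sequence enters $I_e$, once when it leaves). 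Summing over the $n-1$ edges of $T$ yields the bound. This is essentially the standard fact that an inorder traversal of any BST uses $O(n)$ edge steps, independent of balance.

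Applying the lemma to each subsequence gives $D_i\le 2(n-1)$, hence $\sum_i D_i=O(kn)$. Combined with the $|X|$ term and the standing assumption $m=|X|\ge n$, we obtain $\F^{k}(X)\le \F^{k}_T(X)\le |X|+O(kn)=O(k)\cdot|X|$. I do not expect any real obstacle here: the only nontrivial ingredient is the monotone traversal lemma, and this follows from the one-line edge-counting argument above, requiring no assumption on the structure of $T$. Minor cleanups are the initial finger placement and the possibility that the partition uses fewer than $k$ subsequences, both of which are handled for free (spare fingers can sit at the root, and initial placements are permitted by the definition of $\vec{\ell}$).
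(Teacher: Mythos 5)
Your proof is correct and follows essentially the same route as the paper: one finger per monotone subsequence in an arbitrary static reference tree, with the observation that each finger's total movement is $O(n)$ because a monotone key sequence is traversed like an in-order walk. The only difference is cosmetic: you make the $O(n)$-traversal claim explicit via the edge-crossing argument, whereas the paper simply invokes it as the cost of an in-order traversal.
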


It follows that $\OPT(X) \leq O(k \log {k}) \cdot |X|$ for $k$-monotone sequences.\footnote{The result holds, in fact, for the more general case, when each $X_i$ is either increasing or decreasing.} The simulation is straightforward. Let $\{X_1, \dots, X_k\}$ be a partitioning of $X$ into increasing sequences (such a partition can be found online).  
Let $T$ be an arbitrary static BST over $[n]$. 
Consider $k$ fingers $f_{1},\dots,f_{k}$, initially all on $1$.
For accessing $x_j \in X_i$, move finger $f_i$ to $x_j$. Observe that over the entire sequence $X$, each finger does only an in-order traversal of $T$, taking $O(n)$ steps. Thus, $\F^k_{T}(X)=O(nk)$.

A lower bound of $\Omega(n \log{k})$ follows from enumerative results: for sufficiently large $n$, the number of $k$-monotone permutations $X \in [n]^n$ is at least $k^{\Omega(n)}$ (implied by e.g.~\cite{regev1981}). Therefore, by a standard information-theoretic argument (see e.g.\ \cite[Thm.\ 4.1]{Blum}), there exists a $k$-monotone permutation $X \in [n]^n$ with $\OPT(X) = \Omega(n \log k)$.

\subparagraph*{Further results.}
\label{sec5}
\label{sec:separation} 

We state our hierarchy result (Theorem~\ref{thm:hierarchyx}), also implying a weak separation between $k$-finger bounds and ``monotone'' bounds. 

\begin{theorem}[Appendix~\ref{sec:proof-hierarchy}]
\label{thm:hierarchy}  
For all $k$ and infinitely many $n$, there is a $k$-monotone sequence $S_k$ of length $n$, such that: 
\begin{itemize} 
\item $F^{k-1}(S_k) = \Omega(\frac{n}{k} \log (n/k))$ 

\item $F^{k}(S_k) = O(n)$ (independent of $k$). 
\end{itemize}
\end{theorem}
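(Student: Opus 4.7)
Plan: For $n = km$ I partition $[n]$ into $k$ consecutive intervals $I_j = [(j-1)m+1, jm]$ and define $S_k$ as the $k$-way perfect shuffle: $S_k$ consists of $m$ rounds, with round $r$ issuing in order the $k$ accesses $(r, m+r, 2m+r, \ldots, (k-1)m+r)$. Since the subsequence of accesses within $I_j$ is $(j-1)m+1, (j-1)m+2, \ldots, jm$ (strictly increasing), $S_k$ is $k$-monotone and has length $n$. For the upper bound $F^k(S_k) = O(n)$, I would fix any balanced BST $T$ on $[n]$ and place one finger per interval; finger $j$ handles the accesses of $I_j$ by in-order traversal, which costs $O(m)$ (each relevant edge is traversed $O(1)$ times), for a total of $O(km) = O(n)$ independent of $k$.

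For the lower bound $F^{k-1}(S_k) = \Omega((n/k)\log(n/k))$, I fix an arbitrary BST $T$ and $(k-1)$-finger strategy, and let $Y_f$ denote the subsequence served by finger $f$; the $Y_f$ partition the accesses of $S_k$. The total cost equals $\sum_f \sum_i d_T(y_i^f, y_{i+1}^f)$, and for each $f$,
\[
\sum_i d_T(y_i^f, y_{i+1}^f) \;\ge\; \min_{T'} \sum_i d_{T'}(y_i^f, y_{i+1}^f) \;=\; \LF(Y_f) \;\ge\; \Omega(\DF(Y_f)),
\]
where the last inequality is the Bose et al.~\cite{BoseDIL14} bound $\LF \ge \Omega(\DF)$. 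Hence it suffices to prove $\sum_f \DF(Y_f) = \Omega(m \log m)$.

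I call a transition of $Y_f$ a \emph{same-round transition} if both its endpoints are accesses from the same round of $S_k$. The two endpoints of such a transition then lie in two distinct intervals $I_j$, so their key-distance is $\ge m$, contributing $\ge \log m$ to $\DF(Y_f)$. In each round the $k$ accesses are distributed among the $k-1$ fingers, so if $n_f^r$ denotes the number of accesses of round $r$ served by finger $f$, then $\sum_f n_f^r = k$ while at most $k-1$ of the $n_f^r$ are positive; thus the total number of same-round transitions in round $r$ (summed over fingers) is $\sum_f \max(n_f^r-1,0) \ge k - (k-1) = 1$. Summing over the $m$ rounds, $\sum_f \DF(Y_f) \ge m \log m$, which yields $F^{k-1}(S_k) \ge \Omega(m \log m) = \Omega((n/k)\log(n/k))$.

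The proof is elementary modulo the imported ingredient $\LF \ge \Omega(\DF)$ from Bose et al.: the central observation is that $k-1$ fingers cannot serve $k$ per-round slots without at least one ``doubling-up'' per round, and each doubling crosses between intervals and so costs $\Omega(\log m)$ in $\DF$. The main obstacle, though combinatorially minor, is setting up the counting at the level of the whole sequence (rather than attempting a per-round cost lower bound, which depends on tree distances between intervals in $T$ and can be small) and then converting the aggregate $\DF$ bound back to tree cost via the single use of $\LF \ge \Omega(\DF)$ for each finger's subsequence.
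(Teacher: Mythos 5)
Your construction of $S_k$ and the upper bound are fine (this is the same tilted-grid sequence as in the paper; the paper uses a tree with the $k$ blocks hanging as paths off a balanced top tree, while your balanced tree also works since each block's traversal costs $O(n/k+\log n)$). The lower bound, however, has a genuine gap: the inequality you import from Bose et al.\ is backwards. Their result is that the lazy finger bound is \emph{stronger} than dynamic finger, i.e.\ $\LF(X)=O(\DF(X))$ (one can see this directly via a random treap as reference tree), not $\LF(X)=\Omega(\DF(X))$. Consequently the chain $\sum_i d_T(y^f_i,y^f_{i+1})\ge \LF(Y_f)\ge\Omega(\DF(Y_f))$ breaks at the second step, and with it the whole argument: a ``same-round'' transition has key-distance at least $n/k$, but in an adversarially chosen reference tree $T$ its two endpoints may be at distance $O(1)$, so you cannot charge $\Omega(\log(n/k))$ per crossing. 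Your pigeonhole observation (at least one cross-block ``doubling-up'' per round) is correct and is also the starting point of the paper's proof, but it does not by itself yield a cost bound.

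This is exactly the difficulty the paper's proof is built to overcome. There, the $\Omega(n/k)$ crossing moves are grouped by the triple (finger, source block, destination block); for a fixed triple with $C$ crossings, the accessed targets are distinct keys, and a packing argument (at most $3^d$ nodes lie within distance $d$ of the common ancestor $R$ of the relevant source keys) forces total cost $\Omega(C\log C)$ for that triple, in the given tree $T$ and with no appeal to keyspace distances. Summing over the at most $k^3$ triples and applying Jensen's inequality then gives $\Omega\bigl(\frac{n}{k}\log\frac{n}{k}\bigr)$, under the mild restriction $n=\Omega(k^4)$ (which is why the theorem is stated for infinitely many $n$). To repair your proof you would need some argument of this aggregate type; replacing the false $\LF\ge\Omega(\DF)$ step by any per-transition charge cannot work.
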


In addition, we show a separation between the $k$-finger property and the working set property, showing that for all $k$ and infinitely many $n$, there are sequences $S$ and $S'$ of length $n$, such that $\WS(S) = o(\F^{k}(S))$, and $\F^{k}(S') = o(\WS(S))$. (Appendix~\ref{sec:WS-LF}.)

\newpage

\paragraph*{Acknowledgements}

Parinya Chalermsook is supported by European Research Council (ERC) under the European Union’s Horizon 2020 research and innovation programme (grant agreement No.\ 759557) and by Academy of Finland Research Fellows, under grant No.\ 310415. L\'{a}szl\'{o} Kozma is supperted through ERC consolidator grant No.\ 617951. Thatchaphol Saranurak is supported by European Research Council (ERC) under the European Union's Horizon 2020 research and innovation programme under grant agreement No 715672, and by the Swedish Research Council (Reg.\ No.\ 2015-04659).

We thank Nikhil Bansal and Greg Koumoutsos for insightful discussions.

\newpage
\appendix

\section{Offline BST simulation}
\label{sec:simulation lazy fingers} 

\subsection{BST simulation of a deque}\label{deque_proof}

\begin{lemma} 
	\label{prop:DequeBST}The minimum and maximum element 
	from a BST-based deque can be deleted in $O(1)$ amortized operations.
\end{lemma}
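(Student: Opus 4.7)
My plan is to store the deque's elements in a BST $D$ in in-order, maintaining explicit pointers to the leftmost (min) and rightmost (max) nodes, to their parents, and to the root. In this representation, each deque operation becomes local pointer surgery on one of the two spines of $D$, and the amortized bound will follow from a short potential argument.

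I first describe the four operations. To insert a new minimum, I attach the new element as the left child of the current min and then update the min pointer; this is $O(1)$ worst case, and inserting a new maximum is symmetric. To delete the min, I unlink the current leftmost node $L$: since $L$ has no left child, I splice $L$'s (possibly empty) right subtree into $L$'s former position under its parent. The new min is then the leftmost descendant of the spliced-in subtree (or $L$'s parent if that subtree is empty), found by walking down its left spine; if this walk visits $w$ nodes, the deletion costs $O(w+1)$ in the worst case. Delete-max is symmetric.

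For the amortized analysis I would use the potential
\[
\Phi \;=\; 2\lvert D \rvert \;-\; \lvert\text{left spine}\rvert \;-\; \lvert\text{right spine}\rvert ,
\]
which remains non-negative because the two spines share only the root. Each insertion changes $\lvert D \rvert$ and the corresponding spine by $+1$, giving $\Delta\Phi = +1$ against an $O(1)$ real cost and hence $O(1)$ amortized. A delete-min that walks $w$ nodes removes the old min from both $\lvert D \rvert$ and the left spine while promoting $w$ new nodes onto the left spine; up to a small correction when the old min was the root (in which case the right spine loses one node as well), the right spine is undisturbed. A direct calculation gives $\Delta\Phi \leq -w + O(1)$, which cancels the $O(w+1)$ actual cost and yields $O(1)$ amortized; delete-max is symmetric.

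The only subtlety worth flagging is that boundary case where the old min coincides with the root, so that its right child becomes the new root and the right spine shifts by one; the potential bookkeeping has to absorb this carefully. Once that is handled, the argument is routine, and the resulting object remains a valid BST that embeds as a subtree of the ambient BST $T'$ used by the offline simulation of Section~\ref{sec2}, since every modification is a constant number of local pointer updates on the two spines.
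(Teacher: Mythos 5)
There is a genuine gap, and it is one of cost model rather than of potential-function bookkeeping. The lemma is not about an abstract pointer-machine deque: the deque lives as a subtree $T'_\tau$ of the simulating BST $T'$, and ``operations'' means rotations and pointer moves of the single BST pointer. In particular, when the minimum (or maximum) of a half tendon is deleted, it must be \emph{surfaced}, i.e.\ made the parent/root of the deque subtree (see Lemma~\ref{lem:active set}, ``make $a$ a parent of $T'_{\tau}$''), so that it sits at depth $O(\log k)$ in $T'$ and can then be carried to the root; likewise a newly inserted extreme element arrives from near the top and must be attached there. Your structure lets the minimum and maximum drift to depth $\Theta(|D|)$ inside the deque (e.g.\ after $n$ consecutive insert-min operations the tree is a left path of length $n$), and your stated costs rely on teleporting to them via stored min/max pointers and on splicing a node's right subtree into its former position --- neither of which is a unit-cost operation in the BST model. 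Realizing your delete-min by legal rotations costs $\Theta(\mathrm{depth}(L))$, while your potential $2|D|-|LS|-|RS|$ only decreases by $O(w+1)$, where $w$ is the left-spine length of $L$'s right subtree; in the left-path example each deletion has $w=0$ but true cost $\Theta(n)$, so the claimed $O(1)$ amortized bound fails exactly where the lemma is used, and the $O(\log k)$ overhead of Theorem~\ref{thm: LF simulation} would break.

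The paper avoids this by keeping both extremes at constant depth below the deque's root: the deque is maintained as two chains (the two-stack simulation of a queue) hanging from the root, so push/pop at either end is $O(1)$ rotations worst case, and the only expensive event --- rebuilding when one chain empties, by splitting the other at its middle --- is paid for by the potential $|\ell_1-\ell_2|$. If you want to salvage your approach, the fix is not a different potential but a structural invariant guaranteeing that the minimum and maximum remain within $O(1)$ of the deque's root at all times, which essentially forces something like the two-chain layout.
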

\begin{proof}
The simulation is inspired by the well-known simulation of a queue by two stacks with constant amortized time per operation (\cite[Exercise 3.19]{Mehlhorn-Sanders:Toolbox}). We split the deque at some position (determined by history) and put the two parts into structures that allow us to access the first and the last element of the deque. It is obvious how to simulate the deque operations as long as the sequences are non-empty.  When one of the sequences becomes empty, we split the other sequence at the middle and continue with the two parts.  A simple potential function argument shows that the amortized cost of all deque operations is constant. Let $\ell_1$ and $\ell_2$ be the length of the two sequences, and define the potential $\Phi = |\ell_1 - \ell_2|$. As long as neither of the two sequences are empty, for every insert and delete operation both the cost and the change in potential are $O(1)$. If one sequence becomes empty, we split the remaining sequence into two equal parts. The decrease in potential is equal to the length of the sequence before the splitting (the potential is zero after the split). The cost of splitting is thus covered by the decrease of potential.

The simulation by a BST is easy. We realize both sequences by chains attached to the root. The right chain contains the elements in the second stack with the top element as the right child of the root, the next to top element as the left child of the top element, and so on.

\end{proof}

\subsection{Extended hand} \label{appa2}

To describe the simulation precisely, we borrow terminology from~\cite{CombineBST, persistent_tries}. Let $T$ be a BST with 
a set $F$ of $k$ fingers $f_{1},\dots,f_{k}$. For convenience
we assume the root of $T$ to be one of the fingers. Let $S(T,F)$ be
the Steiner tree with terminals $F$. A \emph{knuckle} is a connected
component of $T$ after removing $S(T,F)$, i.e.\ a hanging subtree of $T$. Let $P(T,F)$ be the union
of fingers and the degree-3 nodes in $S(T,F)$. We call $P(T,F)$
the set of \emph{pseudofingers}. A \emph{tendon} $\tau_{x,y}$ is
the path connecting two pseudofingers $x,y\in P(T,F)$ (excluding $x$ and $y$) such
that there is no other $z\in P(T,F)$ inside. We assume that $x$
is an ancestor of $y$. 

The next terms are new. For each tendon $\tau_{x,y}$, there
are two \emph{half tendons,} $\tau_{x,y}^{<},\tau_{x,y}^{>}$ containing
all elements in $\tau_{x,y}$ which are less than $y$ and greater than
$y$ respectively. Let $H(T,F)=\{\tau_{x,y}^{<},\tau_{x,y}^{>}\mid\tau_{x,y}$
is a tendon$\}$ be the set of all half tendons. 

For each $\tau\in H(T,F)$, we can treat $\tau$ as an interval $[\min(\tau),\max(\tau)]$
where $\min(\tau),\max(\tau)$ are the minimum and maximum elements
in $\tau$ respectively. For each $f\in P(T,F)$, we can treat $f$
as an trivial interval $[f,f]$. 

Let $E(T,F)=P(T,F)\cup H(T,F)$ be the set of intervals defined by
all pseudofingers $P(T,F)$ and half tendons $H(T,F)$. We call $E(T,F)$
an \emph{extended hand}\footnote{In \cite{CombineBST}, the hand is defined only over the pseudofingers.}.
Note that when we treat $P(T,F)\cup H(T,F)$ as a set of elements,
such a set is exactly $S(T,F)$. So $E(T,F)$ can be viewed as a partition
of $S(T,F)$ into pseudofingers and half-tendons. Figure~\ref{ExtendedHand} illustrates these definitions.

\begin{figure}[t]
\begin{center} \includegraphics[width=0.3\textwidth]{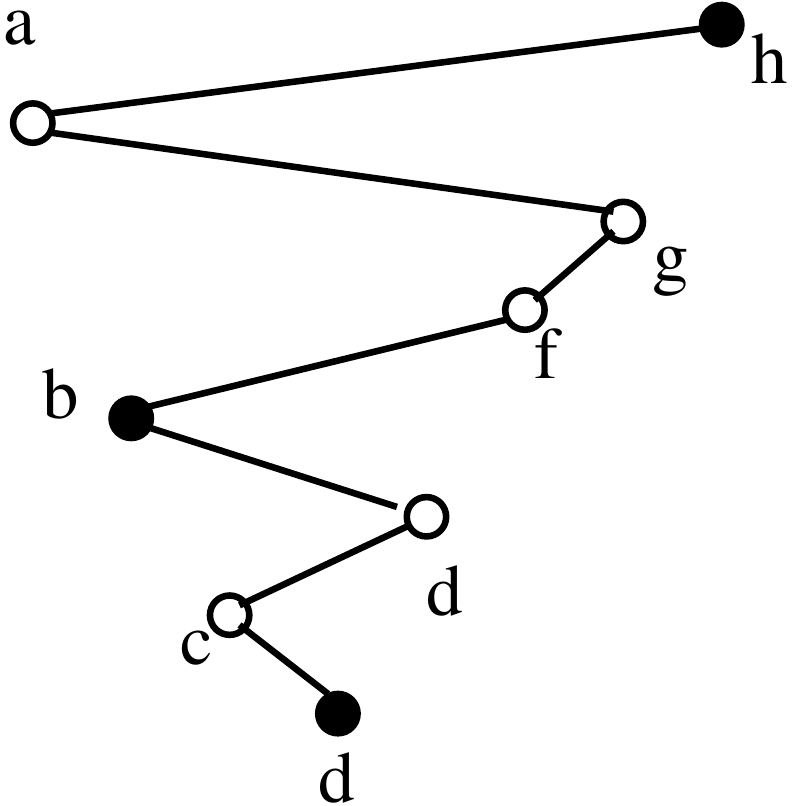}\end{center}
\caption{\label{ExtendedHand} The pseudofingers are $b$, $d$, and $h$. The half-tendons $\tau_{h,b}^<$ and $\tau_{h,b}^>$ are $a$ and $g,f$. The intervals in $E(T,F)$ are $[a,a]$, $[b,b]$, $[c,c]$, $[d,d]$, $[f,g]$, and $[h,h]$. }
\end{figure}

We first state two facts about the extended hand.
\begin{lemma}
	\label{lem:hand size}Given any $T$ and $F$ where $|F|=k$, there
	are $O(k)$ intervals in $E(T,F)$.\end{lemma}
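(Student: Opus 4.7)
The plan is to bound $|P(T,F)|$ and $|H(T,F)|$ separately, each by $O(k)$, using standard Steiner-tree accounting in a binary tree.

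First I would bound the number of pseudofingers. Recall $P(T,F)$ consists of the $k$ fingers together with the degree-$3$ nodes of the Steiner tree $S(T,F)$. Since $T$ is binary and we assumed the root is a finger, $S(T,F)$ is a subtree of $T$ rooted at a finger, and every node in $S(T,F)$ has degree at most $3$ there (at most one parent and at most two children). Every leaf of $S(T,F)$ must be a finger: otherwise it could be removed from $S(T,F)$ without disconnecting any finger, contradicting minimality. Hence $S(T,F)$ has at most $k$ leaves, and a standard fact about trees says that the number of degree-$\geq 3$ vertices is at most (\#leaves)$-1\le k-1$. Therefore $|P(T,F)|\le k+(k-1)=O(k)$.

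Next I would bound the number of tendons, and hence of half-tendons. Contract each maximal non-pseudofinger path (i.e., each tendon) in $S(T,F)$ to a single edge; the resulting tree has vertex set $P(T,F)$ and its edges are in bijection with the tendons of $(T,F)$. Any tree on $|P(T,F)|$ vertices has exactly $|P(T,F)|-1$ edges, so there are at most $|P(T,F)|-1=O(k)$ tendons, and consequently $|H(T,F)|\le 2\cdot|P(T,F)|-2 = O(k)$ half-tendons.

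Combining these two bounds gives $|E(T,F)|=|P(T,F)|+|H(T,F)|=O(k)$, as required. I do not anticipate any real obstacle: the only subtlety is being careful about the degree bound in $S(T,F)$ (which uses that $T$ is binary and that we may treat the root as a finger, so $S(T,F)$ has a well-defined root and every internal vertex has degree at most $3$); once that is in place, the leaves$\to$degree-$3$ nodes inequality and the $|V|-1$ edge count in a tree finish the argument.
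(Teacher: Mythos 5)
Your proof is correct and follows essentially the same route as the paper: bound the pseudofingers by $k$ fingers plus $O(k)$ degree-$3$ nodes of the Steiner tree, then observe that pseudofingers and tendons form a tree, so the number of tendons (and hence half-tendons) is also $O(k)$. The only difference is that you spell out the leaf-counting argument for the degree-$3$ bound, which the paper asserts without proof.
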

\begin{proof}
	Note that $|P(T,F)|\le2k$ because there are $k$ fingers and there
	can be at most $k$ nodes with degree 3 in $S(T,F)$. Consider the
	graph where pseudofingers are nodes and tendons are edges. That graph
	is a tree. So $|H(T,F)|=O(k)$ as well. \end{proof}
\begin{lemma}
	\label{lem:hand ordered}Given any $T$ and $F$, all the intervals
	in $E(T,F)$ are disjoint.\end{lemma}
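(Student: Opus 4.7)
The plan is to prove disjointness by establishing the stronger structural claim that when the vertices of $S = S(T,F)$ are listed in increasing key-order (equivalently, visited by the in-order traversal of $S$ viewed as a BST), the support of every element of $E(T,F)$ occupies a contiguous block. Since $E(T,F)$ partitions $V(S)$ into pseudofingers and the two halves of each tendon, distinct elements then correspond to disjoint contiguous blocks of the sorted list, and hence their closed intervals $[\min(e),\max(e)]$ are pairwise disjoint: the maximum key of an earlier block is strictly less than the minimum key of any later block.

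Pseudofingers have singleton support (trivial), and the two halves of any single tendon $\tau_{x,y}$ are cleanly separated by the key $y$. The substantive case is to show that a half-tendon of one tendon is not interrupted by keys coming from another element of $E(T,F)$. Consider $\tau^{<}_{x,y}$ in the orientation $y > x$ (the other three orientations are symmetric). Its support is the set of R-internals of the $x \to y$ path, i.e., those interior path nodes at which the BST search for $y$ turned right. A routine BST-search observation shows that these R-internals, in path order, have strictly increasing keys $R_{1} < R_{2} < \cdots < R_{q}$, and that the entire sub-path from $R_{1}$ down to $R_{q}$ lies inside $R_{1}$'s right subtree in $T$ (in particular, $R_{q}$ is a $T$-descendant of $R_{1}$).

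I would then show that no $w \in S \setminus \{R_{1},\ldots,R_{q}\}$ has key in $(R_{1},R_{q})$. Suppose for contradiction that such a $w$ exists; by the BST property it must sit inside $R_{1}$'s right subtree, and a key-range check rules out $w$ lying on the $R_{1}$-to-$R_{q}$ sub-path itself (the only interior sub-path nodes with keys in $(R_{1},R_{q})$ are exactly the R-internals). Let $v$ be the deepest ancestor of $w$ on this sub-path; then $w$ is reached from $v$ through the off-sub-path child of $v$. The crucial observation is that every interior tendon node has degree exactly $2$ in $S$ -- its $S$-neighbors are only its $T$-parent and its tendon-path child -- and so the off-path child of $v$ is not in $S$. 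By connectivity of the Steiner tree, the entire $T$-subtree rooted at that off-path child then contains no $S$-node, contradicting $w \in S$.

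The main obstacle is keeping the case analysis tidy: the argument must cover both orientations of the tendon ($y < x$ versus $y > x$), both types of half-tendons (R-internals and L-internals), and the various positions $v$ can take on the sub-path (as $R_{1}$, as $R_{q}$, as an intermediate R-internal, or as an intermediate L-internal). All of these cases collapse to the same short $S$-connectivity contradiction once one observes that $v$ is necessarily a tendon-internal and therefore has degree exactly $2$ in $S$, so the global flavour of the argument does not multiply with the cases.
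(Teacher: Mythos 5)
Your argument is sound and it reaches the lemma by a genuinely different route than the paper. The paper argues directly on two intersecting intervals: one of them must be a half-tendon $\{t_1<\dots<t_k\}$ with consecutive elements in ancestor--descendant relation, and then a short case analysis (a pseudofinger caught strictly between $t_j$ and $t_{j+1}$, versus two half-tendons that either cross or nest) derives a contradiction from ``first left/right ancestor'' facts and from the observation that a node with pseudofingers in both of its subtrees is itself a degree-$3$ node of $S(T,F)$, hence a pseudofinger. You instead prove the stronger structural statement that every element of $E(T,F)$ occupies a contiguous block of the in-order listing of $V(S)$, and get disjointness of the key intervals for free from the fact that $E(T,F)$ partitions $V(S)$; your engine is the observation that a tendon-internal node has $S$-degree exactly $2$ (parent plus tendon-path child), so the subtree hanging off its other child contains no $S$-node by connectivity of the Steiner tree (which contains the root, as the root is a finger). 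Your version buys a cleaner invariant --- the extended hand is an \emph{ordered} partition of $S$ into consecutive key-blocks, which is exactly what makes the balanced tree $T'_{E_0}$ over $E(T,F)$ natural --- at the price of the bookkeeping you acknowledge. One detail to make explicit when you write it out: for $v=R_q$ the ``deepest sub-path ancestor'' argument does not by itself identify the child through which $w$ is reached, because the tendon continues past $R_q$ toward $y$ through a child that is off the sub-path but still in $S$; there you must additionally use $w<R_q$ (so $w$ lies in $R_q$'s \emph{left} subtree) before the degree-$2$/connectivity contradiction applies. With that line added (and the symmetric treatment of $\tau^{>}$), the proof is complete and correct.
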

\begin{proof}
	Suppose that there are two intervals $\tau,x\in E(T,F)$ that intersect
	each other. One of them, say $\tau$, must be a half tendon. Because
	the intervals of pseudofingers are of length zero and they are distinct,
	they cannot intersect. We write $\tau=\{t_{1},\dots,t_{k}\}$ where
	$t_{1}<\dots<t_{k}$. Assume w.l.o.g.\ that $t_{i}$ is an ancestor
	of $t_{i+1}$ for all $i<k$, and so $t_{k}$ is an ancestor of a
	pseudofingers $f$ where $t_{k}<f$. 
	
	Suppose that $x$ is a pseudofinger and $t_{j}<x<t_{j+1}$ for some
	$j$. Since $t_{j}$ is the first left ancestor of $t_{j+1}$, $x$
	cannot be an ancestor of $t_{j+1}$ in $T$. So $x$ is in the left
	subtree of $t_{j+1}$. But then $t_{j+1}$ is a common ancestor of
	two pseudofingers $x$ and $f$, and $t_{j+1}$ must be a pseudofinger
	which is a contradiction.
	
	Suppose next that $x=\{x_{1},\dots,x_{\ell}\}$ is a half tendon where
	$x_{1}<\dots<x_{\ell}$. We claim that either $[x_{1},x_{\ell}]\subset[t_{j},t_{j+1}]$
	for some $j$ or $[t_{1},t_{k}]\subset[x_{j'},x_{j'+1}]$ for some
	$j'$. Suppose not. Then there exist two indices $j$ and $j'$ where
	$t_{j}<x_{j'}<t_{j+1}<x_{j'+1}$. Again, $x_{j'}$ cannot be an ancestor
	of $t_{j+1}$ in $T$, so $x_{j'}$ is in the left subtree of $t_{j+1}$.
	We know either $x_{j'}$ is the first left ancestor of $x_{j'+1}$
	or $x_{j'+1}$ is the first right ancestor of $x_{j'}$. If $x_{j'}$
	is an ancestor of $x_{j'+1}$, then $x_{j'+1}<t_{j+1}$ which is a
	contradiction. If $x_{j'+1}$ is the first right ancestor of $x_{j'}$,
	then $t_{j+1}$ is not the first right ancestor of $x_{j'}$ and hence
	$x_{j'+1}<t_{j+1}$ which is a contradiction again. Now suppose w.l.o.g.
	$[x_{1},x_{\ell}]\subset[t_{j},t_{j+1}]$. Then there must be another
	pseudofinger $f'$ in the left subtree of $t_{j+1}$, hence $\tau$
	cannot be a half tendon, which is a contradiction. 
\end{proof}

\subsection{The structure of the simulating BST}
\label{appa3}

In this section, we describe the structure of the BST $T'$ that we
maintain given a $k$-finger BST $T$ and the set of
fingers $F$. 

For each half tendon $\tau\in H(T,F)$, let $T'_{\tau}$ be the tree
with $\min(\tau)$ as a root which has $\max(\tau)$ as a right child.
$\max(\tau)$'s left child is a subtree containing the remaining elements
$\tau\setminus\{\min(\tau),\max(\tau)\}$. We implement a \emph{BST simulation of a deque} on this subtree as defined in Appendix~\ref{deque_proof}. By Lemma~\ref{lem:hand ordered},
intervals in $E(T,F)$ are disjoint and hence they are totally ordered.
Since $E(T,F)$ is an ordered set, we can define $T'_{E_{0}}$ to
be a balanced BST such that its elements correspond to elements in
$E(T,F)$. Let $T'_{E}$ be the BST obtained from $T'_{E_{0}}$
by replacing each node $a$ in $T'_{E_{0}}$ that corresponds to a half
tendon $\tau\in H(T,F)$ by $T'_{\tau}$. That is, suppose that the
parent, left child, and right child are $a_{{up}},a_{l}$ and $a_{r}$ respectively.
Then the parent in $T'_{E}$ of the root of $T'_{\tau}$ which is
$\min(\tau)$ is $a_{up}$. The left child in $T'_{E}$ of $\min(\tau)$
is $a_{l}$ and the right child in $T'_{E}$ of $\max(\tau)$ is $a_{r}$. 

The BST $T'$ has $T'_{E}$ as its top part and each knuckle of $T$
hangs from $T'_{E}$ in a determined way. 
\begin{lemma}
	\label{lem:depth pseudofinger}Each element corresponding to pseudofinger
	$f\in P(T,F)$ has depth $O(\log k)$ in $T'_{E}$, and hence in $T'$.\end{lemma}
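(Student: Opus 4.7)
The plan is to track how the depth of a pseudofinger changes when we pass from the balanced ``skeleton'' tree $T'_{E_0}$ to the inflated tree $T'_E$ obtained by substituting each half-tendon with its local tree $T'_\tau$. The argument has three short steps.

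\textbf{Step 1 (Depth in the skeleton).} By Lemma~\ref{lem:hand size}, $|E(T,F)|=O(k)$, and $T'_{E_0}$ is a balanced BST on this node set. Hence every node of $T'_{E_0}$, including the one representing $f$, has depth $O(\log k)$ in $T'_{E_0}$.

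\textbf{Step 2 (Depth inflation under substitution).} I would show by induction on $d'_{E_0}(u)$ that, for every node $u$ of $T'_{E_0}$, the representative of $u$ in $T'_E$ (namely $\min(\tau)$ if $u$ corresponds to a half-tendon $\tau$, or $u$ itself if $u$ is a pseudofinger) sits at depth at most $2\cdot d'_{E_0}(u)$ in $T'_E$. The base case is trivial. For the induction step, recall the substitution rule in Appendix~\ref{appa3}: the root of $T'_\tau$ is $\min(\tau)$, its right child is $\max(\tau)$, and the rest of $\tau$ is stored as a deque hanging from the left of $\max(\tau)$; crucially, the parent of $\min(\tau)$ in $T'_E$ is the parent of $u$ in $T'_{E_0}$, the left child of $u$ in $T'_{E_0}$ becomes the left child of $\min(\tau)$, and the right child of $u$ in $T'_{E_0}$ becomes the right child of $\max(\tau)$. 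Therefore, when we move from the representative of the parent $p$ of $u$ to the representative of $u$, the depth increases by $1$ if $p$ was a pseudofinger (single node), by $1$ if $u$ was a left child of a half-tendon $p$, and by $2$ if $u$ was a right child of a half-tendon $p$. In all cases the jump is at most $2$, which closes the induction.

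\textbf{Step 3 (Conclusion).} Since $f\in P(T,F)$ is a pseudofinger, its representative in $T'_E$ is $f$ itself. Combining Steps 1 and 2, the depth of $f$ in $T'_E$ is at most $2\cdot d'_{E_0}(f) = O(\log k)$. The tree $T'$ is obtained from $T'_E$ by hanging the knuckles as subtrees below $T'_E$, and $f$ lies in the ``top part'' $T'_E$, so its depth in $T'$ coincides with its depth in $T'_E$, yielding the claimed bound.

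\emph{Where the work is.} There is no deep obstacle; the only thing to be careful about is not double-counting the depth contribution when consecutive ancestors along the root-to-$f$ path in $T'_{E_0}$ are both half-tendons. The bookkeeping is handled cleanly by reasoning about the representative node of each ancestor, which is exactly what the factor $2$ in Step 2 accounts for. The same argument actually gives the stronger statement that \emph{every} element of $T'_E$ that is a ``top'' of its interval (a pseudofinger or a $\min(\tau)$/$\max(\tau)$) has depth $O(\log k)$, which is what is needed for the efficient deque operations inside tendons used in \S\,\ref{sec2}.
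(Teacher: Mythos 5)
Your proof is correct and follows essentially the same route as the paper: the balanced skeleton $T'_{E_0}$ on the $O(k)$ intervals of $E(T,F)$ gives depth $O(\log k)$, and the substitution of each half-tendon node by $T'_\tau$ (root $\min(\tau)$, right child $\max(\tau)$) at most doubles the depth of a pseudofinger, with knuckles hanging below $T'_E$ leaving that depth unchanged in $T'$. Your Step 2 merely makes explicit, via induction on the skeleton depth, the paper's one-line observation that the depth in $T'_E$ is at most twice the depth in $T'_{E_0}$.
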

\begin{proof}
	By Lemma~\ref{lem:hand size}, $|E(T,F)|=O(k)$. So the depth of $T'_{E_{0}}$
	is $O(\log k)$. For each node $a$ corresponding to a pseudofinger $f\in P(T,F)$,
	observe that the depth of $a$ in $T'_{E}$ is at most twice the depth
	of $a$ in $T'_{E_{0}}$ by the construction of $T'_{E}$. 
\end{proof}

\subsection{The cost for simulating the $k$-finger BST}
\label{appa4}

We finally prove the claim on the cost of our BST simulation, which immediately implies Theorem~\ref{thm: LF simulation}. That
is, we prove that whenever one of the fingers in a $k$-finger
BST $T$ moves to its neighbor or rotates, we can update the maintained
BST $T'$ to have the structure as described in the last section with
cost $O(\log k)$.

We state two observations which follow from the structure of our maintained BST $T'$ described in \ref{appa3}. The first observation follows immediately from Lemma~\ref{prop:DequeBST}.
\begin{lemma}
	\label{lem:update tendon}For any half tendon $\tau\in H(T,F)$, we
	can insert or delete the minimum or maximum element in $T'_{\tau}$
	with cost $O(1)$ amortized.\end{lemma}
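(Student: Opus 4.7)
The plan is to unfold the structure of $T'_\tau$ as described in Appendix~\ref{appa3} and reduce each of the four cases (insertion or deletion of $\min$ or $\max$ of $T'_\tau$) to a constant amount of root-level pointer manipulation plus a single push or pop at one end of the BST-simulated deque that lives in the left subtree of $\max(\tau)$. Since Lemma~\ref{prop:DequeBST} guarantees $O(1)$ amortized cost per endpoint operation on the deque, this immediately yields the claim.

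Concretely, I would handle the minimum cases first. To delete the current minimum, detach the root $\min(\tau)$, pop the smallest element $e$ from the front of the deque (the amortized $O(1)$ operation from Lemma~\ref{prop:DequeBST}), and reinstall $e$ as the new root with the old right subtree (containing $\max(\tau)$ atop the now-shorter deque) as its right child. To insert a new minimum $m$, symmetrically push the old $\min(\tau)$ onto the front of the deque and make $m$ the new root, with the old right subtree hanging on its right. The maximum cases are completely analogous: they operate on the back of the deque and on the right-child slot of the root. Deleting $\max(\tau)$ pops the largest deque element and installs it as the new right child of the root; inserting a new maximum $M$ pushes the old $\max(\tau)$ onto the back of the deque and installs $M$ in the right-child slot with the (shortened or extended) deque-subtree hanging from $M$'s left.

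The only nontrivial bookkeeping is to verify that after each of these operations the tree still satisfies the invariant of Appendix~\ref{appa3}: root $= \min$, root's right child $= \max$, remaining elements forming a BST-simulated deque under $\max$'s left child, with the global BST property preserved. This is immediate because the elements are moved across precisely the endpoint of the deque that keeps the in-order sequence intact. Summing the $O(1)$ worst-case structural work with the single $O(1)$ amortized deque operation gives the claimed amortized bound.

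I do not foresee a real obstacle here; the statement is essentially a direct corollary of Lemma~\ref{prop:DequeBST} combined with the definition of $T'_\tau$. The only minor subtlety is to pin down the four push/pop/root-swap patterns and check that each preserves the invariant, which is a routine case analysis.
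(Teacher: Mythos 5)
Your proposal is correct and matches the paper's approach: the paper treats this lemma as an immediate consequence of Lemma~\ref{prop:DequeBST} given the fixed shape of $T'_\tau$ (root $=\min(\tau)$, its right child $=\max(\tau)$, deque below), which is exactly the reduction you spell out. Your explicit four-case bookkeeping is just a more detailed rendering of the same one-line argument.
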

Next, it is convenient to define a set $A$, called \emph{active set}, as a set
of pseudofingers, the roots of knuckles whose parents are pseudofingers,
and the minimum or maximum of half tendons. 
\begin{lemma}
	\label{lemma:change exended hand}When a finger $f$ in a $k$-finger
	BST $T$ moves to its neighbor or rotates with its parent, the extended
	hand $E(T,F)=P(T,F)\cup H(T,F)$ is changed as follows. 
	\begin{enumerate}
		\item There are at most $O(1)$ half tendons $\tau\in H(T,F)$ whose elements
		are changed. Moreover, for each changed half tendon $\tau$, either
		the minimum or maximum is inserted or deleted. The inserted or deleted
		element $a$ was or will be in the active set $A$.
		\item There are at most $O(1)$ elements added or removed from $P(T,F)$.
		Moreover, the added or removed elements were or will be in the active
		set $A$.
	\end{enumerate}
\end{lemma}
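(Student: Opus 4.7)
The plan is to prove the lemma by a careful case analysis of the three elementary operations: (a) finger $f$ moves to its parent, (b) finger $f$ moves to one of its children, and (c) finger $f$ rotates with its parent. The guiding observation is that all three operations are \emph{local}: they modify the BST $T$ only at edges incident to $f$ and at most one other node, and they change the set of fingers $F$ only at $f$ and at the destination node $c$. Consequently, the membership of any node in the Steiner tree $S(T,F)$, as well as its classification as pseudofinger, tendon node, or outside $S$, can change only if the node lies within a constant-radius neighborhood of $f$ in $T$. Since $f$ has at most three neighbors in $T$, only $O(1)$ candidate nodes have status that may change.

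First I would establish part (2). A node $v$ lies in $P(T,F)$ iff $v$ is a finger or at least two of the subtrees of $T\setminus\{v\}$ contain a finger. This property depends only on which adjacent subtrees of $v$ contain fingers. Since only the fingers at $f$ and $c$ change, only nodes in the constant-radius neighborhood of $\{f,c\}$ can switch pseudofinger status, so at most $O(1)$ elements are added to or removed from $P(T,F)$. To see that each such element lies in $A$, note that if $v$ newly becomes a pseudofinger after the operation, then before the operation $v$ was adjacent in $T$ to a current pseudofinger (for instance $f$ itself), and hence $v$ was either already a pseudofinger, the root of a knuckle attached to a pseudofinger, or the $\min$ or $\max$ of an incident half-tendon — each of which is in $A$. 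Symmetrically, if $v$ ceases to be a pseudofinger, then after the operation $v$ still lies adjacent to a pseudofinger (namely $c$ or a surviving neighbor), placing $v$ in the new $A$.

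For part (1), a half-tendon $\tau_{x,y}^{<}$ or $\tau_{x,y}^{>}$ is modified only if one of its pseudofinger endpoints $x,y$ changes status, or if $f$ (resp.\ $c$) lies on $\tau$. The bound from part (2) implies there are $O(1)$ such endpoints changing, and $f,c$ are each incident to $O(1)$ tendons, so at most $O(1)$ half-tendons are affected. Moreover, the change to any such $\tau$ is structurally simple: the tendon can be shortened or lengthened only at the end closest to the changed pseudofinger, which is precisely $\min(\tau)$ or $\max(\tau)$. For example, when $f$ moves one step into an adjacent tendon, the former neighbor of $f$ on the tendon either becomes the new pseudofinger location, or is absorbed into $P(T,F)$; in every case a single $\min/\max$ insertion or deletion to $\tau$ suffices. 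Because $\min(\tau),\max(\tau)\in A$ by definition of the active set, the inserted or deleted element lies in $A$.

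The main obstacle will be the bookkeeping in subcases where $c$ is already a pseudofinger (so two pseudofingers and their incident tendons may have to be merged or split), and for rotations (where the ancestor-descendant relationship between $f$ and its neighbor is inverted, reshuffling which subtrees lie ``above'' versus ``below''). In each such subcase, I would enumerate possibilities according to whether $f$, $c$, and their neighbors are degree-$2$ or degree-$3$ in $S(T,F)$, and whether adjacent half-tendons are empty. A direct but tedious verification then shows that every subcase reduces to $O(1)$ atomic updates of the kinds described above, completing the proof.
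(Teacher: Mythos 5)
The paper states this lemma without proof, treating it as an observation about the locality of finger moves and rotations, and your plan — locality of the three elementary operations plus an exhaustive but finite case analysis over the status of $f$, its neighbor, and their incident tendons — is exactly the argument the paper leaves implicit, and it does go through. Two points, however, deserve attention. First, your characterization of $P(T,F)$ is off: a node $v$ is a pseudofinger iff $v$ is a finger or at least \emph{three} components of $T\setminus\{v\}$ contain fingers (equivalently, since the root is assumed to be a finger, iff both child subtrees of $v$ contain fingers); with ``at least two'' you would be declaring every Steiner-tree node a pseudofinger and the tendons would be empty. The slip is harmless for the conclusion you draw from it — membership in $P$ still depends only on which components of $T\setminus\{v\}$ contain fingers, and in fact for a finger move the only candidates for a status change are the old and new finger positions (for every other $v$ the two positions lie in the same component of $T\setminus\{v\}$), while for a rotation only $f$, its parent, and the transferred child are affected — but the statement as written should be corrected.

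Second, and more substantively, the claim that each affected half-tendon changes \emph{only by inserting or deleting its minimum or maximum} is asserted but not argued, and this is the one place where the lemma needs a real (order-theoretic) justification, since the $O(1)$-cost tendon update via the deque of Lemma~\ref{prop:DequeBST} hinges on it. The half-tendons of $\tau_{x,y}$ are defined by comparison with the \emph{descendant endpoint} $y$, and in several of your subcases this endpoint changes (the finger steps onto or off the end of a tendon, or a rotation exchanges $f$ with its parent so that the tendon above is now delimited by $f$ instead of $p$). You must check that the surviving tendon elements are then partitioned into the same two halves as before — otherwise elements would migrate between the two half-tendons, a bulk change not covered by the lemma — and that the single inserted or deleted node is extreme in its half. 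Both facts follow from standard BST order properties: no ancestor of a node $v$ lies strictly between $v$ and any element of $v$'s subtree, and among the ancestors of $y$ those larger than $y$ increase and those smaller than $y$ decrease as one walks up, so the tendon node adjacent to an endpoint is always the minimum of $\tau^{>}$ or the maximum of $\tau^{<}$. Adding this observation (it also underlies your claim that any node adjacent to a pseudofinger lies in the active set $A$) turns your sketch into a complete proof; without it the ``min/max only'' part of statement (1) is unsupported.
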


\begin{lemma}
	\label{lem:active set}Let $a\in A$ be an element in the active
	set. We can move $a$ to the root with cost $O(\log k)$ amortized.
	Symmetrically, the cost for updating the root $r$ to become some
	element in the active set is $O(\log k)$ amortized.\end{lemma}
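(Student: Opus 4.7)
My plan is to bound the cost in two steps: (i) establish a worst-case depth of $O(\log k)$ in $T'$ for every element of the active set $A$; (ii) argue that rotating such an element to the root, together with the rebalancing needed to preserve the depth invariant, costs $O(\log k)$ amortized. The symmetric statement is handled by the same mechanism applied to the chosen active target.

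For (i) I would proceed by case analysis over the three types of elements in $A$. If $a$ is a pseudofinger, Lemma~\ref{lem:depth pseudofinger} directly yields depth $O(\log k)$. If $a$ is the root of a knuckle whose parent is a pseudofinger, $a$ sits one edge below a node at depth $O(\log k)$. Finally, if $a$ is $\min(\tau)$ for some half tendon $\tau$, then $\min(\tau)$ is the root of the local subtree $T'_\tau$, which replaces the node of $T'_{E_0}$ corresponding to the interval of $\tau$; since $|E(T,F)|=O(k)$ by Lemma~\ref{lem:hand size} and $T'_{E_0}$ is balanced, that node has depth $O(\log k)$. The remaining case $a=\max(\tau)$ adds at most one level, since $\max(\tau)$ is the right child of $\min(\tau)$ in $T'_\tau$.

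For (ii), given that $a$ lies at depth $O(\log k)$, I would perform that many elementary rotations along the ancestor path of $a$; each rotation preserves the BST property and brings $a$ one step closer to the root, so $O(\log k)$ operations suffice. Symmetrically, to demote the current root $r$ in favor of some $a\in A$, one rotates $a$ upward in exactly the same way, which pushes $r$ down by one level per rotation.

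The main obstacle is that these rotations can destroy the balanced shape of $T'_{E_0}$ on which the depth bound relies. I would handle this by implementing $T'_{E_0}$ as a standard balanced BST (e.g.\ a red--black or weight-balanced tree) and exploiting Lemma~\ref{lemma:change exended hand}: each elementary step of the $k$-finger BST on $T$ triggers only $O(1)$ insertions and deletions into $E(T,F)$, each costing $O(\log k)$ amortized in the balanced scheme. The balancing work that restores the depth invariant after a ``bring-to-root'' is thus absorbed by the insertion/deletion cost, and combined with Lemma~\ref{lem:update tendon} for the half-tendon edits yields a total amortized cost of $O(\log k)$, as claimed.
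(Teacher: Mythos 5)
Your proposal is correct and follows essentially the paper's own argument: bound the depth of every active-set element by $O(\log k)$ (pseudofingers via Lemma~\ref{lem:depth pseudofinger}, knuckle roots one level below, tendon endpoints at the top of $T'_{\tau}$ which sits at a position of the balanced tree $T'_{E_0}$), then rotate the element to the root in depth-many steps, charging the deque edits of half tendons to Lemma~\ref{lem:update tendon} and the rebalancing to the $O(\log k)$ amortized budget. The only cosmetic difference is that the paper first detaches a tendon endpoint with a single $O(1)$ amortized deque deletion before bringing it to the root, whereas you fold that bookkeeping into the rebalancing discussion; the cost accounting is the same.
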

\begin{proof}
	There are two cases. If $a$ is a pseudofinger or a root of a knuckle
	whose parent is pseudofinger, we know that the depth of $a$ was $O(\log k)$
	by Lemma~\ref{lem:depth pseudofinger}. So we can move $a$ to root with cost
	$O(\log k)$. Next, if $a$ is the minimum or maximum of a half tendon
	$\tau$, we know that the depth of the root of the subtree $T'_{\tau}$
	is $O(\log k)$. Moreover, by Lemma~\ref{lem:update tendon}, we can delete
	$a$ from $T'_{\tau}$ (make $a$ a parent of $T'_{\tau}$) with cost
	$O(1)$ amortized. Then we move $a$ to root with cost $O(\log k)$
	worst-case. The total cost is then $O(\log k)$ amortized. The proof
	for the second statement is symmetric.
\end{proof}
\begin{lemma}
	\label{lem:update cost}When a finger $f$ in a $k$-finger BST $T$
	moves to its neighbor or rotates with its parent, the BST $T'$ can
	be updated accordingly with cost $O(\log k)$ amortized.\end{lemma}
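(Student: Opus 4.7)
The proof plan is to decompose each update of $T'$ into a constant number of elementary operations, each costing $O(\log k)$ amortized, and invoke the lemmas already established in this section. First, I would apply Lemma~\ref{lemma:change exended hand} to enumerate the changes to the extended hand $E(T,F) = P(T,F) \cup H(T,F)$ caused by the finger move or rotation: only $O(1)$ half tendons are affected (and always at their minimum or maximum), and only $O(1)$ elements are inserted into or deleted from $P(T,F)$. Crucially, every element involved lies in the active set $A$ immediately before or immediately after the update.

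The update would then proceed in phases. For each half tendon $\tau$ whose min or max changes, I would first bring the affected element (which is in $A$) to the root of $T'$ via Lemma~\ref{lem:active set} at amortized cost $O(\log k)$, then carry out the corresponding insertion or deletion on $T'_\tau$ at the accessible end of the deque using Lemma~\ref{lem:update tendon} at amortized cost $O(1)$. For each pseudofinger added to or removed from $P(T,F)$, I would correspondingly insert or delete the node in $T'_{E_0}$, which is maintained as a standard balanced BST (e.g.\ red-black tree) over an ordered set of $O(k)$ items; by Lemma~\ref{lem:hand size} each such operation costs $O(\log k)$ worst-case. In every case the element being handled is first promoted to the root via Lemma~\ref{lem:active set} (or returned afterwards), so the cost of exposing and re-burying the affected item is absorbed in the same $O(\log k)$ amortized bound. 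Summing over the $O(1)$ structural changes identified by Lemma~\ref{lemma:change exended hand} yields the claimed $O(\log k)$ amortized cost.

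The principal technical obstacle is verifying that after this constant-many edits the tree $T'$ still satisfies the structural invariants of \ref{appa3}: each half tendon is represented by its $T'_\tau$ in the correct position within $T'_E$, and every knuckle of $T$ continues to hang below $T'_E$ at the correct attachment. This reduces to a case analysis on which type of elementary step is being simulated — finger move along a tendon (insertion at one end, deletion at the other), finger move into or out of a knuckle (possible promotion or demotion of a node to a pseudofinger when its degree in $S(T,F)$ changes, splitting or merging a tendon), or a rotation at the finger (local reshuffling of at most two incident tendons and the finger itself). Each case produces only the kind of boundary change covered by Lemma~\ref{lemma:change exended hand}, so the tools of Lemmas~\ref{lem:update tendon} and \ref{lem:active set} suffice and no case contributes more than $O(\log k)$ amortized work.
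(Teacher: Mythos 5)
Your proposal is correct and follows essentially the same route as the paper's own proof: both invoke Lemma~\ref{lemma:change exended hand} to reduce the update to $O(1)$ boundary changes, handle half-tendon endpoint insertions/deletions by promoting the active element to the root via Lemma~\ref{lem:active set} and then using the deque operations of Lemma~\ref{lem:update tendon}, and handle pseudofinger changes by an $O(\log k)$ insertion/deletion plus rebalancing in the balanced top tree $T'_{E_0}$. The extra case analysis you flag as the remaining obstacle is precisely what the paper absorbs into (the unproved statement of) Lemma~\ref{lemma:change exended hand}, so no new ideas are needed.
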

\begin{proof}
	According to Lemma~\ref{lemma:change exended hand}, we separate our cost
	analysis into two parts. 
	
	For the fist part, let $a\in A$ be the element to be inserted into
	a half tendon $\tau$. By Lemma~\ref{lem:active set}, we move $a$ to root
	with cost $O(\log k)$ and then insert $a$ as a minimum or maximum
	element in $T'_{\tau}$ with cost $O(\log k)$. Deleting $a$ from some half tendon with cost $O(\log k)$ is symmetric.
	
	For the second part, let $a\in A$ be the element to be inserted into
	a half tendon $\tau$. By Lemma~\ref{lem:active set} again, we move $a$
	to root and move back to the appropriate position in $T'_{E_{0}}$
	with cost $O(\log k)$. We also need rebalance $T'_{E_{0}}$ but this
	also takes cost $O(\log k)$.
\end{proof}

	Finally, we describe the BST simulation of a $k$-finger execution with overhead $O(\log k)$.
	Let $A$ be an arbitrary $k$-finger execution in BST
	$T$. Whenever there is an update in $T$ (i.e.\
	a finger moves to its neighbor or rotates), we update
	the BST $T'$ according to Lemma~\ref{lem:update cost} with cost $O(\log k)$
	amortized. The BST $T'$ is maintained so that its structure is as described
	in Appendix~\ref{appa3}. By Lemma~\ref{lem:depth pseudofinger},
	we can access any finger $f$ of $T$ from the root of $T'$ with
	cost $O(\log k)$. Therefore, the cost of the BST execution is
	at most $O(\log k)$ times the cost of $A$.
This concludes the proof.

\section{Missing proofs for Application 1}
\label{weakuniapp}

\subsection{Proof of Lemma~\ref{lem:struc_decomp}}
\label{app:struc_decomp}

\subparagraph*{Part (i)} \ \\

Suppose that there is some time $t$ when there are $\ell'>\ell$ edges
$\{(j_{k},x_{j_{k}}),(i_{k},s_{i_{k}})\}_{k=1}^{\ell'}$
such that $j_{k}<t\le i_{k}$ for all $k\le \ell'$. Since each node
has a unique parent, $i_{1},\dots,i_{\ell'-1}, i_{\ell'}$ must be distinct and hence $\max_{1 \le k \le \ell'} i_{k} \ge t + \ell' - 1 \ge t+\ell$. Thus  $\max_{1 \le k \le \ell'} j_{k} \ge t$, a contradiction.

\subparagraph*{Part (ii)} \ \\

By construction, the cores are edge-disjoint, and every vertex belongs to some core (the recurrence ends on singleton vertices only). It remains to show that when $\textsf{decompose}(H,0)$ is called during the execution of $\textsf{decompose}(\tset,\ell)$,
$H$ has no edges, i.e.\ there is no $i$-core or $i$-body with $i \leq 0$.

To see this, define the sequence of graphs $H_{0},\dots,H_{\ell}$ where $H_{\ell}=\tset{(\ell,T,X)}$,
$H_{i-1}$ is a connected component of $H_{i}\setminus C(H_{i})$,
and $H_{0}=H$. Recall that $\spn(K)$ denotes the time-span of $K$. By definition of $C(H_{i})$, we have $\spn(H_{i-1}) \subseteq \spn(H_{i})$. 

Suppose for contradiction that $H_{0}$ has an edge. Denote $\spn(H_0) = (t_{1},t_{2}]$, where $t_{1}<t_{2}$. For all $0\le i\le \ell$, it holds that $\spn(H_{i}) \supseteq (t_{1},t_{2}]$.
Let $t\in(t_{1},t_{2}]$. We have that $C(H_{i})$ contains an edge $((a_{i},x_{a_{i}}),(b_{i},x_{b_{i}}))$
where $a_{i}<t\le b_{i}$ for all $0\le i\le \ell$. Since $C(H_{i})$
are edge-disjoint, this contradicts part (i).

\subparagraph*{Part (iii)} \ \\

Suppose there are $i$ active $(i-1)$-bodies $H'_{1},\dots,H'_{i}$
of $H$ at time $t$. Since $H$ is an $i$-body, there are $\ell-i$
ancestors $A_{1},\dots,A_{\ell-i}$ of $H$.
For each of the cores $C\in\{C(H'_{1}),\dots,C(H'_{i}),C(H),C(A_{1}),\dots,C(A_{\ell-i}$)\}
which is a set of size $\ell+1$, there is an edge $(a,s_{a}),(b,s_{b})$
where $a<t\le b$. This contradicts part (i).

\subparagraph*{Part (iv)} \ \\

We construct the decomposition greedily. Consider the $(i-1)$ bodies $H'$ ordered by $\start(H')$ and put $H'$ into the group ${\cal H}_j$ for the smallest index $j$ such that the time-span of $H'$ is disjoint from the time-spans of all members of the group. Assume that this process opens up $i' > i-1$ groups. Then
there are $(i-1)$-bodies $H'_1$ to $H'_{i'}$ (one per group) such that the time-span of the $i$-body $H$ intersects the time-spans of $H'_1$ to $H'_{i'}$, contradicting part (iii).

\subsection{Proof of Lemma~\ref{thm:cost_H}}
\label{app:finger}
We analyze the total cost of calling $\access(H,F,u)$
for all $u\in V(H)$. The total cost due to recursive calls in Step 2 is accounted by the term $\sum_{H'\in {\cal H}}\cost({H'})$. The remaining operations amount to 
moving $\nf(i)$ fingers from $\start(H)$ to $\myend(H)$ and back, along the $i$-core $C(H)$. The cost of this is
exactly $2 \cdot \nf(i)\cdot w_{\tset}(C(H))$. In addition we need to traverse, using $\nf(i-1)$ fingers, the edges connecting $C(H)$ to $\start(H')$, twice for all $H'\in {\cal H}$. The total cost thus becomes at most $2 \cdot \nf(i)\cdot w_{\tset}(C^{+}(H)) + \sum_{H'\in {\cal H}}\cost({H'}) $.

We argue now by induction that for an $i$-body $H$, we have $\cost(H)\le2\cdot i!\cdot w_{\tset}(H)$.
For $i=1$, $H=C(H)=C^{+}(H)$. Thus, by the inductive step: 
\[ \cost(H) \le 2 \cdot \nf(1)\cdot w_{\tset}(C^{+}(H))
 \le 2 \cdot w_{\tset}(H).\]
For the general inductive step: 
\begin{align*}
\cost(H) & \le 2 \cdot \nf(i)\cdot w_{\tset}(C^{+}(H))+\sum_{H'\in{\cal H}} \cost({H'}) \\
& \le 2\cdot i!\cdot w_{\tset}(C^{+}(H))+\sum_{H' \in {\cal H}}2\cdot(i-1)!\cdot w_{\tset}(H')\\
 & \le 2\cdot i!\cdot \Bigl(w_{\tset}(C^{+}(H))+\sum_{H' \in {\cal H}} w_{\tset}(H') \Bigr)\\
 & = 2\cdot i!\cdot w_{\tset}(H). 
\end{align*}

\section{Decomposable Sequences}
\label{app:decomp}

\begin{lemma} Let $X = (x_1,\ldots, x_n)$ be a $k$-decomposable permutation of length $n$. Then $\F^1(X) \le 4(\abs{X} - 1) \ceil{\log k}$.
\end{lemma}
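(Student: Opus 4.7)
The plan is to exhibit a specific static reference BST $T$ (depending on $X$) together with a one-finger schedule that serves $X$ in $T$ at total finger-movement at most $4(n-1)\lceil \log k \rceil$, where $n = |X|$. The construction is recursive, driven by the decomposition $\sigma = \tilde{\sigma}[\sigma_1, \dots, \sigma_{k'}]$ with $k' \le k$ blocks of value-ranges $V_1 < \dots < V_{k'}$ (in sorted order); one may assume $k' \ge 2$, since a $1$-ary decomposition just unrolls.

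For the tree, I would recursively construct a BST $T_i$ on $V_i$ for each $i$, and then combine them via a balanced ``skeleton'' BST of depth $\lceil \log k' \rceil$. Since consecutive integer-intervals have no integer gap, the skeleton must donate one separator per adjacent pair of blocks. I take $d_i = \max V_i$ as the separator for $i < k'$, build a balanced BST on $\{d_1, \dots, d_{k'-1}\}$, and hang the ``reduced block'' $V_i \setminus \{d_i\}$ (for $i<k'$) or $V_{k'}$ in the appropriate in-order gap as a recursively built subtree. Taking the separator at the block's maximum keeps $V_i \setminus \{d_i\}$ an interval, and removing the max of a $k$-decomposable permutation preserves $k$-decomposability (only one sub-block shrinks by one, with the rest of the decomposition tree untouched), so the recursion is well-defined.

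For the strategy, I would serve the accesses in the natural order of $X$. The accesses to block $V_i$ occupy a contiguous time-window; those inside the reduced subtree are handled by the inductive schedule, while the single access to $d_i$ is served as a brief detour out of the reduced subtree and back, adding at most $4\lceil \log k \rceil$ (a path of length at most $2\lceil \log k \rceil$ in each direction beyond what the inductive schedule already traverses). Between consecutive blocks, a single trip along the top skeleton costs at most $2\lceil \log k \rceil$.

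For the analysis I would induct on $n$. The inductive hypothesis gives a recursive cost of at most $4(|V_i|-2)\lceil \log k \rceil$ on each reduced block for $i<k'$, and $4(|V_{k'}|-1)\lceil \log k \rceil$ on $V_{k'}$. Adding $(k'-1)$ detours and $(k'-1)$ inter-block transitions, a short calculation yields
\[
\F^{1}(X) \;\le\; (4n - 2k')\lceil \log k \rceil \;\le\; 4(n-1)\lceil \log k \rceil,
\]
using $k' \ge 2$. The base case $n=1$ is immediate. The main obstacle is the tree construction itself: because two consecutive blocks of $[n]$ leave no integer between them, no BST on $[n]$ can realize every block as a literal subtree, so one must donate one element per block-boundary to the skeleton and recurse on the reduced blocks, while charging the out-of-subtree access to the slack built into the inductive bound. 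The remaining technical point is verifying that the reduced sub-permutation inherits $k$-decomposability, which works precisely because the separator is chosen at the maximum (hence boundary) of its block.
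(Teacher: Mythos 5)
There is a genuine gap in the cost accounting of your finger schedule. The tree construction itself is fine (donating $d_i=\max V_i$ to the skeleton keeps each reduced block a value interval, and deleting an element never increases arities in the decomposition tree, so $k$-decomposability is preserved), but the two key cost claims are not justified. When the access to the donated separator $d_i$ occurs, the finger sits at the previously requested key, which can be arbitrarily deep inside the reduced subtree for block $i$ (depth $\Theta(\abs{V_i})$ for nested decompositions); to reach $d_i$ it must climb all the way to the root of that subtree, cross $O(\log k)$ skeleton edges, and descend again to the next requested key. The extra cost over the direct move prescribed by the inductive schedule is therefore up to twice the current depth, not $4\ceil{\log k}$. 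The same problem occurs at block boundaries: the move from the last access of one block to the first access of the next costs the two within-subtree depths plus the skeleton length, yet your inductive hypothesis is only a bound on $\F^1$ of the sub-permutation with free start and end positions, so it gives no control over those depths. Consequently the recursion yielding $(4n-2k')\ceil{\log k}$ does not follow, and with the detours charged correctly the schedule you describe can cost more than $4(n-1)\ceil{\log k}$.

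The paper avoids both issues by a different device: the internal skeleton nodes receive \emph{auxiliary} (non-input) keys, so no access ever targets a skeleton node and there are no detours at all; and the quantity bounded recursively is the root-to-root tour $d_T(r,x_1)+\sum_{t\ge 2} d_T(x_{t-1},x_t)+d_T(x_n,r)$, which telescopes because an inter-block move is charged as $2\ceil{\log k}$ plus the two root-to-key distances already accounted for inside the children's tours. The auxiliary keys are then removed by the separate equivalence $\F^k(X)=\Theta(\widehat{\F}^k(X))$ (losing only a constant factor, which suffices for Theorem~\ref{thm:decomp}). If you insist on an auxiliary-free tree, you would need at least to strengthen your induction to such a root-to-root bound and to amortize the separator detours; note that choosing as separator the block element accessed first or last in time would let you serve it while crossing the skeleton anyway, but that element need not be extremal in value, so the reduced block would no longer be an interval and the tree construction would break. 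As written, the argument does not close.
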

\vspace{-0.1in}
\begin{proof} 
It is sufficient to define a reference tree $T$ for which $\F^1_T(X)$ achieves such bound. 
We remark that the tree will have auxiliary elements.  
  We construct $T$ recursively. 
If $X$ has length one, then $T$ has a single node and this node is labeled by the key in $X$. 
Clearly, $\F^1_T(X) = 0$.

Otherwise, let $X = \tilde X[X_1,\ldots,X_j]$ with $j \in [k]$ be the outermost partition of $X$. 
	Denote by $T_i$ the tree for $X_i$ that has been inductively constructed.  
Let $T_0$ be a BST of depth at most $\ceil{\log j}$ and with $j$ leaves.
Identify the $i$-th leaf with the root of $T_i$ and assign keys to the internal nodes of $T_0$ such that the resulting tree is a valid BST. 
Let $r_i$ be the root of $T_i$, $0 \le i \le j$ and let $r = r_0$ be the root of $T$. 
Then

\vspace{-0.15in}
\begin{align*}
d_T(r,x_1) & \le \ceil{\log k} + d_{T_1}(r_1,x_1) \\
d_T(r,x_n) & \le  \ceil{\log k} + d_{T_j}(r_j,x_n) \\
d_T(x_{t-1},x_t) & \le  \begin{cases} d_{T_\ell}(x_{t-1},x_t)    & \text{if $x_{t-1},x_t \in X_\ell$}\\
                    2\ceil{\log k} + d_{T_\ell}(r_\ell,x_{t-1}) + d_{T_{\ell + 1}}(r_{\ell+1},x_t)  &\text{if $x_{t-1} \in X_\ell$ and $x_t \in X_{\ell + 1}$,}\end{cases}
\end{align*}
and hence 
\begin{align*}
\F^1_T(X) & = d_T(r,x_0) + \sum_{t \ge 2} d_T(x_{t-1},x_t) + d_T(x_n,r) \\
	&\le 2j \ceil{\log k}+ \sum_{1 \le \ell \le j} \F^1_{T_{\ell}}(X_{\ell}) \quad
\le \quad 2j  \ceil{\log k} + \sum_{1 \le \ell \le j} 4(\abs{X_\ell} - 1) \ceil{\log k}\\
	&\le (2j - 4j + 4\sum_{1 \le \ell \le j} \abs{X_\ell}) \ceil{\log k}\ \  \le \ \ 4(\abs{X} - 1)\ceil{\log k},
\end{align*}
where the last inequality uses $j \ge 2$. 
\end{proof}

\section{Finger bounds with auxiliary elements }

Recall that $\F(X)$ is defined as the minimum over all BSTs $T$ on $[n]$ of $\F_T(X)$. 
It is convenient to define a slightly stronger finger bound that also allows {\em auxiliary elements}. 
Define $\widehat{\F}(X)$ as the minimum over all BSTs $T$ that contain the keys $[n]$ (but the size of $T$ can be much larger than $n$).
We define $\widehat{\F}^k(X)$ as the $k$-finger bound when the tree is allowed to have auxiliary elements.  
We argue that the two definitions are equivalent. 
  
\begin{theorem}
For any integer $k$, 
	$\F^k(X)=\Theta(\widehat{\F}^k(X))$ for all $X$.\end{theorem}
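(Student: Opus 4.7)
The direction $\widehat{\F}^k(X) \leq \F^k(X)$ is immediate: any BST on $[n]$ is itself a valid candidate in the minimization that defines $\widehat{\F}^k(X)$. The work lies in the converse $\F^k(X) = O(\widehat{\F}^k(X))$. My plan is to fix a tree $\widehat{T}$ attaining $\widehat{\F}^k(X)$ together with an optimal $k$-finger strategy $\mathcal{S}$ on $\widehat{T}$, and to construct a BST $T$ on $[n]$ such that $\mathcal{S}$ can be faithfully simulated on $T$ with only a constant-factor overhead.

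The construction of $T$ proceeds by iteratively eliminating the auxiliary nodes of $\widehat{T}$ using three local operations, applied in bottom-up order: (i) if an auxiliary node $v$ is a leaf, delete it, which leaves all other distances unchanged; (ii) if $v$ has exactly one child $c$, splice $v$ out by making $c$ a child of $v$'s parent, which only decreases distances between remaining nodes; (iii) if $v$ has two non-empty subtrees, perform the standard BST deletion by swapping $v$ with its in-order successor $s$ (the leftmost real node in $v$'s right subtree), then removing $s$ from its original position via case (i) or (ii). The tree remains a valid BST throughout, and the final $T$ is a BST on $[n]$ alone.

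The simulation replays each edge-traversal of $\mathcal{S}$ along the corresponding real-to-real path in $T$. The key technical claim is that for every pair $x,y \in [n]$, $d_T(x,y) \leq c \cdot d_{\widehat{T}}(x,y)$ for some absolute constant $c$; this immediately yields $\F^k_T(X) \leq c \cdot \widehat{\F}^k_{\widehat{T}}(X)$, since each of the $\widehat{\F}^k(X)$ unit finger moves in $\mathcal{S}$ is translated to at most $c$ moves in $T$. The main obstacle will be establishing the distortion bound for operation (iii): although (i) and (ii) never increase real-node distances, promoting a deep in-order successor $s$ to $v$'s position can reshape the right subtree and potentially lengthen some real-to-real paths. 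I expect the cleanest way to handle this is an amortized/charging argument that either (a) chooses dynamically between the in-order predecessor and successor to minimize local distortion, or (b) charges each incurred lengthening to an edge or auxiliary node of $\widehat{T}$ that $\mathcal{S}$ provably traversed, so that the blow-up is absorbed into the cost $\widehat{\F}^k(X)$ itself. Processing auxiliary nodes bottom-up is crucial, since then at the moment $v$ is eliminated both of its subtrees are already BSTs on real keys, which keeps the local picture at each step simple.
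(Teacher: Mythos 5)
The direction $\widehat{\F}^k(X)\le \F^k(X)$ and the overall plan (build a single BST $T$ on $[n]$ whose pairwise distances are within a constant factor of those in the optimal augmented tree $\widehat{T}$, then reuse the same finger strategy) are fine, and this is indeed the shape of the paper's argument. The gap is exactly the step you flag and defer: your deletion-based construction does not satisfy the key claim $d_T(x,y)\le c\cdot d_{\widehat{T}}(x,y)$, and no local repair of operation (iii) can make it do so. Concretely, let the only auxiliary node be the root $v$ of $\widehat{T}$, let its left subtree be a right spine of $n/2$ real keys and its right subtree a left spine of $n/2$ real keys. The in-order successor $s$ of $v$ is the deepest node of the right spine; promoting it to the root makes the distance between $s$ and its former parent $p$ jump from $1$ to $\Theta(n)$, and choosing the in-order predecessor instead fails symmetrically on the left spine. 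This also defeats your charging idea (b): for the access sequence that alternates $s,p,s,p,\dots$ with a finger parked there, the optimal strategy in $\widehat{T}$ only ever traverses the single edge $sp$ and never touches $v$, so there is nothing of cost $\Omega(n)$ per access to charge against; your specific $T$ then gives $\F^k_T$ larger than $\widehat{\F}^k_{\widehat{T}}$ by a factor $\Theta(n)$. Bottom-up processing does not help, since in this example there is only one auxiliary node.

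The paper avoids local surgery entirely and rebuilds the tree globally from weights: set $w(a)=4^{-d_{\widehat{T}}(a)}$ (depth in $\widehat{T}$), use the standard fact that for any keys $i\le j$ one has $\log\bigl(w[i:j]/\min(w(i),w(j))\bigr)=O(d_{\widehat{T}}(i,j))$, where $w[i:j]$ sums the weights of keys of $[n]$ in the interval, and then construct a biased BST $T'$ on $[n]$ alone in which $d_{T'}(i,j)=O\bigl(\log \bigl(w[i:j]/\min(w(i),w(j))\bigr)\bigr)$. This yields the pointwise distortion bound $d_{T'}(x,y)=O(d_{\widehat{T}}(x,y))$ that your approach needs but cannot obtain by in-order-successor deletions; with it, replaying the optimal strategy gives $\F^k(X)\le \F^k_{T'}(X)=O(\widehat{\F}^k(X))$. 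If you want to keep your outline, you must replace operation (iii) by such a global, weight-balanced reconstruction (or prove a distance bound of this interval-weight form by other means); as written, the central claim of your proof is false.
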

\begin{proof}
	It is clear that $\widehat{\F}^k(X)\le \F^k(X)$. 
We only need to show the converse.  

	Let $T$ be a BST (with auxiliary elements) such that $\F^k_{T}(X)=\widehat{\F}^k(X)$. Denote by $\vec{f}$ the optimal finger strategy on $T$. 
	Let $[n]\cup Q$ be the elements of $T$ where $Q$ is the set of
	auxiliary elements in $T$. 
For each $a\in[n]\cup Q$, let $d_{T}(a)$
	be the depth of key $a$ in $T$, and let $w(i)=4^{-d_T(i)}$. 
For any
	two elements $i$ and $j$ and set $Y \subseteq [n] \cup Q$, let $w_{Y}[i:j]$ be the
{sum $\sum_{k \in Y\cap [i,j]} w(k)$ of the weights of the elements in $Y$ between $i$ and $j$ (inclusive).}
	For any $i,j\in[n]\cup Q$ such that $i\leq j$, we have  
	\[
	\log\frac{w_{[n]\cup Q}[i:j]}{\min(w(i),w(j))}=O(d_{T}(i,j)),
	\]
	where $d_{T}(i,j)$ is the distance from $i$ to $j$ in $T$. So,
	this same bound also holds when considering only keys in $[n]$. That is, for $i,j \in [n]$, we have  \[
	\log\frac{w_{[n]}[i:j]}{\min(w(i),w(j))}=O(d_{T}(i,j)).
	\]

Given the weight of $\{w(a)\}_{a\in[n]}$, the BST $T'$ (without auxiliary elements) is constructed by invoking Lemma~\ref{thm:tree-from-weight}. 
We bound the term $\F^k_{T'}(X)$ (using strategy $\vec{f}$) by 	
\[
	O( \sum_{t} d_{T'}(x_{\sigma(f_t,t)}, x_t)) = O(\sum_{t=1}^{m-1}\lg\frac{w_{[n]}[x_{t}:x_{\sigma(f_t,t)}]}{\min(w(x_{i}),w(x_{\sigma(f_t,t)}))})=O(\sum_{t=1}^{m-1}d_{T}(x_{\sigma(f_t,t)},x_{t}))=O(\F^k_{T}(X))
	\]
	where $X=(x_{1},\dots,x_{m})$. Therefore, $\F^k(X)\le \F^k_{T'}(X)=O(\F^k_{T}(X))=O(\widehat{\F}^k(X))$.
\end{proof}

\begin{lemma}
Given a weight function $w(\cdot)$, and $W = \sum_{i\in[n]}{w(i)}$, there is a deterministic construction of a BST $T_w$ such that the depth of every key $i \in [n]$ is $d_{T_w}(i) = O(\log \frac{W}{w(i)})$. \label{thm:tree-from-weight} 
\end{lemma}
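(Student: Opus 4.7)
The plan is to use the classical weight-balanced BST construction, often attributed to Mehlhorn: recursively choose as root a key that splits the current key-interval into two subintervals of weight at most $W/2$, and recurse. Concretely, on a subinterval $[a,b] \subseteq [n]$ of total weight $W_{[a,b]} = \sum_{j \in [a,b]} w(j)$, scan the keys in increasing order and choose as root the \emph{first} index $r$ with $\sum_{j \in [a,r]} w(j) \geq W_{[a,b]}/2$; then recurse on $[a,r-1]$ and $[r+1,b]$ to build the left and right subtrees. By the minimality of $r$, the left subinterval $[a,r-1]$ has weight strictly less than $W_{[a,b]}/2$, and by the threshold the right subinterval $[r+1,b]$ has weight at most $W_{[a,b]}/2$. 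The resulting tree is a valid BST (the root of each subtree separates its two children's key-intervals).

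For the depth analysis, fix a key $i \in [n]$ and consider the nested sequence of subintervals $[n] = [a_0,b_0] \supsetneq [a_1,b_1] \supsetneq \cdots \supsetneq [a_d,b_d] \ni i$ traversed on the root-to-$i$ path in $T_w$. Writing $W_t$ for the weight of $[a_t,b_t]$, the balance condition above gives $W_{t+1} \leq W_t/2$ at every level, and hence $W_d \leq W/2^d$. Since $i \in [a_d,b_d]$ we have $w(i) \leq W_d$, so once $d$ exceeds $\log(W/w(i))$ the descent must already have terminated at $i$ itself. This yields $d_{T_w}(i) = O(\log(W/w(i)))$, as required.

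There is no real obstacle here; the only point requiring care is picking the \emph{first} prefix-index crossing $W/2$ (rather than the last), so that the lighter side has strictly less than $W/2$ weight and the halving argument applies symmetrically to both children regardless of whether $w(r)$ is a tiny or dominant fraction of $W_{[a,b]}$. The construction is clearly deterministic and runs in polynomial time after a single precomputation of prefix sums of $w(\cdot)$.
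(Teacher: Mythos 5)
Your proof is correct and complete for the lemma as stated: the weighted-median recursion (root at the first index whose prefix weight reaches half of the interval's weight) makes both child intervals carry at most half of the parent's weight, and since the interval at depth $d$ on the root-to-$i$ path still contains $i$, you get $w(i) \le W/2^{d}$ and hence depth at most $\log_2(W/w(i))$. It is, however, a genuinely different construction from the paper's. The paper anchors the tree at the leftmost key: key $1$ is the root, a long right spine carries the breakpoints $j_i$ (the minimal indices with prefix weight $w[1:j_i] \ge 2^{i} w_1$), and the keys strictly between $j_{i-1}$ and $j_i$ form a subtree hanging to the left of the spine at depth $O(i)$, built internally by essentially the kind of weight-balanced argument you use. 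That construction yields the depth bound in the form $O\bigl(\log \frac{w[1:\ell]}{\min(w_1,w_\ell)}\bigr)$, i.e.\ sensitive to the prefix weight up to $\ell$ rather than to the total weight $W$; this prefix/interval form is what the surrounding theorem invokes when it charges distances in the constructed tree $T'$ to $\log\frac{w[x_t:x_{\sigma(f_t,t)}]}{\min(\cdot,\cdot)}$, whereas your construction, as analyzed, only controls depths, and hence distances only via $\log(W/w(i)) + \log(W/w(j))$ (indeed, the median-split tree can place two adjacent light keys at distance $\Theta(\log n)$ even when the interval-sensitive bound would demand $O(1)$). Conversely, the paper's bound does not literally dominate the stated $O(\log(W/w(i)))$ bound when $w_1$ is very small, so your argument is arguably the more direct proof of the lemma as written; just be aware that if the interval-sensitive distance guarantee is what is ultimately needed, the median-split tree alone does not provide it.
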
  

\begin{proof}
Let $w_1$, \ldots $w_n$ be a sequence of weights. We show how to construct a tree in which the depth of element $\ell$ is $O(\log w[1:\ell]/\min(w_1,w_\ell))$. 

For $i \ge 1$, let $j_i$ be minimal such that $w[1:j_i] \ge 2^i w_1$. Then $w[1:j_i - 1] < 2^i w_1$ and 
$w[j_{i-1} + 1 : j_i] \le 2^{i-1} w_1 + w_{j_i}$. 

Let $T_i$ be the following tree. The right child of the root is the element $j_i$. The left subtree is a tree in which element $\ell$ has depth $O(\log 2^{i-1} w_1/w_\ell)$. 

The entire tree has $w_1$ in the root and then a long right spine. The trees $T_i$ hang off the spine to the left. In this way the depth of the root of $T_i$ is $O(i)$. 

Consider now an element $\ell$ in $T_i$. Assume first that $\ell \not= j_i$. The depth is
\[ O\left(i + \log \frac{2^{i-1} w_1}{w_\ell}\right) = O\left(i + \log \frac{2^{i-1} w_1}{\min(w_1,w_\ell)}\right)= 
O\left(\log \frac{2^{i-1} w_1}{\min(w_1,w_\ell)}\right)= O\left(\frac{w[1:\ell]}{\min(w_1,w_\ell)}\right).\]

For $\ell = j_i$, the depth is 
\[ O\left(i\right) = O\left( \log \frac{2^i w_1}{w_1} \right) = O\left(\log \frac{w[1:j_i]}{\min(w_1,w_{j_i})}\right) .\]

\end{proof}

\section{Proof of Theorem~\ref{thm:hierarchyx}}
\label{sec:proof-hierarchy}
Let $n$ be an integer multiple of $k$ and $\ell = n/k$.  
Consider the tilted $k$-by-$\ell$ grid $S_k$. 
More precisely, the access sequence is defined as: $1$, $\ell + 1$, \ldots, $\ell \cdot(k-1) + 1$, $2$, $\ell+2$, \ldots, $(k-1)\ell +2$,\ldots, $(k-1)\ell +\ell$. We denote the elements of $S_k$ as $s_i$, for $i=1,\dots,n$.
To see the geometry of this sequence, one may view it as a partitioning of the keys $[n]$ into ``blocks'' $\bset_i: i =1,\ldots, k$ where $\bset_i$ contains the keys in $\{\ell (i-1) +1, \ell (i-1) +2, \ldots, \ell i\}$, so we have $|\bset_i | = \ell$ and $\bigcup_{i=1}^k \bset_i = [n]$.  
The sequence $S_k$ consists of an interleaving of an increasing traversal of each block. 

\begin{lemma}
	$\F^k(S_k) = O(n)$. 
\end{lemma}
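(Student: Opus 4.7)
The plan is to exhibit a reference tree $T$ on $[n]$ and a $k$-finger strategy in $T$ that serves $S_k$ with total cost $O(n)$; this directly bounds $\F^k(S_k)$.

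First I would construct $T$ so that each block $\bset_i = \{(i-1)\ell+1,\dots,i\ell\}$ forms its own subtree $B_i$ of $T$ of size $\ell$ and height $O(\log \ell)$. Since the blocks $\bset_1,\dots,\bset_k$ are already contiguous in the key order, I can hang balanced BSTs $B_1,\dots,B_k$ under a balanced ``block-skeleton'' tree of height $O(\log k)$, preserving the BST property on $[n]$; if it is convenient to use auxiliary keys in the skeleton, I can fall back on the equivalence $\F^k(X) = \Theta(\widehat{\F}^k(X))$ from the appendix.

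Next I would specify the $k$-finger strategy. Place finger $f_i$ initially at the smallest key of $\bset_i$, namely $\ell_i = (i-1)\ell + 1$. From the definition of $S_k$, the accesses within each $\bset_i$ appear in strictly increasing order $(i-1)\ell+1,(i-1)\ell+2,\dots,i\ell$, so I assign finger $f_i$ to serve precisely the accesses belonging to $\bset_i$. Under this assignment, the complete movement of $f_i$ over the whole request sequence is just the in-order traversal of $B_i$, starting at its leftmost node.

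Then I would bound the cost. In a BST of size $\ell$, a full in-order traversal from the leftmost node to the rightmost node traverses each edge of the subtree exactly twice, totalling $2(\ell-1)$ pointer moves; together with the $\ell$ unit ``access'' costs charged by the $+1$ in the definition of $\F^k_{T,\vec{f},\vec{\ell}}(X)$, finger $f_i$ contributes $O(\ell)$. Summing over the $k$ fingers gives $\F^k_T(S_k) = O(k\ell) = O(n)$, hence $\F^k(S_k) = O(n)$.

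The only point requiring any care is the construction of $T$ so that each $\bset_i$ really is a subtree of $T$ whose internal distances are those of a balanced tree of size $\ell$; this is routine and, if needed, handled cleanly via auxiliary elements. No obstacle arises elsewhere: the strategy is memoryless, uses exactly $k$ fingers, and the bound on the per-finger in-order traversal is tight up to a constant.
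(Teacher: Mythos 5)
Your proof is correct and follows essentially the same approach as the paper: one finger dedicated to each block $\bset_i$, a reference tree hanging the $k$ block-subtrees below an $O(\log k)$-height skeleton (auxiliary keys justified via $\F^k(X)=\Theta(\widehat{\F}^k(X))$, exactly as the paper's fractional-key construction implicitly does), and $O(\ell)$ total movement per finger for $O(n)$ overall. The only cosmetic difference is that the paper shapes each block as a right path so every in-block access costs $O(1)$ outright, while you use a balanced block-subtree and charge the movement through the $2(\ell-1)$ in-order-traversal bound --- both accountings are valid.
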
 
\begin{proof}
The main idea is to use each finger to serve only the keys inside blocks and to use a separate finger for each block. (recall that there are $k$ blocks and $k$ fingers.)  
We create a reference tree $T$ and argue that $\F^k_T(S_k) = O(n)$.
Let $T_0$ be a BST of height $O(\log k)$ and with $k$ leaves. 
Each leaf of $T_0$ corresponds to the keys $\set{\ell \cdot (i-1) +\frac 1 2}_{i=1}^k$. 
The non-leafs of $T_0$ are assigned arbitrary fractional keys that are consistent with the BST properties.
	For each $i \in [k]$, path $P_i$ is defined as a BST with key $\ell \cdot (i-1) +1$ (i.e. the smallest key in block $\bset_i$) at the root, where for each $j = 0, \ldots, (\ell-1)$, the key $\ell (i-1)+j$ has $\ell (i-1)+ (j+1)$ as its only (right) child.  
The final tree $T$ is obtained by hanging each path $P_i$ as a left subtree of a leaf $\ell \cdot (i-1) +\frac 1 2$.  
The $k$-finger strategy is simple: The $i$-th finger only takes care of the elements in block $\bset_i$. 
The cost for the first access in block $\bset_i$ is $O(\log k)$, and afterwards, the cost is only $O(1)$ per access. 
So the total access cost is $O(\frac{n}{k} \log k + n) = O(n)$. 
\end{proof} 

The rest of this section is devoted to proving the following: 
 
\begin{theorem} 
	$\F^{k-1}(S_k) = \Omega(\frac{n}{k} \log (n/k))$
\end{theorem}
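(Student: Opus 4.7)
My plan is to lower-bound the cost of any $(k-1)$-finger strategy on $S_k$ in any reference BST $T$ by $\Omega(\ell \log \ell)$, where $\ell = n/k$.

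First, I decompose the cost. For any $(k-1)$-finger strategy, one has $\text{cost} = \sum_{f=1}^{k-1} \F^1_T(X^{(f)})$, where $X^{(f)}$ is the temporally-ordered subsequence of accesses served by finger $f$. This is because each finger independently traces a $1$-finger walk through its served accesses (the initial-placement terms are absorbed into $\rho(n)$-type lower order).

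Next, I establish a two-block one-finger lemma: for any BST $T'$ on $[2m]$ and the canonical interleaved sequence $Y = 1, m+1, 2, m+2, \dots, m, 2m$, we have $\F^1_{T'}(Y) = \Omega(m \log m)$. The proof uses the edge-counting identity $\F^1_{T'}(Y) = \sum_{V \neq T'} c_V(Y)$, where $c_V(Y)$ counts consecutive pairs of $Y$ split by the edge above subtree $V$. Since every subtree in a BST corresponds to a contiguous key interval $[a,b]$, a short case analysis (splitting on whether $[a,b]$ lies entirely within one of the halves $[1,m]$, $[m+1,2m]$ or straddles them) shows $c_V(Y) \geq \min(|V|,\ 2m - |V|)$. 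Summing and using the classical bound $\sum_{V \neq T'} |V|(2m - |V|) = \Omega((2m)^2 \log(2m))$ (the standard ``average pairwise BST distance'' bound), together with $\min(a,b) \geq ab/(a+b)$, yields $\sum_{V \neq T'} \min(|V|, 2m - |V|) = \Omega(m \log m)$.

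For general $k$, I use pigeonhole: since $\sum_f |X^{(f)}| = n = k\ell$ with only $k-1$ fingers, some finger $f^\star$ must serve accesses from at least two blocks, and because of the cyclic block-pattern of $S_k$, the temporal subsequence served by $f^\star$ inherits a two-block interleaved structure. Concretely, I argue that for some $f^\star$ and two blocks $\bset_a, \bset_b$, the restriction of $X^{(f^\star)}$ to $\bset_a \cup \bset_b$ contains a subsequence of length $\Omega(\ell)$ matching the canonical interleaved pattern; passing to the subtree of $T$ induced on the $2\ell$ keys of $\bset_a \cup \bset_b$ and invoking the two-block lemma yields $\F^1_T(X^{(f^\star)}) = \Omega(\ell \log \ell)$, which lower-bounds the total cost.

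The main obstacle lies in pinning down an $\Omega(\ell)$-length interleaved two-block subsequence within a single finger's served accesses, because an adversarial partition could in principle spread each block's $\ell$ accesses thinly across many fingers. To handle this I split into regimes. When $k = O(\log \ell)$, a refined pigeonhole suffices: since $n_i^{(f)} \leq \ell$ for every $(f, i)$, any finger with $|X^{(f)}| > \ell$ must have its ``second-largest'' block-share substantial, and summing $\sum_f S(f) \log S(f)$ (with $S(f)$ the second-largest block count for finger $f$) via Jensen's inequality gives $\Omega(\ell \log(\ell/k)) = \Omega(\ell \log \ell)$ in this regime. For $k = \Omega(\log \ell)$, the trivial lower bound $\F^{k-1}(S_k) \geq n - O(k) = \Omega(k\ell) \geq \Omega(\ell \log \ell)$ applies directly, closing the argument.
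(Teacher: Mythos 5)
Your reduction of the total cost to per-finger one-finger walk costs is valid (as an inequality), your two-block interleaving lemma is correct (and close in spirit to the paper's Lemma~\ref{lem:key-lemma}), and your observation that for $k=\Omega(\log (n/k))$ the trivial bound $\F^{k-1}(S_k)\geq n$ already suffices is a nice simplification. The genuine gap is exactly at the step you flag as the main obstacle, and the regime-one fix you propose does not work: the quantity ``second-largest block share'' $S(f)$ does not control the cost of finger $f$. A finger can serve the $\bset_a$-accesses of phases $1,\dots,\ell/2$ and the $\bset_b$-accesses of phases $\ell/2+1,\dots,\ell$; these do not interleave in time (all of the former precede all of the latter), so $S(f)=\ell/2$ while the finger's walk costs only $O(\ell)$ in the reference tree in which each block is a path (the tree from the paper's matching upper bound). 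Hence the asserted implication ``cost of finger $f$ is $\Omega(S(f)\log S(f))$'' is false, and summing $S(f)\log S(f)$ and applying Jensen lower-bounds nothing. What forces cost is not how much of a second block a finger serves but how often it switches blocks over time, and that is a global constraint: in each phase all $k$ blocks are hit but only $k-1$ fingers exist, so at least one left-to-right cross-block move occurs per phase (this is the paper's per-phase pigeonhole lemma), giving at least $\ell$ transition events in total.

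To close the gap you would need to (i) attribute these $\geq\ell$ events to (finger, source block, target block) triples, (ii) show that a triple receiving $c$ events incurs cost $\Omega(c\log c)$ even though those events need not form a contiguous canonical alternation --- your two-block lemma applies only to the exact interleaved pattern, whereas one triple's events can be scattered arbitrarily in time; the paper handles this with a different device (order the events by their departure keys, which are distinct by laziness, let $R$ be the LCA of the upper half, pair events up, and use that at most $3^d$ vertices lie at distance $d$ from $R$; Lemmas~\ref{lem:key-lemma} and~\ref{lem:lastclaim}) --- and (iii) combine the per-triple bounds by convexity over the at most $k^3$ triples, which in your regime $k=O(\log\ell)$ indeed yields $\Omega(\ell\log(\ell/k^3))=\Omega(\ell\log\ell)$ (and, pleasantly, your regime split would remove the paper's side condition $n=\Omega(k^4)$). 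Steps (i)--(iii) are essentially the paper's proof; your sketch currently supplies neither (ii) nor a correct accounting in (i), so as written the argument does not go through in the main regime.
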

Let $T$ be an arbitrary reference tree. We argue that $\F^{k-1}_T(S_k) = \Omega(\frac{n}{k} \log (n/k))$. 

A {\em finger configuration} $\vec{f} = (f(1),\ldots, f(k-1)) \in [n]^{k-1}$ specifies to which keys the fingers are currently pointing. 
Any finger strategy can be described by a sequence $\vec{f}_1,\ldots, \vec{f}_n$, where $\vec{f}_t$ is the configuration after element $s_t$ is accessed.  
As before, we assume w.l.o.g.\ the following lazy update strategy:
\begin{lemma}
\label{prop:k-server}  
For each time $t$, the configurations $\vec{f}_t$ and $\vec{f}_{t+1}$ differ at exactly one position. In other words, we only move the finger that is used to access $s_{t+1}$.  
\end{lemma}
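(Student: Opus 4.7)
The plan is to establish this by the standard $k$-server lazification argument, specialized to the tree metric induced by $T$. Given an arbitrary $(k-1)$-finger strategy $\mathcal{A}$ with configurations $\vec{g}_0, \vec{g}_1, \ldots, \vec{g}_n$ that correctly serves $S_k$ in $T$, I will construct a new strategy $\mathcal{A}'$ with configurations $\vec{f}_0, \vec{f}_1, \ldots, \vec{f}_n$ satisfying the stated property, and of cost no larger than that of $\mathcal{A}$. Applying this to an optimal $\mathcal{A}$ yields the lemma.

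The construction is as follows. Let $\phi(t) \in [k-1]$ be the index of the finger that $\mathcal{A}$ uses to serve request $s_t$, i.e.\ the finger whose position under $\vec{g}_t$ equals $s_t$. Set $\vec{f}_0 = \vec{g}_0$, and define $\vec{f}_t$ recursively by $\vec{f}_t(\phi(t)) = s_t$ and $\vec{f}_t(j) = \vec{f}_{t-1}(j)$ for every $j \neq \phi(t)$. By construction, $\vec{f}_t$ and $\vec{f}_{t+1}$ agree on every coordinate except possibly $\phi(t+1)$, and $\mathcal{A}'$ serves $s_t$ correctly since $\vec{f}_t(\phi(t)) = s_t$. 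If the finger $\phi(t+1)$ is already located at $s_{t+1}$ under $\vec{f}_t$, the configurations coincide entirely and no movement is performed; in that case we can still nominally record $\phi(t+1)$ as the serving finger, so the ``exactly one position'' clause is interpreted in the mild sense of ``at most one position, namely $\phi(t+1)$.''

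It remains to check $\cost(\mathcal{A}') \leq \cost(\mathcal{A})$. Fix a finger index $i \in [k-1]$, and let $t_1 < t_2 < \cdots < t_{m_i}$ be the times at which $\mathcal{A}$ uses finger $i$, i.e.\ $\phi(t_j) = i$. Under $\mathcal{A}$, finger $i$ performs a (possibly meandering) walk in $T$ that visits $\vec{g}_0(i), s_{t_1}, s_{t_2}, \ldots, s_{t_{m_i}}$ in order, with total length $L_i$ equal to its contribution to $\cost(\mathcal{A})$. Under $\mathcal{A}'$, finger $i$ moves directly between these same successive positions, contributing $d_T(\vec{g}_0(i), s_{t_1}) + \sum_{j=1}^{m_i - 1} d_T(s_{t_j}, s_{t_{j+1}})$. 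By the triangle inequality applied along the walk traced by finger $i$ in $\mathcal{A}$, this sum is at most $L_i$. Summing over $i \in [k-1]$ yields $\cost(\mathcal{A}') \leq \cost(\mathcal{A})$, so applying the construction to an optimal strategy gives an optimal lazy strategy, which is what we assume from this point on.

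There is no substantive obstacle here: the sole subtlety is the bookkeeping about which finger ``serves'' an access when multiple fingers share a node (resolved by fixing $\phi$ arbitrarily) and the zero-movement case flagged above. The tree-metric triangle inequality does all the real work, mirroring the generic $k$-server lazification argument already invoked in \textsection\,\ref{sec2} and in the survey~\cite{kserver_survey}.
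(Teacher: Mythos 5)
Your proof is correct and is exactly the standard lazification argument that the paper itself invokes without spelling out: the paper states this lemma with no accompanying proof, referring back to the brief discussion in \S\,2 and the $k$-server literature, and your write-up is precisely the argument being cited there. Your observation that ``exactly one position'' should really be read as ``at most one'' is also right and consistent with the paper's own \S\,2 formulation (``if some server is already placed at the requested node, then no movement happens'').
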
 

We view the input sequence $S_k$ as having $\ell$ phases: The first phase contains the subsequence $1, \ell+1,\ldots, \ell(k-1)+1$, and so on. Each phase is a subsequence of length $k$ that accesses keys starting in block $
bset_1$ and so on, until the block $\bset_k$. 

\begin{lemma} 
For each phase $p \in \{1,\ldots, \ell \}$, there is a time 
{$t \in [(p-1)k + 1, p \cdot k]$} such that $s_t$ is accessed by finger $j$ such that $f_{t-1}(j) $ and $f_t(j)$ are in different blocks, and $f_{t-1}(j) < f_t(j)$.  
That is, this finger moves to the block $\bset_{b}$, $b = t \mod k$, from some block $\bset_{b'}$, where $b' < b$, in order to serve $s_t$.  
\end{lemma}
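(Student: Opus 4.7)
The plan is to combine a simple pigeonhole observation with the laziness assumption justified in Lemma~\ref{prop:k-server}. Phase $p$ comprises the $k$ accesses at times $t_b := (p-1)k + b$ for $b = 1,\ldots,k$, and by construction $s_{t_b} \in \bset_b$. Since only $k-1$ fingers are available but $k$ accesses must be served, at least one finger $j$ must serve two (or more) accesses within phase $p$; let $t_a < t_b$ be its two earliest serving times in phase $p$.

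By laziness, at every time step only the serving finger moves, and it moves directly to the accessed position. Consequently, finger $j$ remains at $s_{t_a} \in \bset_a$ throughout the interval $[t_a, t_b)$ (it serves no access there, hence does not move), so $f_{t_b - 1}(j) = s_{t_a}$, while $f_{t_b}(j) = s_{t_b} \in \bset_b$. Since $a < b$, the blocks $\bset_a$ and $\bset_b$ are distinct, and every element of $\bset_a$ is strictly smaller than every element of $\bset_b$, giving $f_{t_b - 1}(j) < f_{t_b}(j)$. Setting $t := t_b$ and using this finger $j$ exhibits the required time.

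I do not foresee any real obstacle here: the only conceptual step is to notice that the laziness convention forces a finger's position to persist between consecutive uses, so the second time any finger is re-used within a phase it must move from a lower-indexed block to a higher-indexed one; the pigeonhole principle on $k$ accesses versus $k-1$ fingers then guarantees such a re-use exists. The rest is book-keeping.
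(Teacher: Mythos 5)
Your proof is correct, and it takes a different (and somewhat slicker) route than the paper's. You argue by pigeonhole that among the $k$ accesses of phase $p$ (which visit blocks $\bset_1,\dots,\bset_k$ in strictly increasing order) some finger $j$ must serve twice, and then laziness (Lemma~\ref{prop:k-server}) pins $j$ at its first service point $s_{t_a}\in\bset_a$ until its second service at $t_b$, forcing a move from $\bset_a$ to $\bset_b$ with $a<b$ and hence $f_{t_b-1}(j)<f_{t_b}(j)$. The paper instead scans the phase and maintains a prefix-counting invariant: if no left-to-right cross-block service has occurred by the time block $\bset_{i}$ is accessed, then at least $i$ fingers sit in $\bset_{\le i}$ afterwards, so by block $\bset_k$ all $k-1$ fingers are trapped in $\bset_{\le k-1}$ and the last access of the phase must be the desired one. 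Both arguments rest on the same three ingredients (only $k-1$ fingers, increasing block order within a phase, lazy movement), but yours isolates them more cleanly in a single pigeonhole step, whereas the paper's scan is slightly more constructive in that it exhibits the event at the first access whose server comes from a lower prefix; for the purposes of Lemma~\ref{lem:key-lemma}, where only existence of one such time per phase is needed, the two are interchangeable. One cosmetic point: as in the paper's statement, the block index of the access at time $t=pk$ is $k$ rather than $t \bmod k = 0$; your indexing $t_b=(p-1)k+b$ with $s_{t_b}\in\bset_b$ handles this correctly.
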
 

\begin{proof} 
	Consider the accesses in blocks $\bset_1$, \ldots, $\bset_k$ in order. After the access in $\bset_1$, we have a finger in $\bset_1$ and hence at most $k - 2$ fingers in blocks $\bset_2$, \ldots, $\bset_k$. If the access to $\bset_2$ is served by a finger being in block $\bset_1$ before the acces, we are done. Otherwise, it is server by a finger being in blocks $\bset_{\ge 2}$ before the access. Then we have two fingers in blocks $\bset_{\le 2}$ after the access and at most $k-3$ fingers in blocks $\bset_{\ge 3}$. Continuing in this way, we will find the desired access.
\end{proof} 

For each phase $p \in [\ell]$, let $t_p$ denote the time for which such a finger moves across the blocks from left to right; if they move more than once, we choose $t_p$ arbitrarily.  
Let $J = \{t_p\}_{p=1}^{\ell}$.   
For each finger $j \in [k-1]$, each block $i \in [k]$ and block $i' \in [k]: i < i'$, 
let $J(j,i,i')$ be the set containing the time $t$ for which finger $f(j)$ is moved from block $\bset_i$ to block $\bset_{i'}$ to access $s_t$. Let $c(j,i,i') = |J(j,i,i')|$. 
Notice that $\sum_{j,i,i'} c(j,i,i') = \frac{n}{k} = \ell$, due to the lemma. 
Let $P(j,i,i')$ denote the phases $p$ for which $t_p \in J(j,i,i')$.

\begin{lemma} 
$\sum_{j, i, i': c(j,i,i') \geq 16} c(j,i,i') \geq n/2k$ if $n = \Omega(k^4)$.  
\end{lemma}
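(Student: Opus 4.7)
The claim is essentially a pigeonhole/averaging statement: the total ``mass'' $\sum_{j,i,i'} c(j,i,i') = \ell = n/k$ is spread over a relatively small collection of triples, so if we discard the triples with small counts we still retain a constant fraction of the mass. My plan is to bound the number of triples $(j,i,i')$ by a function of $k$ alone, observe that the discarded triples contribute only $O(k^3)$, and then use the hypothesis $n = \Omega(k^4)$ to absorb this into a constant fraction of $n/k$.

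First I would count the admissible triples: $j$ ranges over $[k-1]$ and the pair $(i,i')$ satisfies $1 \le i < i' \le k$, so there are at most $(k-1)\binom{k}{2} \le k^3/2$ such triples. Next, for every triple with $c(j,i,i') < 16$ we have $c(j,i,i') \le 15$, so
\[
\sum_{j,i,i' : c(j,i,i') < 16} c(j,i,i') \;\le\; 15 \cdot \frac{k^3}{2} \;\le\; 8 k^3.
\]
Using $\sum_{j,i,i'} c(j,i,i') = n/k$ from the previous lemma, subtracting gives
\[
\sum_{j,i,i' : c(j,i,i') \ge 16} c(j,i,i') \;\ge\; \frac{n}{k} - 8k^3.
\]
Finally, I would invoke the hypothesis $n = \Omega(k^4)$, say $n \ge 16 k^4$, which yields $8 k^3 \le n/(2k)$. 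Substituting gives $\sum_{c \ge 16} c(j,i,i') \ge n/(2k)$, as required.

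I do not expect any serious obstacle here; the argument is a direct two-line counting computation once the trivial bound on the number of triples is in hand. The only thing worth double-checking is the precise constant hidden in ``$n = \Omega(k^4)$,'' which is determined by the threshold $16$ chosen in the lemma statement: one wants $8k^3 \le n/(2k)$, i.e.\ $n \ge 16 k^4$, so the implicit constant in $\Omega(\cdot)$ can be taken to be $16$.
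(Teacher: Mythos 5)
Your proof is correct and follows exactly the paper's argument: bound the number of triples by $O(k^3)$, note the small-count triples contribute $O(k^3)$ total, and subtract from $\sum_{j,i,i'} c(j,i,i') = n/k$ using $n = \Omega(k^4)$. The only (harmless) difference is your slightly sharper constants ($15 \cdot k^3/2$ versus the paper's cruder $16k^3$).
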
 
\begin{proof}
There are only at most $k^3$ triples $(j,i,i')$, so the terms for which $c(j,i,i') < 16$ contribute to the sum at most $16k^3$. 
This means that the sum of the remaining is at least $n/k - 16k^3 \geq n/2k$ if $n$ satisfies $n= \Omega(k^4)$.  
\end{proof}

From now on, we consider the sets $J'$ and $J'(j,i,i')$ that only concern those $c(j,i,i')$ with $c(j,i,i') \geq 16$ instead.  

\begin{lemma}
\label{lem:key-lemma}
There is a constant $\eta >0$ such that the total access cost during the phases $P(j,i,i')$ is at least $\eta c(j,i,i') \log c(j,i,i')$.  
\end{lemma}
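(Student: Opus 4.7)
Proof plan. For each $p \in P(j,i,i')$, I will define $b_p = s_{t_p}$, the $p$-th element of $\bset_{i'}$ (so $b_p = \ell(i'-1)+p$), and let $a_p \in \bset_i$ denote the position of finger $j$ immediately before time $t_p$. Since the strategy is lazy, $a_p$ equals the key of the last access served by finger $j$. The total access cost during the phases in $P(j,i,i')$ is at least $\sum_p d_T(a_p,b_p)$, since these $c=c(j,i,i')$ terms already account for the moves of finger $j$ at times $t_p$. I first verify that $(a_p)_{p\in P(j,i,i')}$ is strictly increasing: the accesses in $\bset_i$ occur in phases $1,2,\ldots$ at the keys $\ell(i-1)+1,\ell(i-1)+2,\ldots$ in that order, and between two consecutive times $t_{p_1}<t_{p_2}$ in $J(j,i,i')$ finger $j$ must serve some $\bset_i$-access in order to be positioned at $a_{p_2}\in\bset_i$ by time $t_{p_2}-1$; a short index chase based on the encoding $t_p=(p-1)k+i'$ shows that this intermediate $\bset_i$-access is at a key strictly larger than $a_{p_1}$.

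Writing $A=\{a_p\}$ and $B=\{b_p\}$, we have $A\subset\bset_i$, $B\subset\bset_{i'}$, $|A|=|B|=c$, and $\max A<\min B$, with $a_p\leftrightarrow b_p$ the induced monotone pairing. The core task is to prove $\sum_p d_T(a_p,b_p)=\Omega(c\log c)$ for $c\geq 16$. The basic tool is the elementary sum-of-depths inequality: any $c$ distinguished nodes in a binary tree have sum of distances to the root at least $c\log c-O(c)$, which is $\Omega(c\log c)$ once $c$ is bounded away from $1$. Let $u=\mathrm{LCA}(A\cup B)$, $v_A=\mathrm{LCA}(A)$, and $v_B=\mathrm{LCA}(B)$ in $T$. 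The easy case is the \emph{clean} one, where $u\notin\{v_A,v_B\}$, so that $u$ is a strict common ancestor and, by the BST property with $A<B$, $A\subseteq T_L(u)$ and $B\subseteq T_R(u)$: then $d_T(a_p,b_p)=d_T(a_p,u)+d_T(u,b_p)$ for every $p$, and summing while applying sum-of-depths on each side gives $\sum_p d_T(a_p,b_p)\geq 2\cdot\Omega(c\log c)$.

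The main obstacle is the \emph{non-clean} case $u\in\{v_A,v_B\}$, in which one of $v_A,v_B$ is a descendant of the other. By symmetry, assume $u=v_A$, so $v_B$ descends from $v_A$ and $B\subseteq T_R(v_A)$. I will partition $A=A_L\cup A_M\cup A_R$ according to whether an element lies in $T_L(v_A)$, equals $v_A$, or lies in $T_R(v_A)$; the fact that $v_A=\mathrm{LCA}(A)$ forces $A_L\cup A_M\neq\emptyset$, so $c':=|A_L|+|A_M|\geq 1$. By the monotone pairing, $A_L\cup A_M$ pairs with the smallest $c'$ elements of $B$ and $A_R$ with the remaining $c-c'$ elements. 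For the first $c'$ pairs the clean decomposition $d_T(a_p,b_p)=d_T(a_p,v_A)+d_T(v_A,b_p)$ still holds through $v_A$, and applying sum-of-depths within the two subtrees of $v_A$ contributes at least $\Omega(c'\log c')$. For the remaining $c-c'$ pairs, which lie entirely inside $T_R(v_A)$, I will proceed by induction on $c$, applying the same statement recursively on $(T_R(v_A),A_R,\{b_{c'+1},\ldots,b_c\})$. The convexity inequality $c'\log c'+(c-c')\log(c-c')\geq c\log c-O(c)$ then combines the two contributions, and absorbing the additive $O(c)$ error (which is why the hypothesis $c\geq 16$ is imposed) yields the desired $\Omega(c\log c)$ bound.
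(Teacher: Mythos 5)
Your reduction to a purely combinatorial statement (distinct sources $A\subset\bset_i$, distinct targets $B\subset\bset_{i'}$, $\max A<\min B$, lower-bound $\sum_p d_T(a_p,b_p)$) is sound, and your clean case is fine; but the non-clean case is where the proof genuinely breaks. Your inductive accounting credits the $c'$ peeled pairs only with $\Omega(c'\log c')$ (sum of depths of the $c'$ smallest elements of $B$ below $v_A$) and then recurses on the remaining $c-c'$ pairs; the convexity step $c'\log c'+(c-c')\log(c-c')\ge c\log c-O(c)$ does not make this close, because the deficit of that inequality is of order $c'\log(c/c')$, not $O(c')$, so the additive losses from small peels accumulate across recursion levels. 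Concretely, take $T$ to be a right spine carrying the keys of $\bset_i$ in increasing order, with a balanced BST on $\bset_{i'}$ hanging below the last spine node, and the monotone pairing $a_p\leftrightarrow b_p$. At every level of your recursion the instance is non-clean with $u=v_A$ equal to the current spine node and $c'=1$ (its left subtree is empty), so your stated accounting certifies only $O(1)$ per level, i.e.\ $O(c)$ in total, even though the true cost of this instance is $\Theta(c^2)$ and the lemma demands a certificate of $\Omega(c\log c)$. The missing ingredient is precisely an argument that cheap small peels cannot recur: either the peeled pair is expensive (as in the spine example, where $d(v_A,\min B)$ is large but your bound discards it, keeping only $c'\log c'=0$), or the next level becomes clean/large — no such dichotomy is stated or proved, and without it the induction hypothesis cannot be propagated. (Two smaller points: the monotonicity of $(a_p)$ can fail for the very first pair, since the initial finger placement is an arbitrary key of $\bset_i$ rather than a previously served access; and ``absorbing the additive $O(c)$ error because $c\ge 16$'' needs the hidden constants to be tracked, since $c\log c - Kc=\Omega(c\log c)$ only once $c\ge 2^{\Theta(K)}$ — the latter is harmless here because each access costs at least $1$, but it should be said.)

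For comparison, the paper's proof avoids recursion and does not need monotonicity of the pairing at all: it sorts the source positions $a_{t_1}<\dots<a_{t_C}$, fixes the single vertex $R$ defined as the LCA of the keys spanned by the upper half of the sources, and shows that for each $r\le\lfloor C/2\rfloor$ the two accesses $s_{t_r}$ and $s_{t_{C-r}}$ together pay at least $\min\{d_T(R,s_{t_r}),d_T(R,s_{t_{C-r}})\}$ — either the target lies in $R$'s subtree and the path from the far-left source must enter it through an ancestor of $R$, or the target lies outside and the far-right source (which is inside $R$'s subtree) must exit through $R$. Summing these distances over the $\lfloor C/2\rfloor$ distinct chosen targets and using the fact that at most $3^d$ vertices lie within distance $d$ of $R$ gives $\Omega(C\log C)$ in one shot. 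If you want to salvage a decomposition-style proof, you would need a lower bound for the peeled pairs of the form $\Omega(c'\log c)$ (or an explicit trade-off between peel cost and peel size), which is essentially what the fixed-vertex argument delivers for free.
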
 

Once we have this lemma, everything is done. 
Since the function $g(x) = x \log x$ is convex, we apply Jensen's inequality to obtain: 
\[\frac{1}{|J'|} \sum_{j,i,i'} \eta c(j,i,i')\log c(j,i,i') \geq \eta \cdot \frac{n}{2k|J'|} \cdot \log (n/2k|J'|). \]
Note that the left side is the term ${\mathbb E}[g(x)]$, while the right side is $g({\mathbb E}(x))$. Therefore, the total access cost is at least $\frac{\eta n}{8k} \log (n/2k)$.  
We now prove the lemma. 

\begin{proof}[Proof of Lemma~\ref{lem:key-lemma}] 
We recall that, in the phases $P(j,i,i')$, the finger-$j$ moves from block $\bset_i$ to $\bset_{i'}$ to serve the request at corresponding time.  
For simplicity of notation, we use $\tilde J$ and $C$ to denote $J(j,i,i')$ and $c(j,i,i')$ respectively. 
Also, we use $\tilde f$ to denote the finger-$j$.  
For each $t \in \tilde J$, let $a_t \in \bset_i$ be the key for which the finger $\tilde f$ moves from $a_t$ to $s_t$ when accessing $s_t \in \bset_{i'}$. 
Let $\tilde J = \{t_1,\ldots, t_{C}\}$ such that $a_{t_1} < a_{t_2} <  \ldots < a_{t_{C}}$. 
Let $R$ be the lowest common ancestor in $T$ of keys in $[a_{t_{\lfloor C/2 \rfloor } +1}, a_{t_{C}}]$.

\begin{lemma} 
\label{lem:lastclaim}
For each $r \in \{1,\ldots, \lfloor C/2 \rfloor \}$, the access cost of $s_{t_r}$  and $s_{t_{C -r}}$ is together at least 
$\min \{d_T({R}, s_{t_r}), d_T(R, s_{t_{C -r}})\}$. 
\end{lemma}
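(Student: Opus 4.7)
}

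The plan is a case analysis on whether the unique tree-paths $P_1 := a_{t_r} \leadsto s_{t_r}$ and $P_2 := a_{t_{C-r}} \leadsto s_{t_{C-r}}$ in $T$ traverse the node $R$.

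I first collect structural facts. Since $R$ is the lowest common ancestor in $T$ of the keys lying in $[L,U]$, where $L = a_{t_{\lfloor C/2\rfloor +1}}$ and $U = a_{t_C}$, its key $r_R$ satisfies $L \le r_R \le U$ and the subtree of $T$ rooted at $R$ contains all keys in $[L,U]$; in particular $a_{t_{C-r}}$ (being one of these keys) is a descendant of $R$. On the other hand, by the sorted ordering of the $a$'s, $a_{t_r} < L \le r_R$, and since $r_R \in \bset_i$ while $s_{t_r},s_{t_{C-r}} \in \bset_{i'}$ with $i < i'$, both access targets satisfy $s_{t_r}, s_{t_{C-r}} > r_R$.

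Case 1. If $P_2$ passes through $R$, then because $a_{t_{C-r}}$ lies in $R$'s subtree, the $T$-path decomposes as $a_{t_{C-r}} \to R \to s_{t_{C-r}}$, so $d_T(a_{t_{C-r}}, s_{t_{C-r}}) \ge d_T(R, s_{t_{C-r}}) \ge \min\{d_T(R, s_{t_r}), d_T(R, s_{t_{C-r}})\}$, and the access cost at $t_{C-r}$ alone suffices.

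Case 2. If $P_1$ passes through $R$, a symmetric argument gives $d_T(a_{t_r}, s_{t_r}) \ge d_T(R, s_{t_r}) \ge \min\{\cdots\}$.

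The heart of the proof is to show that one of Cases~1,~2 always applies. I would argue: if $a_{t_{C-r}} \le r_R$ then $a_{t_{C-r}}$ and $s_{t_{C-r}}$ sit on opposite sides of $r_R$ inside $R$'s subtree, so $P_2$ must cross $R$ (Case~1). Otherwise $a_{t_{C-r}} > r_R$, placing both endpoints of $P_2$ in $R$'s right subtree; I then turn to $P_1$. Since $a_{t_r} < r_R < s_{t_r}$, whenever at least one of $\{a_{t_r}, s_{t_r}\}$ lies in $R$'s subtree, the two endpoints are forced into distinct ``components'' of $T \setminus \{R\}$ (either opposite subtrees of $R$, or one inside and one outside), so $P_1$ crosses $R$ (Case~2).

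The residual scenario is that $a_{t_{C-r}} > r_R$ and both $a_{t_r}, s_{t_r}$ lie outside $R$'s subtree. In this scenario, $L_1 := \mathrm{LCA}_T(a_{t_r}, s_{t_r})$ is a proper ancestor of $R$ (its subtree contains $a_{t_r} < r_R < s_{t_r}$ hence all of $R$'s subtree). I would use the BST identities $d_T(a_{t_r}, s_{t_r}) = d_T(a_{t_r}, L_1) + d_T(L_1, s_{t_r})$ and $d_T(R, s_{t_r}) = d_T(R, L_1) + d_T(L_1, s_{t_r})$, together with the fact that $a_{t_{C-r}}$ is a descendant of $R$ (so $d_T(a_{t_{C-r}}, s_{t_{C-r}})$ combined with $d_T(a_{t_{C-r}}, R)$ via triangle inequality yields a slack matching $d_T(R, L_1)$), to conclude that the sum of access costs dominates $d_T(R, s_{t_r})$.

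The main obstacle is precisely this final scenario: the individual access costs can be smaller than each of $d_T(R, s_{t_r})$ and $d_T(R, s_{t_{C-r}})$, so the bound on the sum must be extracted by combining the two paths with the structural position of $R$ as the LCA of a large range. The key insight is that $a_{t_{C-r}}$ being deep under $R$ and $a_{t_r}, s_{t_r}$ routing through an ancestor $L_1$ of $R$ provide complementary slack, and the triangle inequality in the tree metric, together with the LCA property of $R$, give just enough to establish the claim.
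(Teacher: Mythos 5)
Your Cases~1 and~2 are fine and coincide with the paper's argument: whenever the tree path of one of the two finger moves passes through $R$, that single access already costs at least $d_T(R,s_{t_r})$ or $d_T(R,s_{t_{C-r}})$, hence at least the minimum. The genuine gap is the residual case, and your sketch for it does not work. First, the ``BST identity'' $d_T(R,s_{t_r}) = d_T(R,L_1)+d_T(L_1,s_{t_r})$ is false in general: since $s_{t_r}$ lies outside the subtree rooted at $R$, the path from $R$ to $s_{t_r}$ turns at the lowest common ancestor of $R$ and $s_{t_r}$, which may be a proper descendant of $L_1$; only the triangle inequality $d_T(R,s_{t_r}) \le d_T(R,L_1)+d_T(L_1,s_{t_r})$ holds, and that is the unhelpful direction. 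Second, and more seriously, the ``complementary slack'' you invoke need not exist: with only the facts you are using, the residual configuration allows $a_{t_r}$ to be adjacent to $s_{t_r}$ high in the tree (both outside $R$'s subtree) while $a_{t_{C-r}}$ is adjacent to $s_{t_{C-r}}$ deep inside $R$'s right subtree, so both access costs are $O(1)$ although $d_T(R,L_1)$, $d_T(R,s_{t_r})$ and $d_T(R,s_{t_{C-r}})$ are all arbitrarily large; no bookkeeping of the two access costs via triangle inequalities of the kind you describe can then dominate $\min\{d_T(R,s_{t_r}),d_T(R,s_{t_{C-r}})\}$. So the case you yourself call the heart of the proof is exactly the case your plan leaves open. (A smaller slip: from $a_{t_{C-r}} > r_R$ alone you cannot place $s_{t_{C-r}}$ in $R$'s right subtree; that only follows once you additionally assume $P_2$ avoids $R$, which is how the residual case should be delimited.)

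For comparison, the paper never combines the two accesses. Its split is: if $s_{t_r}$ lies in the subtree rooted at $R$, then the LCA $u_r$ of $a_{t_r}$ and $s_{t_r}$ is an ancestor of $R$ (its subtree contains every key in $[a_{t_r},s_{t_r}]\supseteq[a_{t_{\lfloor C/2\rfloor+1}},a_{t_C}]$), so the first access alone costs at least $d_T(R,s_{t_r})$; otherwise the paper asserts that $s_{t_{C-r}}$ also lies outside the subtree rooted at $R$, so the second access runs from $a_{t_{C-r}}$ (inside) to $s_{t_{C-r}}$ (outside), must cross $R$, and alone costs at least $d_T(R,s_{t_{C-r}})$. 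In other words, your residual scenario is precisely the configuration the paper claims does not arise; a complete proof must either justify that exclusion (the paper itself dismisses it in a single clause, and this is the delicate point of its argument) or modify the choice of $R$ or of the pairing so that it is enforced. Your proposal does neither, so as it stands it does not establish the lemma.
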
 

\begin{proof} 
 
Let $u_r$ be the lowest common ancestor between $a_{t_r}$ and $s_{t_r}$. 
Then the cost of accessing $s_{t_r}$ is at least $d_T(u_r,s_{t_r})$. If $s_{t_r}$ is in the subtree rooted at $R$, then $u_r$ must be an ancestor of $R$ (because $a_{t_r} < a_{t_{\lfloor C/2 \rfloor }} < a_{t_{C}} < s_{t_r}$) and hence $d_T(u_r, s_{t_r}) \geq d_T(R,s_{t_r})$. Thus the cost it at least $d_T(R,s_{t_r})$. 
 Otherwise, we know that $s_{t_r}$ is outside of the subtree rooted at $R$, and so is $s_{t_{C- r}}$. On the other hand, $a_{t_{C-r}}$ is in such subtree, so moving the finger from $a_{t_{C -r}}$ to $s_{t_{C - r}}$ must touch ${R}$, therefore costing at least $d_T({R}, s_{t_{C-r}})$. 
\end{proof} 

Lemma~\ref{lem:lastclaim} implies that, for each $r=1,\ldots, \lfloor C/2 \rfloor $, we pay the distance between some element $v_r \in \set{s_{t_r}, s_{t_{C-r}}}$ to $R$. 
The total such costs would be $\sum_{r} d_T(R, v_r)$. 
Applying the fact that (i) $v_r$'s are different and (ii) there are at most $3^d$ vertices at distance $d$ from a vertex $R$, we conclude that this sum is at least $\sum_r d_T(R, v_r) \geq \Omega(C \log C)$.   
\end{proof}

\section{Working set and $k$-finger bounds are incomparable}
We show the following theorem. 

\label{sec:WS-LF}

\begin{theorem} \begin{enumerate}[(1)]
\item There exists a sequence $S$ such that $\WS(S) = o(\F^{k}(S))$, and
\item There exists a sequence $S^{'}$ such that $\F^{k}(S^{'}) = o(\WS(S^{'}))$.
\end{enumerate}
\end{theorem}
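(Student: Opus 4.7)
\emph{Direction (2).} I would take $S' = (1,2,\ldots,n)$, the simple sequential access of length $n$. Every access is to a fresh key, hence $\rho_t(x_t)=t-1$ and $\WS(S') = \sum_{t=1}^n \log(t-1) = \Theta(n\log n)$. For the $k$-finger cost, fix any balanced BST $T$ on $[n]$ and use a single finger to perform an in-order walk of $T$: each edge of $T$ is traversed a constant number of times, so $\F^k(S') \le \F^1_T(S') = O(n)$. Hence $\F^k(S') = O(n) = o(n\log n) = o(\WS(S'))$, giving a separation of factor $\Theta(\log n)$.

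\emph{Direction (1).} I would exhibit a length-$n$ sequence $S$ with near-linear working set but super-linear $k$-finger cost, via a ``padded hard permutation'' template. Fix a permutation $\pi$ of some $n/B$-size subset of $[n]$ whose BST complexity satisfies $\opt(\pi) = \Omega((n/B)\log(n/B))$ (for instance, the bit-reversal permutation, via Wilber's interleave lower bound), and form $S$ by repeating each $\pi(i)$ a total of $B$ times before moving on to $\pi(i+1)$. Within each block, every access after the first has $\rho_t = 0$ and contributes $O(1)$ to $\WS$; the $n/B$ first-access contributions sum to $\Theta((n/B)\log(n/B))$, so $\WS(S) = \Theta(n + (n/B)\log(n/B))$, which is $\Theta(n)$ whenever $B \ge \log n$. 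Applying Theorem~\ref{thm: LF simulation} in the form $\opt(\cdot) \le O(\log k)\cdot \F^k(\cdot)$, the $k$-finger cost inherits the interleave cost: $\F^k(S) \ge \Omega((n/B)\log(n/B)/\log k)$.

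\emph{Main obstacle.} In the direct analysis above, taking $B = \Theta(\log n)$ makes both $\WS(S)$ and $\F^k(S)$ land at $\Theta(n)$ (modulo a $\log k$ factor), yielding only a constant-factor separation rather than the desired $o(\cdot)$. To close this gap, I would sharpen the $k$-finger lower bound for this specific family beyond the reduction of Theorem~\ref{thm: LF simulation}, showing that the $B$-fold repetition structure does not help any static BST serve the underlying bit-reversal cheaply and thereby removing the $\log k$ slack. An alternative route is to lift the hierarchy sequence of Theorem~\ref{thm:hierarchyx} (restricted to a suitably chosen $k$) and pre-compose it with a temporal reshuffling that preserves the $k$-finger cost while reducing the working set by an unbounded factor; the delicate point there is arguing that the reshuffle leaves the static-BST lower bound intact while collapsing the $\log n$ term in $\WS$ coming from first accesses.
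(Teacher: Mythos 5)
Your direction (2) is correct and is essentially the paper's own argument (sequential access has working set $\Theta(n\log n)$ but one-finger cost $O(n)$; the paper repeats the pass to get length $m$, which changes nothing essential).

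Direction (1), however, has a genuine gap, and it is worse than the ``$\log k$ slack'' you identify: the padded-permutation template cannot produce a super-constant separation at all. For any permutation $\pi$ on $N$ distinct keys, the working set cost of a single pass is $\Theta(N\log N)$ (each first access contributes $\log t$ under the ``all keys accessed at time zero'' convention), so your padded sequence $S$ of length $n=NB$ always has $\WS(S)=\Theta(n+N\log N)$. On the other hand, one finger in a balanced static tree serves $S$ with cost $O(n+N\log N)$ (repeats are free movement, transitions cost $O(\log N)$ each), hence $\F^{k}(S)\le \F^{1}(S)=O(\WS(S))$ for every choice of $\pi$ and $B$. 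So even a perfectly tight lower bound on $\F^k(S)$ --- with the $\log k$ factor removed --- can only recover a constant-factor relation, never $\WS(S)=o(\F^k(S))$; your first proposed fix is therefore a dead end, and the second (``temporal reshuffling'' of the hierarchy sequence) is left entirely unexecuted, with the crucial claim --- that the reshuffle keeps $\F^k$ large while collapsing $\WS$ --- being exactly the part that needs proof.

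The missing idea, which the paper uses, is to make the working set small by \emph{repetition of a small set} rather than repetition of single elements, and to make that set expensive for $k$ fingers by choosing it larger than $k$ and spread out in \emph{every} static tree. Concretely, the paper builds $S$ from $Y=2n$ phases; in each phase it picks $2k$ elements of $[n]$ uniformly at random and cycles through them $X/2k$ times, so $\WS(S)=O(m\log k)$. For any fixed BST $T$, with probability $1-O(k^2 3^d/n)$ the $2k$ chosen elements are pairwise at distance more than $d$ in $T$, in which case $k$ fingers leave at least $k$ of them outside all radius-$d$ balls in every round, forcing cost $\Omega(dX)$ per phase; a union bound over all $4^n$ reference trees with $d=\Theta(\log(n/k^2))$ then gives a sequence with $\F^k(S)=\Omega(m\log(n/k))$ while $\WS(S)=O(m\log k)$, for $k=O(n^{1/2-\epsilon})$. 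Note also that the quantifier matters: the lower bound must hold simultaneously for all reference trees $T$ (hence the union bound), a requirement your sketch does not engage with since it tries to route the lower bound through $\opt$ via Theorem~\ref{thm: LF simulation}.
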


The sequence $S^{'}$ above is straightforward: For $k=1$, just consider the sequential access $1,\dots,n$ repeated $m/n$ times. For $m$ large enough, the working set bound is $\Omega(m \log n)$. However, if we start with the finger on the root of the tree which is just a path, then the lazy finger bound is $O(m)$. The $k$-finger bound is always less than lazy finger bound, so this sequence works for the second part of the theorem.

The existence of the sequence $S$ is slightly more involved (the special case for $k=1$ was proved in~\cite{BoseDIL14}), and is guaranteed by the following theorem, the proof of which comprises the remainder of this section.

\begin{theorem}\label{wslessklf}
For all $k = O(n^{1/2 - \epsilon})$, there exists a sequence $S$ of length $m$ such that $\WS(S) = O(m \log k)$ whereas $F^{k}(S) = \Omega(m \log (n/k))$. 
\end{theorem}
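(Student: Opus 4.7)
The plan is to construct a random sequence $S$ whose working-set cost is small because its recently-accessed keys stay in a small ``hot set,'' while its $k$-finger cost is forced to be large because the hot set is more spread out in any static reference tree than $k$ fingers can cover. Concretely, I would partition the $m$ accesses into $P = m/L$ phases of length $L$, and for each phase $i$, independently sample a uniformly random subset $A_i \subseteq [n]$ with $|A_i| = 2k$; the accesses inside phase $i$ are i.i.d.\ uniform samples from $A_i$. The parameter $L$ is taken of order $\Theta(k \log m / \log k)$, and $m$ polynomial in $n$ of sufficient degree so that the bounds concentrate.

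For $\WS(S)$, within any phase at most $2k$ distinct keys appear, so any access $x_t$ whose last occurrence lies in the same phase has $\rho_t(x_t) \le 2k$, contributing $O(\log k)$. The only accesses with potentially larger $\rho_t$ are the first occurrences of each element inside its phase; there are at most $2k$ of these per phase, each contributing at most $O(\log m)$. Summing gives $O(m \log k) + O(Pk \log m) = O(m \log k)$ for the chosen $L$, hence $\mathbb{E}[\WS(S)] = O(m \log k)$.

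For the $F^k(S)$ lower bound, fix any reference BST $T$ and any (possibly adaptive) $k$-server strategy on $T$. The central ingredient is a pigeonhole lemma: for any placement of $k$ fingers in $T$, the expected distance from a uniformly random node $y \in [n]$ to the nearest finger is $\Omega(\log(n/k))$. This follows by partitioning $T$ into the $k$ Voronoi cells of the fingers (of sizes $v_1, \ldots, v_k$ with $\sum_i v_i = n$), noting that since each node in a BST has at most three neighbors the average distance inside a cell of $v$ nodes is $\Omega(\log v)$, and invoking convexity of $v \mapsto v \log v$ to obtain $\sum_i v_i \log v_i \ge n \log(n/k)$. Next, at any step $t$ in phase $i$, let $S_{<t} \subseteq A_i$ denote the portion of $A_i$ revealed by prior in-phase accesses; the Bayesian posterior on $A_i \setminus S_{<t}$ is uniform over $(2k - |S_{<t}|)$-subsets of $[n] \setminus S_{<t}$, so the conditional distribution of $x_t$ places mass $(2k - |S_{<t}|)/(2k)$ uniformly on $[n] \setminus S_{<t}$. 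When $|S_{<t}| \le k$ this mass is at least $1/2$, and combined with the pigeonhole lemma yields expected cost $\Omega(\log(n/k))$. For the remaining ``steady-state'' accesses (when $|S_{<t}| > k$), the strategy may know $A_i$ essentially fully but has only $k$ fingers for $2k$ items of $A_i$; a random access misses an uncovered element of $A_i$ with probability $\ge 1/2$, and the average pairwise tree-distance between random items in any BST on $n$ nodes is $\Omega(\log n)$ (a special case of the pigeonhole lemma with $k = 1$). Using $k = O(n^{1/2 - \epsilon})$ yields $\log n = \Theta(\log(n/k))$, so each access costs $\Omega(\log(n/k))$ in expectation, giving $\mathbb{E}[F^k(S)] = \Omega(m \log(n/k))$.

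The main obstacle will be handling the adaptive $k$-server strategy, whose finger placements can correlate with the revealed portion of $A_i$. The split into an ``exploration'' window ($|S_{<t}| \le k$, where the pigeonhole lemma dominates because the conditional distribution of $x_t$ is still nearly uniform on $[n]$) and a ``steady-state'' window ($|S_{<t}| > k$, where the spread of random $2k$-subsets in $T$ dominates) is designed precisely to neutralize this correlation. Finally, combining the two expectation bounds with a concentration argument over independent phases and a union bound over the $\le 4^n$ BSTs $T$ produces, for $m$ sufficiently large, a single deterministic realization of $S$ for which $\WS(S) = O(m \log k)$ and $F^k(S) = \Omega(m \log(n/k))$ hold simultaneously, completing the proof.
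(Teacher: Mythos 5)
Your overall architecture (random $2k$-element hot sets per phase, cheap working set, final union bound over the at most $4^n$ reference trees) is the same as the paper's, and your working-set estimate is fine; but the $k$-finger lower bound has a genuine gap exactly where the theorem's weight lies. First, a bookkeeping point: with i.i.d.\ draws from $A_i$, your ``exploration'' window $|S_{<t}|\le k$ lasts only $\Theta(k)$ accesses out of a phase of length $L=\Theta(k\log m/\log k)$, so the Voronoi/pigeonhole lemma applied there yields only $O\bigl(m\,\tfrac{\log k}{\log m}\log(n/k)\bigr)$ in total --- a vanishing fraction of the claimed bound. Hence essentially all of $\Omega(m\log(n/k))$ must come from the steady state, and there your justification does not hold up. You argue that a random access hits an ``uncovered'' element of $A_i$ with probability at least $1/2$ and then invoke the fact that the \emph{average} pairwise tree-distance between random items is $\Omega(\log n)$. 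An average-distance statement does not constrain an adversary that places its $k$ fingers \emph{after} seeing $A_i$: if the $2k$ sampled elements happen to fall into $k$ pairs that are close in $T$, then $k$ fingers cover all of $A_i$ and every steady-state access costs $O(1)$; nothing in your argument excludes this event. What is needed --- and what the paper proves --- is a high-probability \emph{pairwise separation} statement: for a fixed $T$, the probability that some pair of the $2k$ random elements lies within distance $d$ is at most $O(k^2 3^d/n)$ (a ball-volume bound), so with $d=\log_3\bigl(n/(ck^2)\bigr)$, which is $\Omega(\log(n/k))$ precisely because $k=O(n^{1/2-\epsilon})$, the elements are pairwise $d$-separated with constant probability; then each finger can be close to at most one element of $A_i$, at least $k$ of the $2k$ elements are at distance $\Omega(d)$ from \emph{every} finger, and each steady-state access pays $\Omega(d)$ with constant probability, deterministically in the strategy, so adaptivity is neutralized without any per-access Bayesian argument.

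Second, your closing step ``combine the expectation bounds with concentration over phases and a union bound over $4^n$ trees'' needs phase-level indicator events rather than per-access expectations: to survive the $4^n$ union bound the failure probability for a fixed tree must be exponentially small in $n$, which the paper obtains by calling a phase bad (some pair within distance $d$) with probability $O(k^2 3^d/n)$ independently across phases, taking $Y=\Theta(n)$ phases, and bounding the probability that half the phases are bad by $\bigl(O(k^2 3^d/n)\bigr)^{Y/2}$. Without the separation lemma and this good/bad-phase structure, the adaptive finger placement and the union bound cannot be handled, so as written the proof of the $F^{k}$ lower bound does not go through; adding the separation lemma (and shifting the burden of the bound from the exploration window to the separated steady state) essentially turns your argument into the paper's.
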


We construct a random sequence $S$ and show that while $\WS(S) = O(m \log k)$ with probability one, the probability that there exists a tree $\tset$ such that $\F^{k}_{T}(S) \leq c m\log_{3} (n/k)$ is less than $1/2$ for some constant $c<1$. This implies the existence of a sequence $S$ such that for all trees $\tset$, $\F^{k}_{T}(S) = \Omega(m \log (n/k))$.

The sequence is as follows. We have $Y$ phases. In each phase we select $2k$ elements $R_{i} = \{r^{i}_{j}\}_{j=1}^{2k}$ uniformly at random from $[n]$. We order them arbitrarily in a sequence $S_{i}$, and access $[S_{i}]^{X/2k}$ (access $S_{i}$ $X/2k$ times). The final sequence $S$ is a concatenation of the sequences $[S_{i}]^{X/2k}$ for $1 \leq i \leq Y$. Each phase has $X$ accesses, for a total of $m= XY$ accesses overall. We will choose $X$ and $Y$ appropriately later.

\subparagraph*{\bf Working set bound.} One easily observes that $\WS(S) =O(Y(2k \log n + (X-2k) \log (2k)))$, because after the first $2k$ accesses in a phase, the working set is always of size $2k$. We choose $X$ such that the second term dominates the first, say $X \geq 5 k \frac{\log n}{\log 2k}$. We then have that the working set bound is $O(XY \log k) = O(m \log k)$, with probability one.

\subparagraph*{\bf $k$-finger bound.} Fix a BST $\tset$. We classify the selection of the set $R_{i}$ as being $d$-good for $\tset$ if there exists a pair $r^{i}_{j}, r^{i}_{\ell} \in R_{i}$ such that their distance in $\tset$ is less than $d$. The following lemma bounds the probability of a random selection being $d$-good for $\tset$. 

\begin{lemma} Let $\tset$ be any BST. 
The probability that $R_{i}$ is $d$-good for $\tset$ is at most $8k^{2}3^{d}/n$.
\end{lemma}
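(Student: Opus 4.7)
The plan is to apply a union bound over all pairs of elements in $R_i$, relying on the standard fact that balls in a BST, viewed as an undirected graph, have bounded size.

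First, I would observe that in any BST $\tset$ on $[n]$, each vertex has degree at most $3$ (parent plus two children), so for any fixed vertex $v \in \tset$, the number of vertices $u$ with $d_{\tset}(u,v) < d$ is at most $1 + 3 + 3 \cdot 2 + \cdots + 3 \cdot 2^{d-2} \leq 3^d$. This is the same ``ball counting'' fact already invoked in the proof of Lemma~\ref{lem:lastclaim}, so I would cite it rather than re-derive it.

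Next, fix a pair of indices $(j,\ell)$ with $j \neq \ell$. Since $r^{i}_{j}$ and $r^{i}_{\ell}$ are drawn independently and uniformly from $[n]$ (or uniformly without replacement — either way the bound is preserved up to lower-order factors), conditioning on the value of $r^{i}_{j}$, the number of ``bad'' choices of $r^{i}_{\ell}$ (i.e.\ those within distance $<d$) is at most $3^d$ by the ball bound. Hence $\Pr[d_{\tset}(r^{i}_{j}, r^{i}_{\ell}) < d] \leq 3^d / n$.

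Finally, a union bound over the $\binom{2k}{2} \leq 2k^2$ unordered pairs in $R_i$ gives that the probability some pair is within distance $< d$ is at most $2k^2 \cdot 3^d / n$, which is well within the claimed bound $8 k^2 \cdot 3^d /n$ (the extra slack is harmless and in fact absorbs, e.g., the case of sampling with replacement, where one also wants to account for pairs that happen to coincide). There is no real technical obstacle here: the argument is a routine union bound, and the only non-trivial ingredient is the degree-$3$ ball estimate in the underlying BST, which is immediate from the structure of binary search trees.
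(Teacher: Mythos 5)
Your proof is correct, but it takes a different (and more elementary) route than the paper. You bound the probability of the event ``some pair of $R_i$ is within distance $<d$'' directly by a union bound over the at most $2k^2$ pairs, using the degree-$3$ ball estimate to get $\Pr[d_{\tset}(r^i_j,r^i_\ell)<d]\le 3^d/n$ per pair (with a factor-$2$ slack if the sampling is without replacement). The paper instead bounds the complementary event: it lower-bounds the probability that $R_i$ is \emph{not} $d$-good by the product $\prod_{i=1}^{2k-1}\bigl(1-i3^d/n\bigr)\ge\bigl(1-2k3^d/n\bigr)^{2k}$, obtained by revealing the $2k$ elements one at a time and requiring each new element to avoid the radius-$d$ balls around the previous ones, and then converts this into the stated upper bound via the inequalities $\ln(1-x)>-2x$ and $e^x>1+x$, after first disposing of the case $8k^23^d/n\ge 1$. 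Both arguments hinge on exactly the same ingredient, the $3^d$ bound on ball volumes in a degree-$3$ tree, and yield the same order of constant; your version buys simplicity (no sequential conditioning, no logarithm/exponential estimates, no need for the preliminary assumption that $8k^23^d/n<1$), while the paper's multiplicative bound on the complement is the form one would want if a sharper, nearly tight estimate of the non-collision probability were needed, which it is not here. So your argument is a valid, self-contained replacement for the paper's proof.
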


\begin{proof} We may assume $8 k^2 3^d/n < 1$ as the claim is void otherwise. We compute the probability that a selection $R_{i}$ is not $d$-good first. This happens if and only if the balls of radius $d$ around every element $r^{i}_{j}$ are disjoint. The volume of such a ball is at most $3^d$, so we can bound this probability as 

\begin{eqnarray}
P[R_{i}\text{ is not d-good for }\tset] &=& \displaystyle\Pi_{i=1}^{2k-1} \left(1-\frac{i3^d}{n}\right) \notag \\
								&\geq& \left( 1-\frac{2k3^d}{n}\right)^{2k} \notag \\
\Rightarrow P[R_{i}\text{ is d-good for }\tset] &\leq& 1- \left( 1-\frac{2k3^d}{n}\right)^{2k} \notag \\
									&=& 1- \exp\left(2k \ln \left( 1-\frac{2k3^d}{n}\right)\right) \notag \\
									&\leq& 1 - \exp\left(-8k^{2}3^{d}/n\right) \notag \\
									&\leq& 8k^{2}3^{d}/n, \notag
\end{eqnarray}
where the last two inequalities follow from $\ln (1-x) > -2x$ for $x \le 1/2$ (note that $8 k^2 3^d/n < 1$ implies 
$2k 3^d/n \le 1/2$) and $e^x > 1+x$, respectively.
\end{proof}

Observe that if $R_{i}$ is not $d$-good, then the $k$-finger bound of the access sequence $[S_{i}]^{X/2k}$ is $\Omega(d(X-k))=\Omega(dX)$. This is because in every occurrence of $S_{i}$, there will be some $k$ elements out of the $2k$ total that will be outside the $d$-radius balls centered at the current $k$ fingers.

We call the entire sequence $S$ $d$-good for $\tset$ if at least half of the sets $R_{i}$ are $d$-good for $\tset$. Thus if $S$ is not $d$-good, then $\F^{k}_{\tset}(S) = \Omega(XYd)$.

\begin{lemma}
$P[S\text{ is d-good for }\tset] \leq \left(\frac{32k^{2}3^{d}}{n} \right)^{Y/2}$. 
\end{lemma}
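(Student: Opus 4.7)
The plan is to view $d$-goodness of $S$ as a Bernoulli-type event over the $Y$ phases, exploit the independence of the per-phase samples $R_1,\dots,R_Y$, and conclude via a simple union bound. Concretely, for a fixed BST $\tset$, let $E_i$ be the event that $R_i$ is $d$-good for $\tset$. By the preceding lemma, $\Pr[E_i] \leq p := 8k^2 3^d/n$, and the sets $R_i$ are chosen independently across phases, so the events $E_1,\dots,E_Y$ are mutually independent.

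The event ``$S$ is $d$-good for $\tset$'' is exactly $\sum_{i=1}^{Y} \mathbf{1}[E_i] \geq Y/2$. I would bound its probability by union-bounding over the subsets of phases that could be the $d$-good ones. For any fixed $\mathcal{I}\subseteq [Y]$ with $|\mathcal{I}|\geq Y/2$, independence gives
\[
\Pr\bigl[\,\text{all } E_i \text{ with } i\in\mathcal{I} \text{ occur}\,\bigr] \;=\; \prod_{i\in\mathcal{I}} \Pr[E_i] \;\leq\; p^{|\mathcal{I}|} \;\leq\; p^{Y/2},
\]
where in the last inequality I use that we may assume $p<1$ (otherwise the target bound is vacuous). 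Summing over all such $\mathcal{I}$, and using that the total number of subsets of $[Y]$ is $2^Y$, yields
\[
\Pr[\,S \text{ is } d\text{-good for } \tset\,] \;\leq\; 2^{Y} \cdot p^{Y/2} \;=\; (4p)^{Y/2} \;=\; \left(\frac{32\,k^2\,3^d}{n}\right)^{Y/2}.
\]

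There is no real obstacle here: the only subtlety is the factor of $2^Y$ coming from the union bound, which is absorbed into the constant $32$ via $4p = 32k^2 3^d/n$. The whole argument is a textbook ``Chernoff-via-union-bound'' style estimate, and the independence of the $R_i$'s (which follows directly from the construction of $S$) is what makes it go through cleanly. This bound is then meant to be combined with a further union bound over all BSTs $\tset$ on $[n]$ (of which there are at most $4^n$), so $Y$ will eventually be chosen large enough that $\bigl(32 k^2 3^d/n\bigr)^{Y/2}\cdot 4^n < 1/2$ for the target value of $d = \Theta(\log(n/k))$, completing the proof of Theorem~\ref{wslessklf}.
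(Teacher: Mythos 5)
Your proposal is correct and matches the paper's argument: the paper likewise union-bounds over which $Y/2$ phases are $d$-good, multiplies the independent per-phase probabilities to get $\binom{Y}{Y/2}\bigl(8k^2 3^d/n\bigr)^{Y/2}$, and absorbs the binomial coefficient via $\binom{Y}{Y/2}\le 4^{Y/2}$, exactly as your $2^Y$ factor does. The only cosmetic difference is that you sum over all subsets of size at least $Y/2$ rather than exactly $Y/2$, which yields the same $4^{Y/2}$ factor.
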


\begin{proof}
By the previous lemma and by definition of goodness of $S$, we have that

\begin{align*}
P[S\text{ is d-good for }\tset] \leq {Y \choose Y/2}\left(\frac{8k^{2}3^{d}}{n} \right)^{Y/2} 
			 \leq 4^{Y/2}\left(\frac{8k^{2}3^{d}}{n} \right)^{Y/2} 
			= \left(\frac{32k^{2}3^{d}}{n} \right)^{Y/2}.
\end{align*}
\end{proof}

\noindent The theorem now follows easily. Taking a union bound over all BSTs on $[n]$, we have 
\[ P[S\text{ is d-good for some BST }\tset] \leq 4^{n}\left(\frac{32k^{2}3^{d}}{n} \right)^{Y/2}.
\]
Now set $Y=2n$. We have that 
\[ P[\exists \text{ a BST }\tset:\F^{k}_{\tset}(S) \leq md/4] \leq 4^{n}\left(\frac{32k^{2}3^{d}}{n} \right)^{n}.
\]
Putting $d = \log_{3} \frac{n}{256k^{2}}$ gives that for some constant $c<1$,
\[ 
 P[\exists \text{ a BST }\tset:\F^{k}_{\tset}(S) \leq c(m \log (n/k))] \leq 4^{n}\left(\frac{32k^{2}3^{d}}{n} \right)^{n} = 1/2 \]
which implies that with probability at least $1/2$ one of the sequences in our random construction will have $k$-finger bound that is $\Omega(m \log (n/k))$. The working set bound is always $O(m \log k)$. This establishes the theorem.



\bibliography{article}


\end{document}